\documentclass{article}
\usepackage{latexsym}
\usepackage{amssymb}
\usepackage{amsmath,amsthm}
\usepackage[numbers,sort&compress]{natbib}
\usepackage{color}
\usepackage{amsfonts,amssymb}
\usepackage{mathrsfs}
\usepackage{dsfont}
\usepackage{graphicx}
\usepackage{subfigure}
\usepackage{float}
\usepackage{ulem}
\usepackage{color}
\usepackage{appendix}
\newcommand {\dif}{\mathrm{d}}

\newtheorem{prop}{Proposition}
\theoremstyle{definition}
\newtheorem{theorem}{Theorem}[section]
\newtheorem{lemma}{Lemma}[section]
\newtheorem{remark}{Remark}[section]
\textwidth  = 6.5truein \textheight = 9.2truein \hoffset =
-2.5truecm \voffset = -2.0truecm

\begin{document}
	\title{Integrable discretizations for a generalized sine-Gordon	equation and the reductions to the sine-Gordon equation and the short pulse equation}
	\author{Han-Han Sheng$^\dag$, Bao-Feng Feng$^\ddag$ and Guo-Fu Yu$^\dag$\footnote{Corresponding author. Email address: gfyu@sjtu.edu.cn}\\
		$^\dag$ School of Mathematical Sciences, CMA-Shanghai, Shanghai Jiao Tong University, \\
		Shanghai 200240, P.R.\ China \\
		$^\ddag$ School of Mathematical and Statistical Sciences,\\
		The University of Texas Rio Grande Valley, Edinburg, Texas 78541, USA}
	\date{}
	\maketitle

	\begin{abstract}
	{In this paper, we propose fully discrete analogues of a  generalized sine-Gordon (gsG) equation	$u_{t x}=\left(1+\nu \partial_x^2\right) \sin u$. The bilinear equations of the discrete KP hierarchy and the proper definition of discrete hodograph transformations are the keys to the construction. Then we derive semi-discrete analogues of the gsG equation from the fully discrete gsG equation by taking the temporal parameter $b\rightarrow0$. 	Especially,  one full-discrete gsG equation is reduced to a semi-discrete gsG equation in the case of $\nu=-1$ (Feng {\it et al. Numer. Algorithms} 2023). Furthermore, $N$-soliton solutions to the semi- and fully discrete analogues of the gsG equation in the determinant form are constructed. Dynamics of one- and two-soliton solutions for the discrete gsG equations are discussed with plots. We also investigate the reductions to the sine-Gordon (sG) equation and the short pulse (SP) equation. By introducing an important parameter $c$, we demonstrate that the gsG equation reduces to the sG equation and the SP equation, and the discrete gsG equation reduces to the discrete sG equation and the discrete SP equation, respectively, in the appropriate scaling limit. The limiting forms of the $N$-soliton solutions to the gsG equation also correspond to those of the sG equation and the SP equation.}
	\end{abstract}
{\bf Keywords}: generalized sine-Gordon equation, short pulse equation, integrable discretization,
Hirota's bilinear method
	\section{Introduction}
	In this paper, we are concerned with the integrable discretizations of a generalized sine-Gordon (gsG) equation
	\begin{align}
	u_{t x}=\left(1+\nu \partial_x^2\right) \sin u,\label{gsg}
	\end{align}
	where $u = u(x, t)$ is a scalar-valued function, $\nu$ is a real parameter which can be normalized into $\nu=\pm 1$, $\partial_x^2$ and the
	subscripts $t$ and $x$ appended to $u$ denote partial differentiation. The gsG equation \eqref{gsg} was first derived by Fokas in 1995 using bi-Hamiltonian methods \cite{gsg}. For $\nu=-1$, the integrability was established by the Lax pair, and the initial value problem for decaying initial data was solved by the inverse scattering method \cite{gsg-fl}. Eigenfunctions of the Lax pair and traveling-wave solutions were obtained through the Riemann-Hilbert formalism \cite{gsg-fl}. A variety of solutions, including kinks, loop solitons, and breathers, were recognized  from the general soliton solution in parametric form constructed by Hirota's  bilinear method  \cite{gsg-1}. The gsG equation \eqref{gsg} with $\nu=1$ was solved by Hirota's bilinear method \cite{gsg1}. It should be commented here that the structure of the solutions to equation \eqref{gsg} with $\nu=1$ is significantly different from that with $\nu=-1$, as it does not admit multi-valued solutions like loop solitons \cite{gsg-1,gsg1}.	
	Quite recently, we constructed several semi-discrete analogues of the gsG equation with $\nu=-1$ and presented the determinant formulae of $N$-soliton solutions both for the gsG equation and the semi-discrete gsg equation in \cite{gsg-F}.

	The study of discrete integrable systems, which is connected to many other disciplines such as quantum field theory, numerical algorithms, random matrices, and orthogonal and biorthogonal polynomials, has recently received a lot of interest \cite{Disbook}. Compared to continuous integrable systems, there are far fewer examples of discrete integrable systems and analytical tools available for studying them. On the other hand, it is widely believed that discrete integrable systems are more fundamental and universal than their continuous counterparts. The authors have conducted a substantial amount of research in finding integrable discretizations of soliton equations, including the short pulse (SP) equation \cite{dis-sp,Feng1}, the (2+1)-dimensional Zakharov equation \cite{Yu}, the Camassa-Holm equation \cite{Feng2,d-CH3}, the Degasperis-Proceli equation \cite{dDP}, and the modified Camassa-Holm equation \cite{dmCH} via Hirota's bilinear method. Building upon the compatibility between an integrable system and its B{\" a}cklund transformation, a systematic procedure was proposed for obtaining discrete versions of integrable PDEs using Hirota's bilinear method \cite{backlund}.
	
	It was demonstrated that the gsG equation with $\nu=\pm1$ reduces to the SP equation in the short wave limit and to the sine-Gordon (sG) equation in the long wave limit \cite{gsg-1,gsg1}. Thus, the gsG equation is an interesting soliton equation that lies between the SP equation and the sG equation. Since the semi-discrete and fully discrete sG equation has been well known \cite{sm-sg,fd-sg1,fd-sg2,fd-sg3}, and the integrable discretization of the SP equation was recently proposed by one of the authors in \cite{dis-sp}, it would be quite an interesting problem to  construct the integrable discretizations of the gsG equation  \eqref{gsg}, which is indeed the main motivation of our present paper.
	
	Another challenging problem is looking for the reductions from the gsG equation to the sG equation and the SP equation, both in the continuous case and in the discrete case. Although the reductions from the gsG equation to the sG equation and the SP equation were proposed in \cite{gsg-1,gsg1}, the  reductions in the discrete case differ substantially from those in the continuous case. The problem lies in the fact that we not only apply scaling transformations to the original variables in the equation, i.e., $u$, $x$, and $t$, but also transform the new variables after hodograph transformation and parameters in the $\tau$-function such as $y$, $\tau$, and $p$ in \cite{gsg-1,gsg1}. It is difficult to find the correspondence in the discrete case. To obtain the reductions, we introduce an important parameter $c$, which can be viewed as the coefficient of the B{\"a}clund transformation between two sets of bilinear equations of the two-dimensional Toda-lattice (2DTL) equation.
	
	In this paper, we construct integrable fully-discrete analogues of the gsG equation \eqref{gsg} with $\nu=\pm1$ from two sets of discrete bilinear 2DTL equations, and the corresponding determinant solutions are obtained. In addition, the semi-discrete gsG equation with $\nu=\pm1$ is constructed from the fully discrete gsG, of which the semi-discrete gsG equation with $\nu=-1$ agrees with our previous result in \cite{gsg-F}.
		Moreover, the connections of the discrete gsG equation to the discrete sG equation and the discrete SP equation are clarified by appropriate but different scaling limits.
	
	The remainder of the paper is organized as follows. In Section \ref{sec2}, we review the bilinear equations
	and determinant solutions of the gsG equation with $\nu = 1$, which can be reduced from the bilinear equations of 2DTL and their  B{\"a}cklund transformation. We demonstrate that equation \eqref{gsg} reduces to the sG equation and the SP equation with the scaling transformation on the original variables and the corresponding limits of $c$. The limiting forms of the $N$-soliton solution also correspond to the known
	{solutions of the sG and SP equations}. In Section \ref{sec3}, starting with two sets of bilinear discrete 2DTL equations, we derive a fully integrable discrete analog of the gsG equation \eqref{gsg} with $\nu=\pm1$ and present its $N$-soliton solutions. Two conserved quantities of the fully discrete gsG equation are obtained. In Section \ref{sec4}, we propose the semi-discrete gsG equation from two approaches: the fully discrete gsG equation and the semi-discrete 2DTL equation, respectively. Reductions to the corresponding discrete analogues of the sG equation and the SP equation are also investigated. In Section \ref{fig}, we present soliton solutions to the semi- and fully discrete gsG equation and investigate their properties, focusing mainly on one- and two-soliton solutions. Section \ref{sec6} is devoted to a brief summary and discussion. Some detailed proofs are given in Appendices.
	
	\section{From the 2DTL equation and its B{\"a}cklund transformation to the gsG equation with $\nu=1$}\label{sec2}
	In this section, we show that the bilinear equations and the multisoliton solution to the gsG equation \eqref{gsg} with $\nu=1$ given by Matsuno \cite{gsg1} can be generated from two sets of bilinear equations of the 2DTL equation and its B{\"a}cklund transformation between them and its determinant solution through a series of reductions and transformations, including the hodograph transformation and dependent variable transformation.
	\subsection{A brief review of the gsG equation with $\nu=1$}
	Firstly, we give a brief review of the results in \cite{gsg1} about the bilinear form of the gsG equation \eqref{gsg} with $\nu=1$
	\begin{align}
	u_{t x}=\left(1+\partial_x^2\right) \sin u.\label{gsg1}
	\end{align}
	Through the new dependent variable {\it r} in accordance with the relation
	\begin{align}
	r^2=1-u_x^2,\qquad (0<r<1),\label{r}
	\end{align}
	the gsG equation can be rewritten as
	\begin{align}
	r_t-(r\cos u)_x=0,\label{con-law}
	\end{align}
	which is exactly a conservation law of \eqref{gsg1}. Then we define the hodograph transformation $(x,t)\rightarrow(y,\tau)$ by
	\begin{align}
	\dif y = \lambda r \dif x + \lambda r \cos u \dif t, \qquad \dif \tau =\lambda^{-1} \dif  t,\label{hodo}
	\end{align}
	where $\lambda\in\mathbb{R}$ is a constant. The derivatives for $x$ and $t$ are then rewritten in terms of $y$ and $\tau$ as
	\begin{align}
	\frac{\partial}{\partial x}=\lambda r \frac{\partial}{\partial y}, \quad \frac{\partial}{\partial t}=\lambda^{-1}\frac{\partial}{\partial \tau}+\lambda r \cos u \frac{\partial}{\partial y} .
	\end{align}
	With the new variables $y$ and $\tau$, \eqref{r} and \eqref{con-law} are recast into the form
	\begin{align}
	& r^2=1-\lambda^2r^2 u_y^2,\label{r1} \\
	& \lambda^{-1}\left(\frac{1}{r}\right)_\tau+\lambda(\cos u)_y=0,\label{con-law1}
	\end{align}
	respectively. Further reduction is possible if one defines the variable $\varphi$ by
	\begin{align}
	\lambda u_y=\sinh \varphi, \quad \varphi=\varphi(y, \tau).\label{def-phi}
	\end{align}
	It follows from \eqref{r1} and \eqref{def-phi} that
	\begin{align}
	\frac{1}{r}=\cosh \varphi .\label{r-phi}
	\end{align}
	Substituting \eqref{def-phi} and \eqref{r-phi} into equation \eqref{con-law1}, we find
	\begin{align}
	\varphi_\tau=\lambda\sin u .\label{u-phi}
	\end{align}
	Let $\sigma$ and $\sigma^{\prime}$ be solutions of sG equation
	\begin{align}
	& \sigma_{\tau y}=\sin \sigma, \quad \sigma=\sigma(\tau, y),\label{sg-1} \\
	& \sigma_{\tau y}^{\prime}=\sin \sigma^{\prime}, \quad \sigma^{\prime}=\sigma^{\prime}(\tau, y) .\label{sg-2}
	\end{align}
	Then we put
	\begin{align}
	&u=\frac{1}{2}(\sigma+\sigma'),\\
	&\varphi=\frac{1}{2\mathrm{i}}(\sigma-\sigma').
	\end{align}
	In terms of $\sigma$ and $\sigma^{\prime}$, equations \eqref{def-phi} and \eqref{u-phi} can be written as
	\begin{align}
	&\frac{1}{2}(\sigma+\sigma')_y=\lambda^{-1}\sinh\left(\frac{1}{2\mathrm{i}}(\sigma-\sigma')\right)=-\mathrm{i}\lambda^{-1}\sin\frac{1}{2}(\sigma-\sigma'),\label{non-bt1}\\
	&\frac{1}{2}(\sigma-\sigma')_\tau=\mathrm{i}\lambda\sin\frac{1}{2}(\sigma+\sigma').\label{non-bt2}
	\end{align}
	Introducing the dependent variable transformation
	\begin{align}
	&\sigma=2\mathrm{i}\ln\frac{\bar{g}}{f},\\
	&\sigma'=2\mathrm{i}\ln\frac{\bar{f}}{g},
	\end{align}
	where $\bar{f}$ and $\bar{g}$ denote the complex conjugate of $f$ and $g$, respectively. From \eqref{sg-1} and \eqref{sg-2}, one can obtain
	\begin{align}
	&D_\tau D_yf\cdot f=\frac{1}{2}(f^2-\bar{g}^2),\\
	&D_\tau D_y g\cdot g=\frac{1}{2}(g^2-\bar{f}^2),
	\end{align}
	and their complex conjugates. And from \eqref{non-bt1}-\eqref{non-bt2}, we have
	\begin{align}
	&\mathrm{i}\left( \ln\frac{\bar{f}\bar{g}}{fg}\right)_y=\frac{1}{2\lambda}\left(\frac{g\bar{g}}{f\bar{f}}-\frac{f\bar{f}}{g\bar{g}}\right),\label{bt-new1}\\
	&\left( \ln\frac{g\bar{g}}{f\bar{f}}\right)_\tau=\frac{\mathrm{i}\lambda}{2}\left(\frac{\bar{f}\bar{g}}{fg}-\frac{fg}{\bar{f}\bar{g}}\right).\label{bt-new2}
	\end{align}
	By using
	\begin{align}
	\frac{D_\tau f\cdot g}{fg}=\left(\ln\frac{f}{g}\right)_\tau,
	\end{align}
	Eqs. \eqref{bt-new1} and \eqref{bt-new2} can be represented as
	\begin{align}
	&\mathrm{i}\lambda D_yf\cdot \bar{f}-\frac{1}{2}(f\bar{f}-g\bar{g})=0,\label{bl1}\\
	&\mathrm{i}\lambda D_yg\cdot \bar{g} +\frac{1}{2}(g\bar{g}-f\bar{f})=0,\label{bl2}\\
	&\mathrm{i}\lambda^{-1}D_\tau f\cdot g+\frac{1}{2}(fg-\bar{f}\bar{g})=0,\label{bl3}\\
	&\mathrm{i}\lambda^{-1}D_\tau \bar{g}\cdot \bar{f}+\frac{1}{2}(\bar{f}\bar{g}-fg)=0,\label{bl4}
	\end{align}
	here Eq. \eqref{bl4} is the complex conjugate of Eq. \eqref{bl3}. In addition, the definition of $u$ and $\varphi$ can be expressed as
	\begin{align}
	&u=\mathrm{i}\ln\frac{\bar{f}\bar{g}}{fg},\label{u}\\
	&\varphi=\ln\frac{g\bar{g}}{f\bar{f}}.\label{phi}
	\end{align}
	
	\subsection{From the 2DTL equation and its B{\"a}cklund transformation to the bilinear form of the gsG equation \eqref{gsg1}}
	We give two sets of bilinear equations of the 2DTL equation with $\tau$-functions $\tau_n$ and $\tau_n^{\prime}$ and a B{\"a}cklund transformation(BT) between them, respectively,
	\begin{align}
	& \left(\frac{1}{2} D_{x_{-1}} D_{x_1}-1\right) \tau_n \cdot \tau_n=-\tau_{n+1} \tau_{n-1},\label{2dtl1} \\
	& \left(\frac{1}{2} D_{x_{-1}} D_{x_1}-1\right) \tau_n^{\prime} \cdot \tau_n^{\prime}=-\tau_{n+1}^{\prime} \tau_{n-1}^{\prime},\label{2dtl2}\\
	& \left(c^{-1}D_{x_{-1}}-1\right) \tau_n \cdot \tau_n^{\prime}+\tau_{n+1} \tau_{n-1}^{\prime}=0, \label{bt1}\\
	& \left(cD_{x_{1}}-1\right) \tau_{n+1} \cdot \tau_n^{\prime}+\tau_n \tau_{n+1}^{\prime}=0.\label{bt2}
	\end{align}
	Here $c$ is a constant. As mentioned in \cite{ohta-rToda}, we can apply the BT recursively and denote the $\tau$-functions of the $m$-th 2DTL equation as $\tau_{n,m}$. Then we rewrite Eqs. \eqref{2dtl1}-\eqref{bt2} as
	\begin{align}
	&\left(\frac{1}{2} D_{x_{-1}} D_{x_1}-1\right) \tau_{n,m} \cdot \tau_{n,m}=-\tau_{n+1,m} \tau_{n-1,m},\label{2dtl}\\
	& \left(c^{-1}D_{x_{-1}}-1\right) \tau_{n,m} \cdot \tau_{n,m+1}+\tau_{n+1,m} \tau_{n-1,m+1}=0, \label{BT1}\\
	& \left(cD_{x_{1}}-1\right) \tau_{n+1,m} \cdot \tau_{n,m+1}+\tau_{n,m}\tau_{n+1,m+1}=0, \label{BT2}
	\end{align}
	where  we have the following correspondence: $\tau_n  \equiv \tau_{n,m}$, $\tau^{\prime}_n \equiv  \tau_{n,m+1}$. The above equations \eqref{2dtl}-\eqref{BT2} have exact solutions in Casorati determinant form with arbitrary parameters as follows. 
	\begin{lemma}\label{Casorati}
		The bilinear equations \eqref{2dtl}-\eqref{BT2} have the following Casorati-type determinant solution
		\begin{align}
		\tau_{n,m}(x_{-1},x_1)& =\left|\begin{array}{cccc}
		\phi_n^{(1)}(m) & \phi_{n+1}^{(1)}(m) & \cdots & \phi_{n+N-1}^{(1)}(m) \\
		\phi_n^{(2)}(m) & \phi_{n+1}^{(2)}(m) & \cdots & \phi_{n+N-1}^{(2)}(m) \\
		\cdots & \cdots & \cdots & \cdots \\
		\phi_n^{(N)}(m) & \phi_{n+1}^{(N)}(m) & \cdots & \phi_{n+N-1}^{(N)}(m)
		\end{array}\right|,
		\end{align}
		where
		\begin{align}
		& \phi_n^{(i)}(m)=c_ip_i^n\left(1-cp_i\right)^m e^{\xi_i}+d_iq_i^n\left(1-cq_i\right)^m e^{\eta_i},\\
		& \xi_i=p_i x_1+p_i^{-1} x_{-1}+\xi_{i 0}, \quad \eta_i=q_i x_1+q_i^{-1} x_{-1}+\eta_{i 0}.
		\end{align}
		Here, $c_i,d_i,p_i, q_i, \xi_{i0}$ and $\eta_{i0}$ are the arbitrary parameters that can take either real or complex values.
	\end{lemma}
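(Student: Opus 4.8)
\emph{Proof strategy.}
The plan is to prove Lemma~\ref{Casorati} by the Casoratian technique, reducing each of \eqref{2dtl}--\eqref{BT2} to a Pl{\"u}cker relation (Jacobi identity for determinants). The first step is to record the contiguity relations satisfied by the entries; differentiating the explicit formulas shows that, for every level $m$,
\begin{align*}
\partial_{x_1}\phi_n^{(i)}(m)=\phi_{n+1}^{(i)}(m),\qquad
\partial_{x_{-1}}\phi_n^{(i)}(m)=\phi_{n-1}^{(i)}(m),\qquad
\phi_n^{(i)}(m+1)=\phi_n^{(i)}(m)-c\,\phi_{n+1}^{(i)}(m).
\end{align*}
Hence the flows $\partial_{x_1}$ and $\partial_{x_{-1}}$ act on the column $\big(\phi_k^{(1)}(m),\dots,\phi_k^{(N)}(m)\big)^{\mathrm T}$ as the index shifts $k\mapsto k+1$ and $k\mapsto k-1$, while the $m$-shift replaces that column by itself minus $c$ times the next column.

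Writing $|\,k_1,\dots,k_N\,|_m$ for the $N\times N$ determinant whose $j$-th column is $\big(\phi_{k_j}^{(1)}(m),\dots,\phi_{k_j}^{(N)}(m)\big)^{\mathrm T}$, so that $\tau_{n,m}=|\,n,n+1,\dots,n+N-1\,|_m$, the next step is to use multilinearity together with the consecutive-index structure of the Casoratian to express every term occurring in \eqref{2dtl}--\eqref{BT2} through such determinants with shifted or deleted column labels. For instance
\begin{align*}
\partial_{x_1}\tau_{n,m}&=|\,n,\dots,n+N-2,n+N\,|_m,\qquad
\partial_{x_{-1}}\tau_{n,m}=|\,n-1,n+1,\dots,n+N-1\,|_m,\\
\partial_{x_1}\partial_{x_{-1}}\tau_{n,m}&=\tau_{n,m}+|\,n-1,n+1,\dots,n+N-2,n+N\,|_m,
\end{align*}
whereas the $m$-shift gives the telescoping expansion
\begin{align*}
\tau_{n,m+1}=\sum_{k=0}^{N}(-c)^{N-k}\,\big|\,n,\dots,\widehat{n+k},\dots,n+N\,\big|_m,
\end{align*}
a Casoratian on the labels $n,\dots,n+N$ with $n+k$ omitted, and similarly for $\tau_{n\pm1,m}$ and for the level-$(m+1)$ determinants.

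Substituting these expressions, each bilinear equation turns into an algebraic identity among maximal minors of an $N\times(N+2)$ array, which is exactly a Pl{\"u}cker relation. For \eqref{2dtl}, after the $\tau_{n,m}^2$ terms cancel, this is the classical identity
\begin{align*}
&|\,n-1,n+1,\dots,n+N-2,n+N\,|\cdot|\,n,\dots,n+N-1\,|-|\,n,\dots,n+N-2,n+N\,|\cdot|\,n-1,n+1,\dots,n+N-1\,|\\
&\qquad+|\,n+1,\dots,n+N\,|\cdot|\,n-1,n,\dots,n+N-2\,|=0,
\end{align*}
i.e.\ the Pl{\"u}cker relation for the columns at positions $n-1,n,n+1,\dots,n+N$. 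The B{\"a}cklund equations \eqref{BT1} and \eqref{BT2} are handled in the same manner, now carrying along the mixed level-$m$/level-$(m+1)$ column structure; using the third contiguity relation to rewrite the level-$(m+1)$ columns in terms of level-$m$ ones, the expression again collapses to Pl{\"u}cker/Jacobi identities on a suitably enlarged array. Equivalently, and more cleanly, since \eqref{2dtl}--\eqref{BT2} is a known reduction of the two-dimensional Toda-lattice hierarchy together with its B{\"a}cklund transformation, for which the above $\phi_n^{(i)}(m)$ solve the corresponding linear problem with spectral parameters $p_i,q_i$ and the discrete variable $m$ entering through the factors $(1-cp_i)^m$ and $(1-cq_i)^m$, the Casoratian is automatically the associated $\tau$-function, as in \cite{ohta-rToda}.

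The obstacle I anticipate is organizational rather than conceptual: in \eqref{BT1} and \eqref{BT2} one must track the signs produced when the columns coming from the telescoping expansions of $\tau_{n,m\pm1}$ are reordered into increasing index order, and verify that the resulting quadratic combination of minors is a genuine Pl{\"u}cker relation rather than merely resembling one. The contiguity relations and the 2DTL equation \eqref{2dtl} itself are routine verifications.
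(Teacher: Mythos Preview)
Your proposal is correct. The Casoratian technique you outline --- record the contiguity relations, rewrite all derivatives and shifts of $\tau_{n,m}$ as determinants with shifted column labels, and identify each bilinear equation with a Pl{\"u}cker/Jacobi relation --- is the standard and rigorous route, and the contiguity relations you list are exactly right.

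The paper, however, does not actually prove Lemma~\ref{Casorati}: it is stated without proof, the result being treated as known from the literature (the framework of \cite{ohta-rToda} is invoked just before the lemma). Your closing remark, that one may simply cite the 2DTL/B{\"a}cklund theory and recognize the given $\phi_n^{(i)}(m)$ as solutions of the associated linear problem, is in fact all the paper does here. Where the paper \emph{does} give a proof is in the analogous fully discrete statement (Proposition~\ref{fd-2dtl-sol}), and there the method is different from yours: rather than direct Pl{\"u}cker verification, it starts from the discrete KP (Hirota--Miwa) hierarchy with its known Gram determinant solutions \cite{OHTI_JPSJ}, takes the limit $a_1\to\infty$ in one lattice direction, and reparametrizes to land on the required bilinear equations together with their solutions; the Casorati case is declared ``similar''. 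So your approach is more self-contained and explicit, while the paper's approach (when it gives one) leverages the embedding into the discrete KP hierarchy to inherit the determinant solutions for free. Both are valid; yours has the advantage of not requiring the reader to track a limiting procedure, at the cost of the bookkeeping you correctly anticipate for \eqref{BT1}--\eqref{BT2}.
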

	
	In addition to the Casorati determinant solution, the $\tau$-functions can also be expressed by the Gram-type determinant, which is given by the following lemma.
	\begin{lemma}\label{Gram}
		The following Gram-type determinants satisfy bilinear equations \eqref{2dtl}-\eqref{BT2}
		\begin{align}
		\tau_{n,m}(x_{-1},x_1)=\left|m_{ij}^{n,m}\right|_{N\times N}=\left|c_{ij}+\frac{1}{p_i+q_j}\left(-\frac{p_i}{q_j}\right)^n\left(
		\frac{1-cp_i}{1+cq_j}
		\right)^me^{\xi_i+\eta_j} \right|_{N\times N},
		\end{align}
		where
		\begin{align}
		\xi_i=p_i x_1+p_i^{-1} x_{-1}+\xi_{i 0}, \quad \eta_j=q_j x_1+q_j^{-1} x_{-1}+\eta_{j 0}
		\end{align}
		Here, $p_i, q_j, \xi_{i 0}$ and $\eta_{j 0}$ are the arbitrary parameters that can take either real or complex values and $N\in \mathbb{R}$.
	\end{lemma}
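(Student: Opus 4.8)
The plan is to run the standard Grammian (Gram-type determinant) verification: first record how the entry $m_{ij}^{n,m}$ responds to the flows $\partial_{x_1},\partial_{x_{-1}}$ and the lattice shifts $n\mapsto n\pm1$, $m\mapsto m\pm1$; then rewrite $\tau_{n,m}$ and every shifted or differentiated copy of it occurring in \eqref{2dtl}--\eqref{BT2} as a bordered determinant; and finally read off each bilinear equation from a Plücker relation (Jacobi's determinant identity). Concretely, I would introduce the auxiliary functions $P_i^{n,m}=p_i^{\,n}(1-cp_i)^m e^{\xi_i}$ and $Q_j^{n,m}=(-q_j)^{-n}(1+cq_j)^{-m}e^{\eta_j}$, so that the non-constant part of $m_{ij}^{n,m}$ is exactly $\frac{1}{p_i+q_j}P_i^{n,m}Q_j^{n,m}$. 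Using $\partial_{x_1}\xi_i=p_i$, $\partial_{x_{-1}}\xi_i=p_i^{-1}$ and the analogous formulas for $\eta_j$, a one-line computation gives $\partial_{x_1}m_{ij}^{n,m}=P_i^{n,m}Q_j^{n,m}$ and $\partial_{x_{-1}}m_{ij}^{n,m}=-P_i^{n-1,m}Q_j^{n+1,m}$, while telescoping the geometric factors in $n$ and in $m$ gives $m_{ij}^{n+1,m}-m_{ij}^{n,m}=P_i^{n,m}Q_j^{n+1,m}$ and $m_{ij}^{n,m+1}-m_{ij}^{n,m}=-c\,P_i^{n,m}Q_j^{n,m+1}$; one also has $\partial_{x_1}P_i^{n,m}=p_iP_i^{n,m}$, $\partial_{x_{-1}}P_i^{n,m}=p_i^{-1}P_i^{n,m}$ and similarly for $Q_j^{n,m}$. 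The point of these formulas is that every operation of interest modifies the Gram matrix by a rank-one update built from the single column vector $(P_i^{n,m})_i$ (and, on the row side, $(Q_j^{n,m})_j$), the constant $c$ entering only through the $m$-shift.

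Next, expanding $\det(m_{ij})$ multilinearly in its columns and using that a determinant with two identical columns vanishes, every $\tau$ that appears collapses to $\tau_{n,m}$ plus a single bordered $(N+1)\times(N+1)$ determinant: $\partial_{x_1}\tau_{n,m}$, $\partial_{x_{-1}}\tau_{n,m}$, $\tau_{n\pm1,m}$, $\tau_{n,m+1}$, $\tau_{n-1,m+1}$, $\tau_{n+1,m+1}$ are all expressible as the determinant of the $N\times N$ block $(m_{ij}^{n,m})$ bordered by one column chosen among $(P_i^{n,m}),(P_i^{n-1,m}),(P_i^{n,m+1}),\dots$, one row chosen among $(Q_j^{n,m}),(Q_j^{n+1,m}),(Q_j^{n,m+1}),\dots$, and a corner entry $0$ or $1$. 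Iterating once more produces a doubly-bordered $(N+2)\times(N+2)$ determinant whose Laplace expansion supplies the bilinear products $\tau_{n+1,m}\tau_{n-1,m}$, $(D_{x_{-1}}D_{x_1}\tau_{n,m})\cdot\tau_{n,m}$, $\tau_{n+1,m}\tau_{n-1,m+1}$, $\tau_{n,m}\tau_{n+1,m+1}$ and the mixed terms $D_{x_{-1}}\tau_{n,m}\cdot\tau_{n,m+1}$, $D_{x_1}\tau_{n+1,m}\cdot\tau_{n,m+1}$ occurring in \eqref{2dtl}--\eqref{BT2}.

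Each of \eqref{2dtl}, \eqref{BT1}, \eqref{BT2} then drops out of Jacobi's identity (the Plücker relation) applied to the appropriate doubly-bordered $(N+2)\times(N+2)$ determinant: one deletes a suitable pair of rows and a suitable pair of columns so that the three terms of the identity match the three bilinear terms. Equation \eqref{2dtl} is the classical Grammian solution of the 2DTL equation, so the genuinely new content is the pair \eqref{BT1}--\eqref{BT2}, for which one border shift must be taken in the $n$-direction and one in the $m$-direction, the factor $(1-cp_i)/(1+cq_j)$ being precisely what produces the coefficient $c$ in the B{\"a}cklund equations.

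I expect the main obstacle to be not any individual calculation but the bookkeeping: setting up the dictionary between a shift of $(n,m)$ and a choice of border vector and corner entry, and then tracking all signs and the placement of $c$, so that \eqref{2dtl}, \eqref{BT1} and \eqref{BT2} all emerge from essentially one Plücker relation. As consistency checks I would verify the cases $N=1,2$ by direct computation, and recover the Casorati solution of Lemma~\ref{Casorati} by specialising $c_{ij}\to c_i\delta_{ij}$ and carrying out a Vandermonde-type reduction.
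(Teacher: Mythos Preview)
Your direct-verification approach via rank-one updates, bordered determinants, and the Jacobi/Pl\"ucker identity is correct and is the standard way to check that a Gram determinant solves a bilinear hierarchy; the differential and shift relations you record for $m_{ij}^{n,m}$ are all right (I checked them), and once those are in place the three equations \eqref{2dtl}--\eqref{BT2} do drop out of three instances of the Jacobi identity on an $(N+2)\times(N+2)$ bordered matrix, exactly as you describe.

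The paper, however, does not give a proof of Lemma~\ref{Gram} at all: it is simply stated, with the verification left to the reader or to the literature. The closest the paper comes to an argument is in the proof of Proposition~\ref{fd-2dtl-sol}, the fully discrete analogue, where the method is different from yours: rather than verify directly, the authors embed the bilinear equations into the discrete KP hierarchy, invoke the known Gram solution of dKP from Ohta \textit{et al.}~\cite{OHTI_JPSJ}, and then recover \eqref{dis-2dtl2}--\eqref{dis-2dtl3} by a reparameterisation and a limit $a_1\to\infty$. Translated to the continuous setting, the paper's implicit proof of Lemma~\ref{Gram} would be to view \eqref{2dtl}--\eqref{BT2} as a reduction of the (modified) KP hierarchy and import the Gram solution from there. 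Your approach is more self-contained and makes the role of the parameter $c$ in the B\"acklund pair \eqref{BT1}--\eqref{BT2} transparent; the paper's route is shorter once one is willing to quote the KP/dKP Gram solution as a black box.
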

	
	Next we are ready to obtain $N$-soliton solution of the gsG equation \eqref{gsg1}. Firstly, we set $c=\lambda\mathrm{i}\  (\lambda\in \mathbb{R})$.
	$$$$
	{\large\bf Case 1: for real $p_i$,\,$q_i$}
	
	We impose restrictions on the parameters
	\begin{align}
	p_i=-q_i,\,d_{i}=\mathrm{i}\left({\frac{1-cp_i}{1-cq_i}}\right)^{\frac{1}{2}}c_i,\label{rest-cas}
	\end{align}
	for Casorati-type solution or
	\begin{align}
	p_i=q_i,\,c_{ij}=\mathrm{i}\left({\frac{1-cp_i}{1+cq_j}}\right)^{\frac{1}{2}}\delta_{ij},\label{rest-gram}
	\end{align}
	for Gram-type solution. In addition, we take variable transformations $y=2x_1,\,\tau=2x_{-1}$.
	As a result, for Casorati-type solution we have
	\begin{align*}
	\phi_{n+1}^{(i)}(0)&=c_ip_i^{n+1}e^{\xi_i}+d_iq_i^{n+1}e^{\eta_i}\\
	&=c_ip_i^{n+1}e^{\xi_i}\left[1+\mathrm{i}(-1)^{n+1}\left({\frac{1-cp_i}{1-cq_i}}\right)^{\frac{1}{2}}e^{\eta_i-\xi_i}\right]\\
	&=c_ip_i^{n+1}e^{\xi_i}\left[1-\mathrm{i}(-1)^{n}\left({\frac{1-\mathrm{i}\lambda p_i}{1+\mathrm{i}\lambda p_i}}\right)^{\frac{1}{2}}e^{-p_iy-\tau/p_i+\eta_{i0}-\xi_{i0}}\right],
	\end{align*}
	and
	\begin{align*}
	\phi_n^{(i)}(1)&=c_ip_i^{n}(1-cp_i)e^{\xi_i}+d_iq_i^{n}(1-cq_i)e^{\eta_i}\\
	&=c_ip_i^{n}(1-cp_i)e^\xi_i\left[1+\mathrm{i}(-1)^n\left(\frac{1+\mathrm{i}\lambda p_i}{1-\mathrm{i}\lambda p_i}\right)^{\frac{1}{2}}e^{-p_iy-\tau/p_i+\eta_{i0}-\xi_{i0}}\right].
	\end{align*}
	
	Thus we know
	\begin{align}
	\phi^{(i)}_{n+1}(0)\Bumpeq\bar{\phi}^{(i)}_n(1),\,\phi^{(i)}_{n+1}(1)\Bumpeq\bar{\phi}^{(i)}_n(0),
	\end{align}
	which imply the relations
	\begin{align}
	\tau_{n+1,0}\Bumpeq\bar{\tau}_{n,1},\qquad \tau_{n+1,1}\Bumpeq\bar{\tau}_{n,0}.\label{red-1}
	\end{align}
	Relations \eqref{red-1} means both $\tau_{n}(0)$ and $\tau_{n}(1)$ are 2-period sequences. Here $\Bumpeq$ means two $\tau$ functions are equivalent up to a constant multiple and $\bar{\tau}$ denotes complex conjugate of $\tau$. The same result appears for the relations between $\tau$ functions with Gram-type determinant form, which we omit here.
	In this case, the kink and anti-kink solutions are obtained.
	
	$$$$
	{\large\bf Case 2: for complex $p_i$,\,$q_i$}
	
	We impose restrictions on the parameters of $\tau$ functions
	\begin{align}
	p_i=-q_i,\,d_i=\mathrm{i}\left({\frac{1-cp_i}{1-cq_i}}\right)^{\frac{1}{2}}c_i,\,p_{2i-1}=\bar{p}_{2i},\,N=2M,\label{rest-cas2}
	\end{align}
	for Casorati-type solution or
	\begin{align}
	p_i=q_i,\,c_{ij}=\mathrm{i}\left({\frac{1-cp_i}{1+cq_j}}\right)^{\frac{1}{2}}\delta_{ij},\ p_{2i-1}=\bar{p}_{2i},\,N=2M,\label{rest-gram2}
	\end{align}
	for Gram-type solution. By taking $y=2x_1,\,\tau=2x_{-1}$,
	from Case 1, we know that
	\begin{align*}
	&\phi_{n+1}^{(i)}(0)=p_i^{n+1}e^{\xi_i}\left[1-\mathrm{i}(-1)^{n}\sqrt{\frac{1-\mathrm{i}\lambda p_i}{1+\mathrm{i}\lambda p_i}}e^{-p_iy-\tau/p_i+\eta_{i0}-\xi_{i0}}\right],\\
	&\phi_{n}^{(j)}(1)=p_j^{n}(1-cp_j)e^{\xi_j}\left[1+\mathrm{i}(-1)^n\sqrt{\frac{1+\mathrm{i}\lambda p_j}{1-\mathrm{i}\lambda p_j}}e^{-p_jy-\tau/p_j+\eta_{j0}-\xi_{j0}}\right].
	\end{align*}
	Thus we can obtain
	\begin{align}
	\phi_{n+1}^{(2i-1)}(0)\Bumpeq\bar{\phi}_n^{(2i)}(1),\,\phi_{n+1}^{(2i)}(0)\Bumpeq\bar{\phi}_{n}^{(2i-1)}(1),
	\end{align}
	and
	\begin{align}
	\phi_{n+1}^{(2i-1)}(1)\Bumpeq\bar{\phi}_n^{(2i)}(0),\,\phi_{n+1}^{(2i)}(1)\Bumpeq\bar{\phi}_{n}^{(2i-1)}(0),
	\end{align}
	which also correspond to the relation \eqref{red-1}. In this case, the breather solutions are obtained.
	\begin{remark}
		Reductions from \eqref{2dtl}-\eqref{BT2} to the bilinear form of the gsG equation with $\nu=1$ differ from the case with $\nu=-1$ we proposed in \cite{gsg-F}, in which $\tau_{n+1,0}\Bumpeq\bar{\tau}_{n,0},\  \tau_{n+1,1}\Bumpeq\bar{\tau}_{n,1}$. Actually, reductions we introduce here are relatively rare in other equations.
	\end{remark}
	Naturally, we can get kink-breather solutions by mixing Case 1 and Case 2. Moreover, by substitution
	\begin{align}
	\tau_{00}=f,\,\tau_{01}=g,\label{fg1}
	\end{align}
	equations \eqref{2dtl}-\eqref{BT2} can be recast into
	\begin{align}
	&D_\tau D_yf\cdot f=\frac{1}{2}(f^2-\bar{g}^2),\label{Bl1}\\
	&D_\tau D_y g\cdot g=\frac{1}{2}(g^2-\bar{f}^2),\label{Bl2}\\
	&\mathrm{i}\lambda D_yf\cdot \bar{f}-\frac{1}{2}(f\bar{f}-g\bar{g})=0,\label{Bl3}\\
	&\mathrm{i}\lambda D_yg\cdot \bar{g} +\frac{1}{2}(g\bar{g}-f\bar{f})=0,\label{Bl4}\\
	&\mathrm{i}\lambda^{-1}D_\tau f\cdot g+\frac{1}{2}(fg-\bar{f}\bar{g})=0,\label{Bl5}\\
	&\mathrm{i}\lambda^{-1}D_\tau \bar{g}\cdot \bar{f}+\frac{1}{2}(\bar{f}\bar{g}-fg)=0,\label{Bl6}
	\end{align}
	which are nothing but the bilinear equations of the gsG equation \eqref{gsg1}. From equation \eqref{Bl3}-\eqref{Bl6}, we have
	\begin{align}
	&\mathrm{i}\lambda\left(\ln \frac{f}{\bar{f}}\right)_y=\frac{1}{2}-\frac{1}{2}\frac{g\bar{g}}{f\bar{f}},\\
	&\mathrm{i}\lambda\left(\ln \frac{g}{\bar{g}}\right)_y=-\frac{1}{2}+\frac{1}{2}\frac{f\bar{f}}{g\bar{g}},\\
	&\mathrm{i}\lambda^{-1}\left(\ln \frac{f}{g}\right)_\tau=-\frac{1}{2}+\frac{1}{2}\frac{\bar{f}\bar{g}}{fg},\\
	&\mathrm{i}\lambda^{-1}\left(\ln \frac{\bar{f}}{\bar{g}}\right)_\tau=\frac{1}{2}-\frac{1}{2}\frac{fg}{\bar{f}\bar{g}},
	\end{align}
	that lead to
	\begin{align}
	&x_y=\frac{1}{\lambda r}=\lambda^{-1}\cosh\varphi=\frac{1}{2\lambda}\left(\frac{g\bar{g}}{f\bar{f}}+\frac{f\bar{f}}{g\bar{g}}\right)=\lambda^{-1}+\mathrm{i}\left(\ln\frac{\bar{f}g}{f\bar{g}}\right)_y,\\
	&x_\tau=-\lambda^2rx_y\cos u =-\lambda\cos u=-\frac{\lambda}{2}\left(\frac{\bar{f}\bar{g}}{fg}+\frac{fg}{\bar{f}\bar{g}}\right)=-\lambda+\mathrm{i}\left(\ln\frac{\bar{f}g}{f\bar{g}}\right)_\tau.
	\end{align}
	Thus we obtain the expression for $x$ by tau functions
	\begin{align}
	x=\lambda^{-1}y-\lambda\tau+\mathrm{i}\ln\frac{\bar{f}g}{f\bar{g}},
	\end{align}
	Summarizing the above results, the determinant ($N$-soliton) solution of the gsG equation is given by the following theorem.
	\begin{theorem}\label{det-sol}
		The parametric form for the soliton and breather solution of the gsG equation \eqref{gsg1} is
		\begin{align}
		u=\mathrm{i}\ln\frac{\bar{f}\bar{g}}{fg},
		\end{align}
		\begin{align}
		x=\lambda^{-1}y-\lambda\tau+\mathrm{i}\ln\frac{\bar{f}g}{f\bar{g}},\ t=\lambda\tau,\ \lambda\in\mathbb{R},
		\end{align}
		where $f$ and $g$ are the determinants given by \eqref{fg1} and $\tau_{n,m}$ can be written either as a Casorati-type determinant
		\begin{align}
		\tau_{n,m}& =\left|\begin{array}{cccc}
		\phi_n^{(1)}(m) & \phi_{n+1}^{(1)}(m) & \cdots & \phi_{n+N-1}^{(1)}(m) \\
		\phi_n^{(2)}(m) & \phi_{n+1}^{(2)}(m) & \cdots & \phi_{n+N-1}^{(2)}(m) \\
		\cdots & \cdots & \cdots & \cdots \\
		\phi_n^{(N)}(m) & \phi_{n+1}^{(N)}(m) & \cdots & \phi_{n+N-1}^{(N)}(m)
		\end{array}\right|,
		\end{align}
		where
		\begin{align}
		\phi_n^{(i)}(m)=p_i^n\left(1-\mathrm{i}\lambda p_i\right)^m e^{\frac{p_i}{2}y+\frac{1}{2p_i}\tau+\xi_{i0}}+\mathrm{i}\sqrt{\frac{1-\mathrm{i}\lambda p_i}{1+\mathrm{i}\lambda p_i}}(-p_i)^n\left(1+\mathrm{i}\lambda p_i\right)^me^{-\frac{p_i}{2}y-\frac{1}{2p_i}\tau+\eta_{i0}},\quad i=1,2,\cdots,N,
		\end{align}
		\begin{align}
		&p_{2i-1}=\bar{p}_{2i},\quad i=1,2,\cdots,M,\\
		&p_i\in \mathbb{R},\,i=2M+1,2M+2,\cdots,N,
		\end{align}
		or a Gram-type determinant
		\begin{align}
		\tau_{n,m}=\left|m_{ij}^{n,m}\right|_{N\times N}=\left|\mathrm{i}\left({\frac{1-\mathrm{i}\lambda p_i}{1+\mathrm{i}\lambda p_j}}\right)^{\frac{1}{2}}\delta_{ij}+\frac{1}{p_i+p_j}\left(-\frac{p_i}{p_j}\right)^n\left(
		\frac{1-\mathrm{i}\lambda p_i}{1+\mathrm{i}\lambda p_j}
		\right)^me^{\xi_i+\eta_j} \right|_{N\times N},
		\end{align}
		where
		\begin{align}
		&\xi_i=\frac{p_i}{2}y+\frac{1}{2p_i}\tau+\xi_{i0}, \quad \eta_j=\frac{p_j}{2}y+\frac{1}{2p_j}\tau+\eta_{j 0},\\
		&p_{2i-1}=\bar{p}_{2i},\quad i=1,2,\cdots,M,\\
		&p_i\in \mathbb{R},\,i=2M+1,2M+2,\cdots,N.
		\end{align}
	\end{theorem}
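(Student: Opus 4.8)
The plan is to verify Theorem \ref{det-sol} by assembling the chain of reductions already laid out in Section \ref{sec2}, checking that the parameter restrictions of Case 1 and Case 2 are simultaneously compatible with the bilinear equations \eqref{Bl1}--\eqref{Bl6} and with the complex-conjugation relations \eqref{red-1}. First I would invoke Lemma \ref{Casorati} (respectively Lemma \ref{Gram}) to guarantee that $\tau_{n,m}$ defined by the stated Casorati (resp. Gram) determinant solves the 2DTL system \eqref{2dtl}--\eqref{BT2} with $c=\mathrm{i}\lambda$; this is where I would feed in the specialization $d_i=\mathrm{i}\sqrt{(1-cp_i)/(1-cq_i)}\,c_i$, $q_i=-p_i$ (or the Gram analogue $c_{ij}=\mathrm{i}\sqrt{(1-cp_i)/(1+cq_j)}\,\delta_{ij}$, $q_i=p_i$) together with the change of variables $y=2x_1$, $\tau=2x_{-1}$, absorbing $c_i$ into $\xi_{i0}$. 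Setting $\tau_{00}=f$, $\tau_{01}=g$, equation \eqref{Bl1}--\eqref{Bl2} is then the $(n,m)=(0,0)$ and $(0,1)$ instance of \eqref{2dtl} once the periodicity $\tau_{n+1,0}\Bumpeq\bar\tau_{n,1}$, $\tau_{n+1,1}\Bumpeq\bar\tau_{n,0}$ is used to rewrite $\tau_{\pm1,0}$, $\tau_{\pm1,1}$ in terms of $\bar g$, $\bar f$; and equations \eqref{Bl3}--\eqref{Bl6} are precisely the Bäcklund relations \eqref{BT1}--\eqref{BT2} at $n=0$, $m=0$ after the same substitution, with the factor $\mathrm{i}\lambda$ emerging from $c=\mathrm{i}\lambda$ and $\partial_{x_1}=2\partial_y$, $\partial_{x_{-1}}=2\partial_\tau$.

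Next I would establish the conjugation relations \eqref{red-1} carefully, since these are the crux of the whole construction. The computation displayed before \eqref{red-1} shows, column by column, that under the Case 1 restrictions $\phi^{(i)}_{n+1}(0)\Bumpeq\bar\phi^{(i)}_n(1)$ and $\phi^{(i)}_{n+1}(1)\Bumpeq\bar\phi^{(i)}_n(0)$; I would make explicit that the proportionality constants are column-independent (they depend only on $p_i$ through the prefactor $c_ip_i^{n+1}e^{\xi_i}$ and the square-root factor, but after normalizing $c_i$ these are the same in every entry of a given row), so that pulling them out of the determinant gives $\tau_{n+1,0}\Bumpeq\bar\tau_{n,1}$ and $\tau_{n+1,1}\Bumpeq\bar\tau_{n,0}$ up to an overall constant. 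For the Case 2 (breather) restrictions I would check that the extra pairing $p_{2i-1}=\bar p_{2i}$ with $N=2M$ is consistent: the conjugation now swaps rows $2i-1$ and $2i$, producing a sign $(-1)^M$ from the permutation, which is again an overall constant and hence harmless for the $\Bumpeq$ statements. Combining, the reality condition forces $u=\mathrm{i}\ln(\bar f\bar g/(fg))$ to be real and $\varphi=\ln(g\bar g/(f\bar f))$ to be real, matching \eqref{u}--\eqref{phi}.

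Then I would recover the hodograph formula for $x$. From \eqref{Bl3}--\eqref{Bl6} one extracts the four logarithmic-derivative identities displayed after \eqref{Bl6}; adding the $y$-derivative pair gives $x_y=\lambda^{-1}\cosh\varphi=\lambda^{-1}+\mathrm{i}(\ln(\bar fg/(f\bar g)))_y$, and adding the $\tau$-derivative pair gives $x_\tau=-\lambda\cos u=-\lambda+\mathrm{i}(\ln(\bar fg/(f\bar g)))_\tau$; compatibility (which follows from \eqref{con-law1}, equivalently from the mixed $y\tau$-derivative of the bilinear system) lets me integrate to $x=\lambda^{-1}y-\lambda\tau+\mathrm{i}\ln(\bar fg/(f\bar g))$, with $t=\lambda\tau$ from \eqref{hodo}. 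Finally I would confirm that this parametric pair $(u,x)$, with $\tau_{n,m}$ given by either determinant, indeed satisfies \eqref{gsg1}: this is automatic because we have run the Section \ref{sec2} derivation in reverse, but I would note the one subtlety worth flagging — that $r=1/\cosh\varphi$ stays in $(0,1)$, which needs $\varphi$ real and nonzero, guaranteed by the reality of $\varphi$ established above.

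The main obstacle I anticipate is not any single identity but the bookkeeping of the $\Bumpeq$ constants through the determinant: one must be sure that the per-column prefactors extracted from each $\phi^{(i)}_{n}(m)$ in Case 1, and the row-permutation sign in Case 2, really do collapse to a single global constant independent of $n$ and of the matrix size, so that $\tau_{n+1,0}\Bumpeq\bar\tau_{n,1}$ holds as stated and the substitution into \eqref{2dtl}--\eqref{BT2} is legitimate. A secondary technical point is checking that $c=\mathrm{i}\lambda$ does not make any denominator $p_i+q_j$ or factor $1-cp_i$ vanish for generic real or conjugate-paired $p_i$, which is needed for the Gram determinant to be well defined; this is a genericity condition on the soliton parameters and I would state it as such rather than belabor it.
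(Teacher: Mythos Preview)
Your proposal is correct and follows essentially the same route as the paper: the theorem is really a summary of the derivation in Section~\ref{sec2}, and you reconstruct that derivation step by step --- invoking Lemmas~\ref{Casorati}/\ref{Gram}, specializing $c=\mathrm{i}\lambda$ with the Case~1/Case~2 parameter restrictions to obtain the conjugation relations \eqref{red-1}, reading off \eqref{Bl1}--\eqref{Bl6} at $(n,m)=(0,0),(0,1)$, and integrating the logarithmic-derivative identities to recover the parametric formula for $x$. Your added explicit checks (column-independence of the proportionality constants, the row-permutation sign $(-1)^M$ in the breather case, and the genericity remark for the Gram denominators) are exactly the bookkeeping the paper leaves implicit under the symbol $\Bumpeq$, so there is no divergence in method.
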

	\begin{remark}\label{nu-1sol}
		Similar to the deduction of the gsG equation with $\nu=1$, one can generalize the Theorem 1 in \cite{gsg-F}. The parametric form for the $N$-soliton solution of the gsG equation \eqref{gsg} with $\nu=-1$ is
		\begin{align}
		u=\mathrm{i}\ln\frac{\bar{f}\bar{g}}{fg},
		\end{align}
		\begin{align}
		x=c^{-1}y+c\tau+\ln\frac{\bar{g}g}{\bar{f}f},\ t=c\tau,\ c\in\mathbb{R},
		\end{align}
		\begin{align}
		f=\tau_{00},\ g=\tau_{01},
		\end{align}
		and $\tau_{n,m}$ can be written as a Casorati-type determinant
		\begin{align}
		\tau_{n,m}& =\left|\begin{array}{cccc}
		\phi_n^{(1)}(m) & \phi_{n+1}^{(1)}(m) & \cdots & \phi_{n+N-1}^{(1)}(m) \\
		\phi_n^{(2)}(m) & \phi_{n+1}^{(2)}(m) & \cdots & \phi_{n+N-1}^{(2)}(m) \\
		\cdots & \cdots & \cdots & \cdots \\
		\phi_n^{(N)}(m) & \phi_{n+1}^{(N)}(m) & \cdots & \phi_{n+N-1}^{(N)}(m)
		\end{array}\right|,
		\end{align}
		with
		\begin{align}
		\phi_n^{(i)}(m)=p_i^n\left(1-cp_i\right)^m e^{\frac{p_i}{2}y+\frac{1}{2p_i}\tau+\xi_{i0}}+\mathrm{i}(-p_i)^n\left(1+cp_i\right)^me^{-\frac{p_i}{2}y-\frac{1}{2p_i}\tau+\eta_{i0}},\quad i=1,2,\cdots,N.
		\end{align}
		Obviously, the result in this remark is equivalent to that in Theorem 1 of \cite{gsg-F} when $c=1$.
	\end{remark}
	\begin{remark}
		The determinant solutions we obtained above are consistent with the solutions given in \cite{gsg1}.
	\end{remark}
	
	\subsection{Reduction to the sG and SP equation}
	
	In \cite{gsg-1} and \cite{gsg1}, Matsuno demonstrated that the gsG equation is reduced to the sG equation in the long wave limit and to the SP equation in the short wave limit. Here we introduce the scaling parameter $c$ (or $\lambda$) in the hodograph transformation and the $\tau$-function, and give another kind of reduction.
	
	\subsubsection{Reduction to the sine-Gordon equation}
	
	The sG equation
	\begin{align}
	u_{xt}=\sin u,\ \quad u \equiv u(x, t) \in \mathbb{R},\  \quad(x, t) \in \mathbb{R}^2,
	\end{align}
	is a fundamental model in the integrable system, which appears in a number of disciplines of physics including magnetic flux propagation \cite{sg-phy1,sg-phy2}, one-dimensional classical field theory \cite{sg-phy3,sg-phy4}, and nonlinear optics \cite{sg-phy5}. In this part, we take into account the reductions from the gsG equation with $\nu=\pm1$ to the sG equation in the continuous case through some scaling transformations.
	\\
	\\
	{\bf (I) From the gsG equation with $\nu=-1$ to the sG equation}. In this part of reduction, we take $c\in\mathbb{R}$ as a small parameter. The matrix elements of the $\tau$-function for the gsG equation we proposed in \cite{gsg-F} can be written as
	\begin{align}
	\phi_n^{(i)}(1)&=p_i^n(1-cp_i)e^{\frac{p_i}{2}y+\frac{1}{2p_i}\tau+\xi_{i0}}+\mathrm{i}(-p_i)^n\left(1+cp_i\right)e^{-\frac{p_i}{2}y-\frac{1}{2p_i}\tau+\eta_{i0}}\nonumber\\
	&= \phi_n^{(i)}(0)-c\phi_{n+1}^{(i)}(0),
	\end{align}
	which means
	\begin{align}
	g=\tau_0(1)=f+\mathnormal{O}(c),\ \bar{g}=\tau_1(1)=\bar{f}+\mathnormal{O}(c).
	\end{align}
	Then we can rewrite the dependent variable transformations and introduce the scaling transformation
	\begin{align}
	&u=\mathrm{i}\ln\frac{\bar{f}\bar{g}}{fg}=2\mathrm{i}\ln\frac{\bar{f}}{f}+\mathnormal{O}(c),\label{con-u-sg}\\
	&\phi=\mathrm{i}\ln\frac{\bar{f}g}{f\bar{g}}=\mathnormal{O}(c),\label{con-phi-sg}\\
	&\hat{x}=cx=y+c^2\tau+c\ln\frac{g\bar{g}}{f\bar{f}}=y+\mathnormal{O}(c^2),\label{con-x-sg}\\
	&\hat{t}=c^{-1}t=\tau.\label{con-t-sg}
	\end{align}
	In addition, the gsG equation \eqref{gsg} with $\nu=-1$ can be recast into
	\begin{align}
	u_{\hat{x}\hat{t}}=(1-c^2\partial_{\hat{x}}^2)\sin u.\label{c-sg}
	\end{align}
	With the scaling limit $c\rightarrow 0$, equation \eqref{c-sg} becomes
	\begin{align}
	u_{\hat{x}\hat{t}}=\sin u,
	\end{align}
	which is the well-known sG equation. And the dependent variable transformation, as well as the $\tau$-function $f$, also reduces to the usual form of the $N$-soliton
	solutions of the sG equation \cite{sg1,sg2,sg3}.
	
	It should be point out that the transformation \eqref{con-u-sg}-\eqref{con-t-sg} are agree with the scaled variables introducing by Matsuno in \cite{gsg-1}. The difference is that we introduce the parameter $c$ in the hodograph transformation, thus the $\tau$-function can be transformed naturally without the scaling of variables $y,\ \tau$ and the parameter $p$.
	\\
	\\
	{\bf (II) From the gsG equation with $\nu=1$ to the sG equation}. In this part of reduction, we take $\lambda\in\mathbb{R}$ as a small parameter.
	Similar to the case $\nu=-1$, recall that
	\begin{align}
	\phi_n^{(i)}(1)&=p_i^n\left(1-\mathrm{i}\lambda p_i\right) e^{\frac{p_i}{2}y+\frac{1}{2p_i}\tau+\xi_{i0}}+\mathrm{i}\sqrt{\frac{1-\mathrm{i}\lambda p_i}{1+\mathrm{i}\lambda p_i}}(-p_i)^n\left(1+\mathrm{i}\lambda p_i\right)e^{-\frac{p_i}{2}y-\frac{1}{2p_i}\tau+\eta_{i0}},\nonumber\\
	&=p_i^ne^{\frac{p_i}{2}y+\frac{1}{2p_i}\tau+\xi_{i0}}+\mathrm{i}(-p_i)^ne^{-\frac{p_i}{2}y-\frac{1}{2p_i}\tau+\eta_{i0}}+\mathnormal{O}(\lambda),\nonumber\\
	&=\hat{\phi}_n^{(i)}(0)+\mathnormal{O}(\lambda),\\
	\phi_n^{(i)}(0)&=p_i^ne^{\frac{p_i}{2}y+\frac{1}{2p_i}\tau+\xi_{i0}}+\mathrm{i}\sqrt{\frac{1-\mathrm{i}\lambda p_i}{1+\mathrm{i}\lambda p_i}}(-p_i)^ne^{-\frac{p_i}{2}y-\frac{1}{2p_i}\tau+\eta_{i0}},\nonumber\\
	&=\hat{\phi}_n^{(i)}(0)+\mathnormal{O}(\lambda),
	\end{align}
	which lead to
	\begin{align}
	f=\hat{f}+\mathnormal{O}(\lambda),\ g=\hat{f}+\mathnormal{O}(\lambda),\ \bar{f}=\bar{\hat{f}}+\mathnormal{O}(\lambda),\
	\bar{g}=\bar{\hat{f}}+\mathnormal{O}(\lambda)
	\end{align}
	here $\hat{f}$ is the $\tau$-function of the sG equation. Thus we have
	\begin{align}
	u=2\mathrm{i}\ln\frac{\bar{\hat{f}}}{\hat{f}}+\mathnormal{O}(\lambda),\ \varphi=\mathnormal{O}(\lambda),\ \hat{x}=\lambda x=y+\mathnormal{O}(\lambda^2),\ \hat{t}=\lambda^{-1}t=\tau,
	\end{align}
	And the gsG equation \eqref{gsg} with $\nu=1$ becomes
	\begin{align}
	u_{\hat{x}\hat{t}}=(1+\lambda^2\partial_{\hat{x}}^2)\sin u.
	\end{align}
	The gsG equation is converted into the sG equation with the scaling limit $\lambda\rightarrow 0$, as well as its solutions. Here those transformations are agree with transformations introduced in \cite{gsg1}.
	
	\subsubsection{Reduction to the short pulse equation}
	The short pulse (SP) equation
	\begin{align}
	u_{x t}=u-\frac{\sigma}{6}\left(u^3\right)_{x x},\label{sp}
	\end{align}
	was derived to describe the propagation of ultra-short optical pulses in nonlinear media by Sch{\"a}fer and Wayne when $\sigma=-1$ \cite{sp1}. Here, the real-valued function $u=u(x,t)$ represent the magnitude of the electric field, and the subscripts $t$ and $x$ signify partial differentiation. The SP equation has also been developed as an integrable differential equation linked to pseudospherical surfaces outside of the context of nonlinear optics \cite{sp2}. When $\sigma=1$, equation \eqref{sp}
	\begin{align}
	u_{x t}=u-\frac{1}{6}\left(u^3\right)_{x x},\label{sp1}
	\end{align}
	was shown to model the evolution of ultra-short pulses in the band gap of nonlinear metamaterials \cite{1sp}.
	Here, we show that applying the right scaling limit and variable transformations results in the gsG equation with $\nu=\pm1$ being reduced to the SP equation \eqref{sp} with $\sigma=\pm1$ in the continuous case.
	\\
	\\
	{\bf (I) From the gsG equation with $\nu=-1$ to the SP equation with $\sigma=-1$.} In this part of reduction, we take $c$ as a big parameter (or $\epsilon=\frac{1}{c}$ small). The matrix elements of the $\tau$-function for the gsG equation with $\nu=-1$ in \cite{gsg-F} are
	\begin{align}
	\phi_n^{(i)}(1)&=p_i^n(1-cp_i)e^{\frac{p_i}{2}y+\frac{1}{2p_i}\tau+\xi_{i0}}+\mathrm{i}(-p_i)^n\left(1+cp_i\right)e^{-\frac{p_i}{2}y-\frac{1}{2p_i}\tau+\eta_{i0}}\nonumber\\
	&\propto\phi_{n+1}^{(i)}(0)-2\epsilon \frac{\partial}{\partial \tau}\phi_{n+1}^{(i)}(0),
	\end{align}
	from which we can obtain
	\begin{align}
	&g=\tau_0(1)\propto\bar{f}-2\epsilon \frac{\partial}{\partial \tau}\bar{f}+\mathnormal{O}(\epsilon^2),\\
	&\bar{g}\propto {f}-2\epsilon \frac{\partial}{\partial \tau} {f}+\mathnormal{O}(\epsilon^2).
	\end{align}
	Then we rewrite the dependent variable transformations and introduce new variable as
	\begin{align}
	&\epsilon \hat{u}=u=\mathrm{i}\ln\frac{\bar{f}\bar{g}}{{f}{g}}=\mathrm{i}\ln \frac{\bar{f}(f-2\epsilon\frac{\partial}{\partial \tau}{f})}{{f}(\bar{f}-2\epsilon\frac{\partial}{\partial \tau}\bar{f})}+\mathnormal{O}(\epsilon^2),\label{con-u-sp}\\
	&\phi=\mathrm{i}\ln\frac{\bar{f}{g}}{{f}\bar{g}}=\mathrm{i}\ln \frac{\bar{f}(\bar{f}-2\epsilon\frac{\partial}{\partial \tau}\bar{f})}{{f}(f-2\epsilon\frac{\partial}{\partial \tau}{f})}+\mathnormal{O}(\epsilon^2),\label{con-phi-sp}\\
	&\epsilon \hat{x}=x-t=\epsilon y+\ln\frac{g\bar{g}}{f\bar{f}}=\epsilon y+\ln\frac{(\bar{f}-2\epsilon\frac{\partial}{\partial \tau}\bar{f})(f-2\epsilon\frac{\partial}{\partial \tau}{f})}{f\bar{f}}+\mathnormal{O}(\epsilon^2),\label{con-x-sp}\\
	&\hat{t}=\epsilon t=\tau.\label{con-t-sp}
	\end{align}
	The gsG equation \eqref{gsg} with $\nu=-1$ can be recast into
	\begin{align}
	\epsilon \hat{u}_{\hat{t} \hat{x}}=\epsilon\left(\hat{u}+\frac{1}{6}\left(\hat{u}^3\right)_{\hat{x} \hat{x}}\right)+\mathnormal{O}\left(\epsilon^3\right).\label{con-sp}
	\end{align}
	Dividing both sides of \eqref{con-sp} by $\epsilon$ and taking $\epsilon\rightarrow0$ in \eqref{con-u-sp}-\eqref{con-sp}, we arrive at
	\begin{align}
	\hat{u}=2\mathrm{i}\left(\ln\frac{\bar{f}}{f}\right)_\tau,\ \phi=2\mathrm{i}\ln\frac{\bar{f}}{f}&,\ \hat{x}=y-2(\ln(f\bar{f}))_\tau,\ \hat{t}=\tau,\\
	\hat{u}_{\hat{t} \hat{x}}=\hat{u}&+\frac{1}{6}\left(\hat{u}^3\right)_{\hat{x} \hat{x}}.
	\end{align}
	Here the SP equation \eqref{sp} is derived and its parametric representation $\tau$-function are equivalent to those in \cite{dis-sp}.
	\\
	\\
	{\bf (II) From the gsG equation with $\nu=1$ to the SP equation with $\sigma=1$.} Similar to the case with $\nu=-1$, we take $\epsilon=\frac{1}{\lambda}$ as a small parameter. The $\tau$-function in Theorem \ref{det-sol} can be written as
	\begin{align}
	\phi_n^{(i)}(0)&=p_i^n e^{\frac{p_i}{2}y+\frac{1}{2p_i}\tau+\xi_{i0}}+\mathrm{i}\sqrt{\frac{1-\mathrm{i}\lambda p_i}{1+\mathrm{i}\lambda p_i}}(-p_i)^ne^{-\frac{p_i}{2}y-\frac{1}{2p_i}\tau+\eta_{i0}},\nonumber\\
	&\propto\hat{\phi}_{n+1}^{(i)}-{\mathrm{i}\epsilon}\frac{\partial}{\partial \tau}\hat{\phi}_{n+1}^{(i)}+\mathnormal{O}(\epsilon^2),
	\end{align}
	\begin{align}
	\phi_n^{(i)}(1)&=p_i^n\left(1-\mathrm{i}\lambda p_i\right) e^{\frac{p_i}{2}y+\frac{1}{2p_i}\tau+\xi_{i0}}+\mathrm{i}\sqrt{\frac{1-\mathrm{i}\lambda p_i}{1+\mathrm{i}\lambda p_i}}(-p_i)^n\left(1+\mathrm{i}\lambda p_i\right)e^{-\frac{p_i}{2}y-\frac{1}{2p_i}\tau+\eta_{i0}},\nonumber\\
	&\propto \hat{\phi}_{n}^{(i)}-{\mathrm{i}\epsilon}\frac{\partial}{\partial \tau}\hat{\phi}_{n}^{(i)}+\mathnormal{O}(\epsilon^2),
	\end{align}
	if we define
	\begin{align}
	\hat{\phi}_{n}^{(i)}=p_i^{n} e^{\frac{p_i}{2}y+\frac{1}{2p_i}\tau+\xi_{i0}}+(-p_i)^{n}e^{-\frac{p_i}{2}y-\frac{1}{2p_i}\tau+\eta_{i0}},
	\end{align}
	and
	\begin{align}
	\hat{f} =\left|\begin{array}{cccc}
	\hat{\phi}_{n+1}^{(1)} & \hat{\phi}_{n+2}^{(1)} & \cdots & \hat{\phi}_{n+N}^{(1)} \\
	\hat{\phi}_{n+1}^{(2)} & \hat{\phi}_{n+2}^{(2)} & \cdots & \hat{\phi}_{n+N}^{(2)} \\
	\cdots & \cdots & \cdots & \cdots \\
	\hat{\phi}_{n+1}^{(N)} & \hat{\phi}_{n+2}^{(N)} & \cdots & \hat{\phi}_{n+N}^{(N)}
	\end{array}\right|,\ \hat{g}=\left|\begin{array}{cccc}
	\hat{\phi}_n^{(1)} & \hat{\phi}_{n+1}^{(1)} & \cdots & \hat{\phi}_{n+N-1}^{(1)} \\
	\hat{\phi}_n^{(2)} & \hat{\phi}_{n+1}^{(2)} & \cdots & \hat{\phi}_{n+N-1}^{(2)} \\
	\cdots & \cdots & \cdots & \cdots \\
	\hat{\phi}_n^{(N)} & \hat{\phi}_{n+1}^{(N)} & \cdots & \hat{\phi}_{n+N-1}^{(N)}
	\end{array}\right|.
	\end{align}
	Furthermore, one obtains
	\begin{align}
	&f=\hat{f}-{\mathrm{i}\epsilon}\frac{\partial}{\partial \tau}\hat{f}+\mathnormal{O}(\epsilon^2),\ \bar{f}=\hat{f}+{\mathrm{i}\epsilon}\frac{\partial}{\partial \tau}\hat{f}+\mathnormal{O}(\epsilon^2),\\ &g=\hat{g}+{\mathrm{i}\epsilon}\frac{\partial}{\partial \tau}\hat{g}+\mathnormal{O}(\epsilon^2),\ \bar{g}=\hat{g}-{\mathrm{i}\epsilon}\frac{\partial}{\partial \tau}\hat{g}+\mathnormal{O}(\epsilon^2).
	\end{align}
	Here we introduce new variables as
	\begin{align}
	&\epsilon \hat{u}=u=\mathrm{i}\ln\frac{\bar{f}\bar{g}}{{f}{g}}=\mathrm{i}\ln \frac{(\hat{f}+{\mathrm{i}\epsilon}\hat{f}_\tau)(\hat{g}-{\mathrm{i}\epsilon}\hat{g}_\tau)}{(\hat{f}-{\mathrm{i}\epsilon}\hat{f}_\tau)(\hat{g}+{\mathrm{i}\epsilon}\hat{g}_\tau)}+\mathnormal{O}(\epsilon^2),\label{con-1u-sp}\\
	&\varphi=\ln\frac{\bar{g}g}{f\bar{f}}=\ln\frac{(\hat{g}-{\mathrm{i}\epsilon}\hat{g}_\tau)(\hat{g}+{\mathrm{i}\epsilon}\hat{g}_\tau)}{(\hat{f}-{\mathrm{i}\epsilon}\hat{f}_\tau)(\hat{f}+{\mathrm{i}\epsilon}\hat{f}_\tau)}+\mathnormal{O}(\epsilon^2),\label{con-1phi-sp}\\
	&{\epsilon \hat{x}=x+t=\epsilon y+\mathrm{i}\ln\frac{(\hat{f}+{\mathrm{i}\epsilon}\hat{f}_\tau)(\hat{g}+{\mathrm{i}\epsilon}\hat{g}_\tau)}{(\hat{f}-{\mathrm{i}\epsilon}{f}_\tau)(\hat{g}-{\mathrm{i}\epsilon}\hat{g}_\tau)}+\mathnormal{O}(\epsilon^2)},\label{con-1x-sp}\\
	&\hat{t}=\tau,\label{con-1t-sp}
	\end{align}
	then the scaling limit $\epsilon\rightarrow 0$ leads to
	\begin{align}
	\hat{u}=2\left(\ln\frac{\hat{g}}{\hat{f}}\right)_\tau,\
	\varphi=2\ln\frac{\hat{g}}{\hat{f}},\ \hat{x}=y-2(\ln\hat{f}\hat{g})_\tau,\ \hat{t}=\tau,
	\end{align}
	and
	\begin{align}
	\hat{u}_{\hat{t}\hat{x}}= \hat{u}-\frac{1}{6}(\hat{u}^3)_{\hat{x}\hat{x}}.
	\end{align}
	Note that the $N$-soliton solutions of the SP equation with $\sigma=1$ exhibits the singular nature since $u$ diverges when $|x|\rightarrow \infty$ as shown in \cite{gsg1}.	
	\section{Integrable fully discretization of the gsG equation}\label{sec3}
	To construct a fully discrete analogue of the gsG equation, we introduce two discrete variables, $k$ and $l$, which correspond to the discrete spartial and time variables, respectively. We start with the following fully discrete bilinear equations.
	
	\begin{align}
	&\tau_n(k,l+1,m)\tau_n(k,l,m-1)-bc\tau_{n+1}(k,l,m-1)\tau_{n-1}(k,l+1,m)\notag\\
	&=(1-bc)\tau_n(k,l,m)\tau_n(k,l+1,m-1),\label{dis-2dtl2}
	\end{align}
	\begin{align}
			& \tau_{n+1}(k,l,m-1)\tau_{n}(k+1,l,m)-ac^{-1}\tau_{n+1}(k+1,l,m)\tau_{n}(k,l,m-1)\notag\\
			&=(1-ac^{-1})\tau_{n}(k,l,m)\tau_{n+1}(k+1,l,m-1).
			\label{dis-2dtl3}
			\end{align}		
	Here $n$, $k$, $l$, $m$ are integers, $a$, $b$, $c$ are parameters.
	\begin{prop}\label{fd-2dtl-sol}
		The bilinear equations \eqref{dis-2dtl2}-\eqref{dis-2dtl3} admit either the following Gram-type determinant solutions
		$$
		\tau_n(k, l,m)=\left|m_{ij}^{n}(k, l,m)\right|_{1 \leqslant i, j \leqslant N},
		$$
		with
		\begin{align}
		m_{ij}^{(n)}(k, l,m)&=c_{i j}+\frac{d_{ij}}{p_i+q_j}\left(-\frac{p_i}{q_j}\right)^{n}\left(\frac{1-a p_i}{1+a q_j}\right)^{-k}\left(\frac{1-b p_i^{-1}}{1+b q_j^{-1}}\right)^{-l}\left(\frac{1-c p_i}{1+c q_j}\right)^{m},
		\end{align}
		or the Casorati-type determinant solutions
		\begin{align}
		\tau_n(k, l,m)=\left|\phi_{(n+j-1)}^{(i)}(k, l,m)\right|_{1 \leqslant i, j \leqslant N},
		\end{align}
		with
		\begin{align*}
		\phi_{(n)}^{(i)}(k, l,m)=c_ip_i^n\left(1-a p_i\right)^{-k}\left({1-b p_i^{-1}}\right)^{-l}\left(1-cp_i\right)^m +d_iq_i^n\left(1-a q_i\right)^{-k} \left({1+b q_i^{-1}}\right)^{-l}\left(1-cq_i\right)^m.
		\end{align*}
	\end{prop}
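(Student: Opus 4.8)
\quad The plan is to verify the two determinant forms directly, by the same scheme that produces the semi-discrete solutions of Lemmas~\ref{Casorati} and~\ref{Gram}: once a determinant is substituted into \eqref{dis-2dtl2} or \eqref{dis-2dtl3}, the identity should collapse to a universal determinant relation --- the Jacobi (Desnanot--Jacobi) identity for the Gram-type determinant, and a Pl{\"u}cker relation (a Laplace expansion of a bordered determinant) for the Casorati-type determinant. Both \eqref{dis-2dtl2} and \eqref{dis-2dtl3} are fully discrete two-dimensional-Toda (equivalently discrete-KP) bilinear equations: \eqref{dis-2dtl2} couples the $l$- and $m$-directions (lattice parameters $b$, $c$, coefficient $bc$) with the $k$-direction frozen, and \eqref{dis-2dtl3} couples the $k$- and $m$-directions (parameters $a$, $c$, coefficient $ac^{-1}$) with the $l$-direction frozen; so one only needs the shift relations of the determinant entries in the two active directions together with the Toda index $n$.

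For the Gram-type determinant, write $m_{ij}^{(n)}(k,l,m)=c_{ij}+\dfrac{d_{ij}}{p_i+q_j}\,\rho_i(k,l,m)\,\sigma_j(k,l,m)$, where $\rho_i$ collects the $p_i$-factors $\bigl((-p_i)^n(1-ap_i)^{-k}(1-bp_i^{-1})^{-l}(1-cp_i)^m\bigr)$ and $\sigma_j$ the matching $q_j$-factors. The decisive observation is that a unit shift of any one of $n,k,l,m$ alters $m_{ij}^{(n)}$ by a \emph{rank-one} term; for instance $\bigl(\tfrac{1-ap_i}{1+aq_j}\bigr)^{-(k+1)}-\bigl(\tfrac{1-ap_i}{1+aq_j}\bigr)^{-k}=\dfrac{a(p_i+q_j)}{1-ap_i}\bigl(\tfrac{1-ap_i}{1+aq_j}\bigr)^{-k}$, so the denominator $p_i+q_j$ of $m_{ij}^{(n)}$ is exactly cancelled and $m_{ij}^{(n)}(k{+}1,l,m)-m_{ij}^{(n)}(k,l,m)=a\,\alpha_i\beta_j$ for suitable vectors $\alpha,\beta$ assembled from the plane waves; the same holds for the $l$-, $m$- and $n$-shifts (the sign mismatch between the $p_i$-factors $1\mp bp_i^{-1},\,1\mp cp_i$ and the $q_j$-factors $1\pm bq_j^{-1},\,1\pm cq_j$ is precisely what makes $p_i+q_j$ cancel). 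Hence every $\tau$-function appearing in \eqref{dis-2dtl2}--\eqref{dis-2dtl3} is realised as an $N\times N$, or bordered $(N{+}1)\times(N{+}1)$, minor of one fixed $(N{+}2)\times(N{+}2)$ matrix whose two extra rows and columns are built from the $\alpha$-, $\beta$-vectors of the two active directions, and the Jacobi identity relating the six complementary minors of that matrix reproduces the three-term bilinear equation, with the $(1-bc)$ (resp.\ $(1-ac^{-1})$) term matching the product carrying no shift in either active direction.

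For the Casorati-type determinant $\tau_n(k,l,m)=\bigl|\phi_{(n+j-1)}^{(i)}(k,l,m)\bigr|$, one first records the contiguity relations of the columns. In the $m$- and $k$-directions these are the clean two-term relations $\phi_{(n)}^{(i)}(k,l,m{+}1)=\phi_{(n)}^{(i)}(k,l,m)-c\,\phi_{(n+1)}^{(i)}(k,l,m)$ and $\phi_{(n)}^{(i)}(k,l,m)=\phi_{(n)}^{(i)}(k{+}1,l,m)-a\,\phi_{(n+1)}^{(i)}(k{+}1,l,m)$, valid because the $p_i$- and $q_i$-parts of $\phi^{(i)}$ carry the \emph{same} $a$- and $c$-factors. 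In the $l$-direction the two parts carry $(1-bp_i^{-1})^{-l}$ and $(1+bq_i^{-1})^{-l}$, so one only obtains $\phi_{(n)}^{(i)}(k,l,m)=\phi_{(n)}^{(i)}(k,l{+}1,m)-b\,\widetilde\phi_{(n-1)}^{(i)}(k,l{+}1,m)$, where $\widetilde\phi^{(i)}$ is the companion entry obtained from $\phi^{(i)}$ by reversing the sign of its $q_i$-constituent. Feeding these relations into the three bilinear products of \eqref{dis-2dtl2} (resp.\ \eqref{dis-2dtl3}) and performing column operations to bring the Casoratians to a common column set, the equation becomes the Laplace expansion of a $2N\times 2N$ determinant, i.e.\ a Pl{\"u}cker relation among Casoratians with the column sets $\{n,\dots,n{+}N{-}1\}$, $\{n{+}1,\dots,n{+}N\}$, and so on; for \eqref{dis-2dtl2} one must additionally check that the auxiliary $\widetilde\phi$-columns drop out, which they do once the coefficient is organised as $bc$.

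The difficulty is essentially bookkeeping: each bilinear equation involves three of the discrete variables $n,k,l,m$ simultaneously, and one must keep every shift direction and sign consistent with the coefficients $bc$ and $ac^{-1}$. The one genuine subtlety is the asymmetric $b$-direction --- the $p_i$- and $q_i$-constituents carry $1-bp_i^{-1}$ versus $1+bq_i^{-1}$ --- which in the Grammian computation is harmless (indeed required for the $p_i+q_j$ cancellation) but in the Casoratian computation forces the companion entries $\widetilde\phi^{(i)}$ into play, so that one has to confirm they disappear from the final identity. Once the rank-one update formulas (Gram case) and the contiguity relations (Casorati case) are established, the passage to the Jacobi / Pl{\"u}cker identity is routine and parallels the verification of Lemmas~\ref{Casorati} and~\ref{Gram}; the detailed computation would be relegated to an appendix.
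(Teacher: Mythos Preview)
Your direct verification via Jacobi / Pl\"ucker identities is a valid route, but the paper proceeds quite differently. Rather than substituting the determinants and reducing to a universal identity, the paper recognises \eqref{dis-2dtl2} and \eqref{dis-2dtl3} as reparameterisations of the discrete KP (Hirota--Miwa) equation: it writes down two members of the dKP hierarchy for a four-index $\tau$-function $\tau(k_1,k_2,k_3,k_4)$ with lattice parameters $a_1,\dots,a_4$, quotes the known Gram determinant solution from Ohta--Hirota--Tsujimoto--Imai, and then takes the singular limit $a_1\to\infty$ together with the identifications $a_2=a$, $a_3=b^{-1}$, $a_4=c$ and $k_1=-n{-}1$ (or $k_1+k_3=-n{-}1$) to collapse the dKP equations onto \eqref{dis-2dtl2}--\eqref{dis-2dtl3}. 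The Casorati case is declared ``similar'' without detail.

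The trade-off is clear. The paper's argument is short and conceptual --- no determinant manipulations at all, just a citation plus a limit --- but it outsources the actual work to \cite{OHTI_JPSJ} and requires the reader to track the reparameterisation carefully (in particular, why one of the Miwa variables becomes the Toda index $n$ in the limit $a_1\to\infty$). Your approach is self-contained and explains \emph{why} the factors $(1-bp_i^{-1})$ versus $(1+bq_j^{-1})$ must appear with opposite signs (to cancel the $p_i+q_j$ denominator), which the paper's argument hides inside the dKP solution formula. On the other hand, your Casorati verification runs into the very asymmetry you flag: the companion columns $\widetilde\phi^{(i)}$ you introduce for the $l$-shift are non-standard, and you only assert that they ``drop out'' without showing it. The paper sidesteps this entirely, since in the dKP framework the $l$-direction is just another Miwa variable ($a_3=b^{-1}$) and no special treatment is needed before the limit.
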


		\begin{proof}
			Here we give a proof for the Gram-type determinant solution. The proof for the Casorati-type determinant solution is similar.
			The discrete Kadomtsev-Petviashvili (dKP) equation was proposed independently by Hirota \cite{Hirota-1981} and Miwa \cite{Miwa-1982}  in early 1980s so it is also called Hirota-Miwa (HW) equation.  Discrete KP hierarchy is an infinite number of bilinear equations with $(k_i, k_j, k_m)$ taken from $(k_1, k_2, k_3, \cdots)$, among which, let us choose two triples: $(k_1, k_3, k_4)$, $(k_1, k_2, k_4)$
			so that the following two bilinear equations follow
			
			\begin{eqnarray}
			&& (a^{-1}_{3}-a^{-1}_{4})\tau (k_{1}+1,k_{2},k_{3}, k_4)\tau (k_{1},k_{2},k_{3}+1,k_4+1) \nonumber
			\\
			&&+(a^{-1}_{4}-a^{-1}_{1})\tau (k_{1},k_{2},k_{3}+1,k_4)\tau (k_{1}+1,k_{2},k_{3},k_4+1) \nonumber
			\\
			&&+(a^{-1}_{1}-a^{-1}_{3}) \tau (k_{1},k_{2},k_{3},k_4+1)\tau (k_{1}+1,k_{2},k_{3}+1,k_4)=0\,,
			\label{dKP1}
			\end{eqnarray}	
			\begin{eqnarray}
			&& (a^{-1}_{2}-a^{-1}_{4})\tau (k_{1}+1,k_{2},k_{3}, k_4)\tau (k_{1},k_{2}+1,k_{3},k_4+1) \nonumber
			\\
			&&+(a^{-1}_{4}-a^{-1}_{1})\tau (k_{1},k_{2}+1,k_{3},k_4)\tau (k_{1}+1,k_{2},k_{3},k_4+1) \nonumber
			\\
			&&+(a^{-1}_{1}-a^{-1}_{2}) \tau (k_{1},k_{2},k_{3},k_4+1)\tau (k_{1}+1,k_{2}+1,k_{3},k_4)=0\,.
			\label{dKP2}
			\end{eqnarray}	
			As is shown by Ohta {\it et al.} \cite{OHTI_JPSJ},  above two bilinear equations admit Gram type solution
			\begin{align}
			\tau(k_1, \cdots, k_4)=\left|c_{i j}+\frac{1}{p_i+q_j} \prod_{l=1}^4 \left(\frac{1-a_l p_i}{1+a_l q_j}\right)^{-k_l} \right|_{1 \leqslant i, j \leqslant N}.
			\end{align}
			By taking $a_1 \to \infty$, and redefined $a_2=a$, $a_3=b^{-1}$, $a_4=c$, $k_2=k$, $k_3=l, k_4=-m$, $k_1+k_3=-n-1$ and $\tau({k_1, \cdots, k_4})=\tau_n(k,l,m)$,  we have the bilinear equation \eqref{dis-2dtl2} together with its Gram-type determinant solution
			from \eqref{dKP1}.
			
			On the other hand, by taking the same limit $a_1 \to \infty$, and redefined $a_2=a$, $a_3=b^{-1}$, $a_4=c$, $k_2=k, k_3=l$, $k_4=-m$, $k_1=-n-1$ and $\tau({k_1, \cdots, k_4})=\tau_n(k,l,m)$,  we have the bilinear equation \eqref{dis-2dtl3} together with its Gram-type determinant solution
			from \eqref{dKP2}. The proof is complete.
		\end{proof}
		\begin{remark}
			Eq. \eqref{dis-2dtl2} is actually the fully discrete 2DTL equation, while  eq. \eqref{dis-2dtl3}
			is discrete modified KP equation. As shown from above proof, they are equivalent to discrete KP equation via reparameterization. As shown in this section, the discrete analog of the gsG equation is constructed from the combination of  \eqref{dis-2dtl2} and\eqref{dis-2dtl3}.
		\end{remark}

	By applying a 2-reduction condition: $q_i=p_i$ for Gram-type determinant solution, or $q_i=-p_i$ for Casorati-type, we have $\tau_n \Bumpeq \tau_{n+2}$. Here $\Bumpeq$ means two $\tau$-functions are equivalent up to a constant multiple. In addition, by defining
	\begin{align}
	f_k^l=\tau_0(k,l,0),\,\tilde{f}_k^l=\tau_1(k,l,0),\,g_k^l=\tau_0(k,l,1),\,\tilde{g}_k^l=\tau_1(k,l,1),\label{dis-tau-def}
	\end{align}
	we can obtain the following equations
	\begin{align}
	f_k^lg_k^{l+1}-bc\tilde{f}_k^l\tilde{g}_k^{l+1}&=(1-bc)f_k^{l+1}g_k^l,\label{fulldis5}\\
	\tilde{f}_k^l\tilde{g}_k^{l+1}-bc{f}_k^l{g}_k^{l+1}&=(1-bc)\tilde{f}_k^{l+1}\tilde{g}_k^l,\label{fulldis6}\\
	f_k^l\tilde{g}_{k+1}^l-ac^{-1}\tilde{f}_k^lg_{k+1}^l&=(1-ac^{-1})f_{k+1}^l\tilde{g}_k^l,\label{fulldis7}\\
	\tilde{f}_k^l{g}_{k+1}^l-ac^{-1}{f}_k^l\tilde{g}_{k+1}^l&=(1-ac^{-1})\tilde{f}_{k+1}^l{g}_k^l.\label{fulldis8}
	\end{align}
	Introducing four intermediate variable transformations
	\begin{align}
	\sigma_k^l=2\mathrm{i}\ln\frac{\tilde{f}_k^l}{f_k^l},\,\tilde{\sigma}_k^{l}=2\mathrm{i}\ln\frac{\tilde{g}_k^l}{g_k^l},
	\end{align}
	\begin{align}
	\theta_k^l=\ln\frac{f_k^l}{g_k^l},\ \tilde{\theta}_k^l=\ln\frac{\tilde{f}_k^l}{\tilde{g}_k^l},
	\end{align}
	and then dividing \eqref{fulldis5} and \eqref{fulldis6} by $\tilde{f}_k^l\tilde{g}_k^{l+1}$ and $\tilde{f}_k^{l+1}\tilde{g}_k^l$, respectively, lead to
	\begin{align}
	&e^{\frac{\mathrm{i}}{2}(\sigma_k^l+\tilde{\sigma}_k^{l+1})}-bc=(1-bc)\frac{\tilde{f}_k^{l+1}\tilde{g}_k^l}{\tilde{f}_k^l\tilde{g}_k^{l+1}}e^{\frac{\mathrm{i}}{2}(\sigma_k^{l+1}+\tilde{\sigma}_k^l)},\\
	&\frac{\tilde{f}_k^l\tilde{g}_k^{l+1}}{\tilde{f}_k^{l+1}\tilde{g}_k^l}\left(1-bce^{\frac{\mathrm{i}}{2}(\sigma_k^l+\tilde{\sigma}_k^{l+1})}\right)=1-bc.
	\end{align}
	
	Eliminating $\frac{\tilde{f}_k^{l+1}\tilde{g}_k^l}{\tilde{f}_k^l\tilde{g}_k^{l+1}}$, we get
	\begin{align}
	\frac{1}{bc}\sin\frac{\sigma_{k}^{l+1}-\tilde{\sigma}_{k}^{l+1}-\sigma_k^l+\tilde{\sigma}_k^l}{4}=\sin\frac{\sigma_{k}^{l+1}+\tilde{\sigma}_{k}^{l+1}+\sigma_k^l+\tilde{\sigma}_k^l}{4}.\label{dis-gsg2}
	\end{align}
	
	Meanwhile, if we dividing \eqref{fulldis5} and \eqref{fulldis6} by ${f}_k^l{g}_k^{{l+1}}$ and $\tilde{f}_k^{l}\tilde{g}_k^{l+1}$, respectively, we know that
	\begin{align}
	&1-bc\frac{\tilde{f}_k^{l}\tilde{g}_k^{l+1}}{f_k^{l}g_k^{l+1}}=(1-bc)e^{\theta_k^{l+1}-\theta_k^l},\\
	&1-bc\frac{f_k^lg_k^{l+1}}{\tilde{f}_k^{l}\tilde{g}_k^{l+1}}=(1-bc)e^{\tilde{\theta}_k^{l+1}-\tilde{\theta}_k^l},
	\end{align}
	which can be transformed into
	\begin{align}
	\cosh\frac{\theta_k^{l+1}-\theta_k^l+\tilde{\theta}_k^{l+1}-\tilde{\theta}_k^l}{2}-bc\sinh\frac{\theta_k^{l+1}-\theta_k^l+\tilde{\theta}_k^{l+1}-\tilde{\theta}_k^l}{2}=\cosh\frac{\theta_k^{l+1}-\theta_k^{l}+\tilde{\theta}_k^{l}-\tilde{\theta}_k^{l+1}}{2}.\label{dis-gsg3}
	\end{align}
	Similarly, from \eqref{fulldis7} and \eqref{fulldis8}, we know that
	\begin{align}
	&ac^{-1}\sin\frac{\sigma_{k+1}^l-\tilde{\sigma}_{k+1}^l+\sigma_{k}^l-\tilde{\sigma}_{k}^l}{4}=\sin\frac{\sigma_{k+1}^l+\tilde{\sigma}_{k+1}^l-\sigma_{k}^l-\tilde{\sigma}_{k}^l}{4},\label{dis-gsg4}\\
	&ac^{-1}\cosh\frac{\theta_{k+1}^{l}-\theta_k^l+\tilde{\theta}_{k+1}^{l}-\tilde{\theta}_k^l}{2}-\sinh\frac{\theta_{k+1}^{l}-\theta_k^l+\tilde{\theta}_{k+1}^{l}-\tilde{\theta}_k^l}{2}=ac^{-1}\cosh\frac{\theta_{k+1}^{l}+\theta_k^{l}-\tilde{\theta}_k^{l}-\tilde{\theta}_{k+1}^{l}}{2}.\label{dis-gsg5}
	\end{align}
	\subsection{Fully discretization of the gsG
		equation with $\nu=-1$}
	
	By choosing particular values in phase constants
	\begin{align}
	c_{ij}=\mathrm{i}\delta_{ij},\label{fd-res-1}
	\end{align}
	for Gram-type solution, or
	\begin{align}
	d_i=\mathrm{i}c_i,\label{fd-res-2}
	\end{align}
	for Casorati-type solution, we can make $\tau_n$ and $\tau_{n+1}$ complex
	conjugate to each other, which means
	\begin{align}
	\tilde{f}_{k}^l\Bumpeq\bar{f}_k^l,\ \tilde{g}_{k}^l\Bumpeq\bar{g}_k^l.
	\end{align}
	Here $\bar{f}$ denotes complex conjugate of $f$. Then, similar to the continuous case, we introduce discrete hodograph transformation and dependent variable transformation
	\begin{align}
	u_k^l=\frac{1}{2}(\sigma_k^l+\tilde{\sigma}_k^{l})&=\mathrm{i}\ln\frac{\bar{f}_k^l\bar{g}_k^l}{{f}_k^l{g}_k^l},\label{fulldis-u}\\
	\phi_k^l=\frac{1}{2}(\sigma_k^l-\tilde{\sigma}_k^{l})&=\mathrm{i}\ln\frac{\bar{f}_k^l{g}_k^l}{{f}_k^l\bar{g}_k^l}.\label{fulldis-phi}\\
	x_k^l=2kac^{-1}+2lbc-\theta_k^l-\tilde{\theta}_k^l&=2kac^{-1}+2lbc+\ln\frac{g_k^l\bar{g}_k^l}{f_k^l\bar{f}_k^l}.\label{fulldis-x}
	\end{align}
	The fully discrete analogue of the gsG equation with $\nu=-1$ is given by the following theorem:
	\begin{theorem}\label{fd-gsg-1}
		The fully discrete analogue of the gsG equation with $\nu=-1$ is of the form
		\begin{align}
		&\frac{1}{b}\sin\frac{u_{k+1}^{l+1}-u_{k+1}^l-u_k^{l+1}+u_k^l}{4}=\Delta_k^l
		\sin \frac{u_{k+1}^{l+1}+u_{k+1}^l+u_k^{l+1}+u_k^l}{4},\label{dis-gsG1}
		\end{align}
		\begin{align}
		&({b^2c^2-1})\sinh\frac{x_{k+1}^{l+1}-x_{k+1}^l+x_k^{l+1}-x_k^l-4bc+4\chi_1}{2}\sinh\frac{x_{k+1}^{l+1}-x_{k+1}^l-x_k^{l+1}+x_k^l}{2}\notag\\
		&=-b^2c^2\sin \frac{u_{k+1}^{l+1}+u_{k+1}^l+u_k^{l+1}+u_k^l}{2} \sin\frac{u_{k+1}^{l+1}+u_{k+1}^l-u_k^{l+1}-u_k^l}{2},\label{Dis-gsG1}
		\end{align}
		with
		\begin{align}
		&\Delta_k^l=\frac{\sqrt{c^2-a^2}\sinh \frac{x_{k+1}^{l+1}-x_k^{l+1}+x_{k+1}^l-x_k^l-4ac^{-1}+4\chi_2}{4}}{\sqrt{b^2c^2-1}\sinh\frac{x_{k+1}^{l+1}-x_{k+1}^l+x_k^{l+1}-x_k^l-4bc+4\chi_1}{4}},\label{dis-gsG2}\\
		&\sinh\chi_1=\frac{1}{\sqrt{b^2c^2-1}},\ \sinh\chi_2=\frac{a}{\sqrt{c^2-a^2}}.\label{Dis-gsG2}
		\end{align}
		Here $u_k^l,\ x_k^l$ are defined in \eqref{fulldis-u} and \eqref{fulldis-x}. The $\tau$-functions $f_k^l$ and $g_k^l$ are defined in \eqref{dis-tau-def} with restriction \eqref{fd-res-1} for Gram-type determinants, or with restriction \eqref{fd-res-2} for Casorati-type determinants in Proposition \ref{fd-2dtl-sol}. Moreover, two conserved quantities in the full-discrete gsG equation read as
		\begin{align}
		&I_k^l=({c^2-a^2})\sinh^2\frac{x_{k+1}^l-x_k^l-2ac^{-1}+2\chi_2}{2}+c^{2}\sin^2\frac{u_{k+1}^l-u_k^l}{2}=a^2,\label{cons2}\\
		&J_k^l=\frac{{b^2c^2-1}}{b^2}\sinh^2\frac{x_k^{l+1}-x_k^l-2bc+2\chi_1}{2}+c^{2}\sin^2\frac{u_k^l+u_k^{l+1}}{2}=\frac{1}{b^2}.\label{cons1}
		\end{align}
	\end{theorem}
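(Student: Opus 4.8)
The plan is to verify \textbf{Theorem \ref{fd-gsg-1}} by tracking the bilinear equations \eqref{fulldis5}--\eqref{fulldis8} under the conjugation reduction and the substitutions \eqref{fulldis-u}--\eqref{fulldis-x}, essentially repeating the manipulations already carried out in Eqs.\ \eqref{dis-gsg2}--\eqref{dis-gsg5} but now interpreting the intermediate variables $\sigma,\tilde\sigma,\theta,\tilde\theta$ in terms of $u$ and $x$. Under the restriction \eqref{fd-res-1} (or \eqref{fd-res-2}) we have $\tilde f_k^l\Bumpeq\bar f_k^l$ and $\tilde g_k^l\Bumpeq\bar g_k^l$, hence $\tilde\sigma_k^l=\bar\sigma_k^l$ is the complex conjugate combination, and the half-sum and half-difference of $\sigma,\tilde\sigma$ reproduce $u_k^l$ and $\phi_k^l$ as in \eqref{fulldis-u}--\eqref{fulldis-phi}; similarly $\tilde\theta_k^l=\bar\theta_k^l$, so $\theta_k^l+\tilde\theta_k^l=-\ln\frac{g_k^l\bar g_k^l}{f_k^l\bar f_k^l}$ and $x_k^l$ is defined accordingly by \eqref{fulldis-x}.

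First I would derive \eqref{dis-gsG1}. Starting from \eqref{dis-gsg2}, the left side already reads $\frac1{bc}\sin\frac{(\sigma-\tilde\sigma)_k^{l+1}-(\sigma-\tilde\sigma)_k^l}{4}$; using $\sigma-\tilde\sigma=2\phi$ and then the relation between $\phi$ and $x$ this must be converted into $\frac1b\sin\frac{u_{k+1}^{l+1}-u_{k+1}^l-u_k^{l+1}+u_k^l}{4}$ — but note the index $k$ in \eqref{dis-gsg2} does not shift, so the passage to a genuinely two-dimensional lattice equation requires combining \eqref{dis-gsg2} with \eqref{dis-gsg4} (the $k$-direction analogue) to eliminate the auxiliary combination and land on \eqref{dis-gsG1}. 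Concretely, one eliminates $\tilde\theta$-type quantities using the hodograph relation: from \eqref{fulldis7}--\eqref{fulldis8} one reads off an expression for $x_{k+1}^l-x_k^l$ in terms of $u$'s and $\theta$'s, which after using \eqref{fulldis-x} gives the $\sinh$ factor $\sinh\frac{x_{k+1}^{l+1}-x_k^{l+1}+x_{k+1}^l-x_k^l-4ac^{-1}+4\chi_2}{4}$ appearing in $\Delta_k^l$, with $\chi_2$ forced by $\sinh\chi_2=a/\sqrt{c^2-a^2}$ so that the constant shift $-4ac^{-1}$ is absorbed correctly; the analogous step in the $l$-direction from \eqref{fulldis5}--\eqref{fulldis6} produces the $\sinh$ with $\chi_1$, $\sinh\chi_1=1/\sqrt{b^2c^2-1}$. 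Combining these gives the coefficient $\Delta_k^l$ in \eqref{dis-gsG2} and hence \eqref{dis-gsG1}. Equation \eqref{Dis-gsG1} is obtained the same way: take \eqref{dis-gsg3} and \eqref{dis-gsg5}, rewrite the $\theta+\tilde\theta$ combinations via \eqref{fulldis-x}, the $\theta-\tilde\theta$ combinations via \eqref{fulldis-phi} and hence via $u$, and use product-to-sum identities to recognize the right-hand side $-b^2c^2\sin\frac{\sum u}{2}\sin\frac{(\cdot)u}{2}$.

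For the conserved quantities \eqref{cons2}--\eqref{cons1} I would proceed algebraically rather than by telescoping: each is a single-lattice-step identity, so it suffices to show $I_k^l$ (resp.\ $J_k^l$) is independent of the corresponding shift and equals $a^2$ (resp.\ $1/b^2$) for one reference value. Using \eqref{fulldis7} divided by $f_{k+1}^l\tilde g_k^l$ gives $1-ac^{-1}\frac{\tilde f_k^l g_{k+1}^l}{f_k^l\tilde g_{k+1}^l}=(1-ac^{-1})e^{(\theta+\tilde\theta)_{k+1}^l-(\theta+\tilde\theta)_k^l}$-type relations (together with \eqref{fulldis8}), from which, after writing everything in terms of $x_{k+1}^l-x_k^l$ and $u_{k+1}^l-u_k^l$, one checks directly that $(c^2-a^2)\sinh^2\frac{x_{k+1}^l-x_k^l-2ac^{-1}+2\chi_2}{2}+c^2\sin^2\frac{u_{k+1}^l-u_k^l}{2}$ collapses to $a^2$; the key input is the $\chi_2$ normalization, which is exactly what makes the cross terms cancel. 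The $J_k^l$ identity is the $l$-direction counterpart using \eqref{fulldis5}--\eqref{fulldis6} with $\chi_1$.

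The main obstacle I anticipate is bookkeeping of the constant shifts: the linear-in-$k,l$ pieces $2kac^{-1}+2lbc$ built into \eqref{fulldis-x} and the hyperbolic-angle offsets $\chi_1,\chi_2$ must be matched precisely so that the $\tau$-function ratios (which carry the factors $(1-cp_i)^m$, $(1-ap_i)^{-k}$, $(1+bq_j^{-1})^{-l}$ from Proposition \ref{fd-2dtl-sol}) combine into clean $\sinh$ of the shifted $x$-differences; a sign error or an off-by-constant there would spoil both the equation and the conservation laws. A secondary technical point is that \eqref{dis-gsg2}--\eqref{dis-gsg5} as written involve $\sigma,\tilde\sigma$ at a single value of the \emph{other} discrete variable, so one must be careful that eliminating the auxiliary $\theta,\tilde\theta$ genuinely closes the system on $(u,x)$ alone; this is where combining the $k$- and $l$-direction bilinear pairs, rather than using them separately, is essential. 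Once these substitutions are done consistently, \eqref{dis-gsG1}--\eqref{Dis-gsG1} and the conserved quantities follow by elementary trigonometric and hyperbolic identities.
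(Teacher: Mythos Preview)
Your strategy is essentially the paper's: convert the four bilinear pairs \eqref{fulldis5}--\eqref{fulldis8} into four single-step relations in the variables $u$, $\phi$, $x$ (the paper's \eqref{Dis-gsg2}--\eqref{Dis-gsg5}), then combine the $k$- and $l$-direction relations to eliminate the auxiliary $\phi$ and close on $(u,x)$.

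However, your description of the elimination is muddled in two places. First, the auxiliary variable to eliminate is $\phi$ (equivalently $\theta-\tilde\theta$, since $\theta-\tilde\theta=\mathrm{i}\phi$ under the conjugation reduction), not ``$\tilde\theta$-type quantities''; the combination $\theta+\tilde\theta$ is \emph{not} auxiliary---up to the affine shift it is $x$ itself. Second, and more importantly, your phrase ``the $\theta-\tilde\theta$ combinations via \eqref{fulldis-phi} and hence via $u$'' suggests $\phi$ can be rewritten pointwise in terms of $u$, which is false: the relations between $\phi$ and $u$ (e.g.\ $\frac1b\sin\frac{\phi_k^{l+1}-\phi_k^l}{2}=c\sin\frac{u_k^{l+1}+u_k^l}{2}$ from \eqref{Dis-gsg2}) are one-directional difference relations, not pointwise identities. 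The paper eliminates $\phi$ by a specific mechanism you do not mention: shift each of the four relations \eqref{Dis-gsg2}--\eqref{Dis-gsg5} in the \emph{orthogonal} lattice direction, apply sum-to-product identities to obtain eight relations on the full $2\times2$ stencil, and then observe that certain pairs share a common $\phi$-factor (for instance $\sin\frac{\phi_{k+1}^{l+1}-\phi_{k+1}^l+\phi_k^{l+1}-\phi_k^l}{4}$ in \eqref{dis-gsg9} and \eqref{Dis-gsg10}) which cancels upon division. One such division yields \eqref{dis-gsg1}, another yields \eqref{dis-gsg8}, and combining these two gives \eqref{dis-gsG1}; for \eqref{Dis-gsG1} one instead \emph{multiplies} the paired relations \eqref{dis-gsg9}$\times$\eqref{Dis-gsg9} and \eqref{dis-gsg6}$\times$\eqref{Dis-gsg6}, whose $\phi$-sides coincide. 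This shift-and-cancel step is the actual content of the argument and is where the four-point $u$-differences in the theorem arise; your proposal glosses over it.

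For the conserved quantities your route is roundabout. The paper simply applies $\sin^2+\cos^2=1$ to the pair \eqref{Dis-gsg2}--\eqref{Dis-gsg3} (giving $J_k^l=1/b^2$) and to \eqref{Dis-gsg4}--\eqref{Dis-gsg5} (giving $I_k^l=a^2$); no shift-independence or reference-value argument is needed.
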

	
	\begin{proof}
		Note that we can rewrite \eqref{dis-gsg2}, \eqref{dis-gsg3}, \eqref{dis-gsg4} and \eqref{dis-gsg5} as
		
		\begin{align}
		\frac{1}{b}\sin\frac{\phi_k^{l+1}-\phi_{k}^l}{2}&=c\sin\frac{u_k^{l+1}+u_{k}^l}{2},\label{Dis-gsg2}\\
		\frac{1}{b}\cos\frac{\phi_k^{l+1}-\phi_k^l}{2}&=\frac{\sqrt{b^2c^2-1}}{b}\sinh\frac{x_k^{l+1}-x_k^l-2bc+2\chi_1}{2},\ \sinh\chi_1=\frac{1}{\sqrt{b^2c^2-1}},\label{Dis-gsg3}\\
		{a}\sin\frac{\phi_k^l+\phi_{k+1}^l}{2}&={c}\sin\frac{u_{k+1}^l-u_k^l}{2},\label{Dis-gsg4}\\
		{a}\cos\frac{\phi_k^l+\phi_{k+1}^l}{2}&=\sqrt{c^2-a^2}\sinh\frac{x_{k+1}^l-x_k^l-2ac^{-1}+2\chi_2}{2},\ \sinh\chi_2=\frac{a}{\sqrt{c^2-a^2}}.\label{Dis-gsg5}
		\end{align}
		By making a shift of $k\rightarrow k+1$ in \eqref{Dis-gsg2}, then adding and subtracting with \eqref{Dis-gsg2}, we obtain, respectively
		\begin{align}
		&\sin\frac{\phi_{k+1}^{l+1}-\phi_{k+1}^l+\phi_k^{l+1}-\phi_k^l}{4}\cos \frac{\phi_{k+1}^{l+1}-\phi_{k+1}^l-\phi_k^{l+1}+\phi_k^l}{4}\notag\\&=bc\sin \frac{u_{k+1}^{l+1}+u_{k+1}^l+u_k^{l+1}+u_k^l}{4} \cos\frac{u_{k+1}^{l+1}+u_{k+1}^l-u_k^{l+1}-u_k^l}{4},\label{dis-gsg9}
		\end{align}
		\begin{align}
		&\cos\frac{\phi_{k+1}^{l+1}-\phi_{k+1}^l+\phi_k^{l+1}-\phi_k^l}{4}\sin \frac{\phi_{k+1}^{l+1}-\phi_{k+1}^l-\phi_k^{l+1}+\phi_k^l}{4}\notag\\&=bc\cos \frac{u_{k+1}^{l+1}+u_{k+1}^l+u_k^{l+1}+u_k^l}{4} \sin\frac{u_{k+1}^{l+1}+u_{k+1}^l-u_k^{l+1}-u_k^l}{4}.\label{Dis-gsg9}
		\end{align}
		Similarly, from \eqref{Dis-gsg3}-\eqref{Dis-gsg5}, one can obtain
		\begin{align}
		&\sqrt{b^2c^2-1}\sinh\frac{x_{k+1}^{l+1}-x_{k+1}^l+x_k^{l+1}-x_k^l-4bc+4\chi_1}{4}\cosh\frac{x_{k+1}^{l+1}-x_{k+1}^l-x_k^{l+1}+x_k^l}{4}\notag\\
		&=\cos\frac{\phi_{k+1}^{l+1}-\phi_{k+1}^l+\phi_k^{l+1}-\phi_k^l}{4}\cos\frac{\phi_{k+1}^{l+1}-\phi_{k+1}^l-\phi_k^{l+1}+\phi_k^l}{4},\label{dis-gsg6}
		\end{align}
		\begin{align}
		&\sqrt{b^2c^2-1}\cosh\frac{x_{k+1}^{l+1}-x_{k+1}^l+x_k^{l+1}-x_k^l-4bc+4\chi_1}{4}\sinh\frac{x_{k+1}^{l+1}-x_{k+1}^l-x_k^{l+1}+x_k^l}{4}\notag\\
		&=-\sin\frac{\phi_{k+1}^{l+1}-\phi_{k+1}^l+\phi_k^{l+1}-\phi_k^l}{4}\sin\frac{\phi_{k+1}^{l+1}-\phi_{k+1}^l-\phi_k^{l+1}+\phi_k^l}{4},\label{Dis-gsg6}
		\end{align}
		\begin{align}
		&a\sin\frac{\phi_{k+1}^{l+1}+\phi_{k+1}^l+\phi_k^{l+1}+\phi_k^l}{4}\cos \frac{\phi_{k+1}^{l+1}-\phi_{k+1}^l+\phi_k^{l+1}-\phi_k^l}{4}\notag\\
		&=c\sin\frac{u_{k+1}^{l+1}+u_{k+1}^l-u_k^{l+1}-u_k^l}{4}\cos \frac{u_{k+1}^{l+1}-u_{k+1}^l-u_k^{l+1}+u_k^l}{4} ,\label{dis-gsg10}
		\end{align}
		\begin{align}
		&a\cos\frac{\phi_{k+1}^{l+1}+\phi_{k+1}^l+\phi_k^{l+1}+\phi_k^l}{4}\sin \frac{\phi_{k+1}^{l+1}-\phi_{k+1}^l+\phi_k^{l+1}-\phi_k^l}{4}\notag\\
		&=c\cos\frac{u_{k+1}^{l+1}+u_{k+1}^l-u_k^{l+1}-u_k^l}{4}\sin \frac{u_{k+1}^{l+1}-u_{k+1}^l-u_k^{l+1}+u_k^l}{4} ,\label{Dis-gsg10}
		\end{align}
		and
		\begin{align}
		&\sqrt{c^2-a^2}\sinh \frac{x_{k+1}^{l+1}-x_k^{l+1}+x_{k+1}^l-x_k^l-4ac^{-1}+4\chi_2}{4}\cosh\frac{x_{k+1}^{l+1}-x_{k+1}^l-x_k^{l+1}+x_k^l}{4}\notag\\
		&=a\cos\frac{\phi_{k+1}^{l+1}+\phi_{k+1}^l+\phi_k^{l+1}+\phi_k^l}{4}\cos\frac{\phi_{k+1}^{l+1}-\phi_{k+1}^l+\phi_k^{l+1}-\phi_k^l}{4},\label{dis-gsg7}
		\end{align}
		\begin{align}
		&\sqrt{c^2-a^2}\cosh \frac{x_{k+1}^{l+1}-x_k^{l+1}+x_{k+1}^l-x_k^l-4ac^{-1}+4\chi_2}{4}\sinh\frac{x_{k+1}^{l+1}-x_{k+1}^l-x_k^{l+1}+x_k^l}{4}\notag\\
		&=-a\sin\frac{\phi_{k+1}^{l+1}+\phi_{k+1}^l+\phi_k^{l+1}+\phi_k^l}{4}\sin\frac{\phi_{k+1}^{l+1}-\phi_{k+1}^l+\phi_k^{l+1}-\phi_k^l}{4},\label{Dis-gsg7}
		\end{align}
		respectively. Thus, eqs. \eqref{dis-gsg9} and \eqref{Dis-gsg10} give
		\begin{align}
		&\frac{1}{a b} \sin \frac{u_{k+1}^{l+1}-u_{k+1}^l-u_k^{l+1}+u_k^l}{4}\cos\frac{\phi_{k+1}^{l+1}-\phi_{k+1}^l-\phi_k^{l+1}+\phi_k^l}{4}\notag \\
		&=\sin \frac{u_{k+1}^{l+1}+u_{k+1}^l+u_k^{l+1}+u_k^l}{4}\cos\frac{\phi_{k+1}^{l+1}+\phi_{k+1}^l+\phi_k^{l+1}+\phi_k^l}{4}.\label{dis-gsg1}
		\end{align}
		Eqs. \eqref{dis-gsg6} and \eqref{dis-gsg7} lead to
		\begin{align}
		&a\sqrt{b^2c^2-1}\sinh\frac{x_{k+1}^{l+1}-x_{k+1}^l+x_k^{l+1}-x_k^l-4bc+4\chi_1}{4}\cos\frac{\phi_{k+1}^{l+1}+\phi_{k+1}^l+\phi_k^{l+1}+\phi_k^l}{4}\notag\\
		&=\sqrt{c^2-a^2}\sinh \frac{x_{k+1}^{l+1}-x_k^{l+1}+x_{k+1}^l-x_k^l-4ac^{-1}+4\chi_2}{4} \cos\frac{\phi_{k+1}^{l+1}-\phi_{k+1}^l-\phi_k^{l+1}+\phi_k^l}{4}.\label{dis-gsg8}
		\end{align}
		 A substitution of \eqref{dis-gsg8} into \eqref{dis-gsg1} leads to
		\begin{align}
		\frac{1}{b}\sin\frac{u_{k+1}^{l+1}-u_{k+1}^l-u_k^{l+1}+u_k^l}{4}=\Delta_k^l
		\sin \frac{u_{k+1}^{l+1}+u_{k+1}^l+u_k^{l+1}+u_k^l}{4},\label{fdis-gsG1}
		\end{align}
		with
		\begin{align}
		\Delta_k^l=\frac{\sqrt{c^2-a^2}\sinh \frac{x_{k+1}^{l+1}-x_k^{l+1}+x_{k+1}^l-x_k^l-4ac^{-1}+4\chi_2}{4}}{\sqrt{b^2c^2-1}\sinh\frac{x_{k+1}^{l+1}-x_{k+1}^l+x_k^{l+1}-x_k^l-4bc+4\chi_1}{4}}.\label{fdis-gsG2}
		\end{align}
		On the other hand, by multiplying  \eqref{dis-gsg9} and \eqref{Dis-gsg9}, \eqref{dis-gsg6} and \eqref{Dis-gsg6}, we obtain, respectively
		\begin{align}
		&b^2c^2\sin \frac{u_{k+1}^{l+1}+u_{k+1}^l+u_k^{l+1}+u_k^l}{2} \sin\frac{u_{k+1}^{l+1}+u_{k+1}^l-u_k^{l+1}-u_k^l}{2}\notag\\&=\sin\frac{\phi_{k+1}^{l+1}-\phi_{k+1}^l+\phi_k^{l+1}-\phi_k^l}{2}\sin \frac{\phi_{k+1}^{l+1}-\phi_{k+1}^l-\phi_k^{l+1}+\phi_k^l}{2},\label{dis-gsg11}
		\end{align}
		\begin{align}
		&({b^2c^2-1})\sinh\frac{x_{k+1}^{l+1}-x_{k+1}^l+x_k^{l+1}-x_k^l-4bc+4\chi_1}{2}\sinh\frac{x_{k+1}^{l+1}-x_{k+1}^l-x_k^{l+1}+x_k^l}{2}\notag\\
		&=-\sin\frac{\phi_{k+1}^{l+1}-\phi_{k+1}^l+\phi_k^{l+1}-\phi_k^l}{2}\sin\frac{\phi_{k+1}^{l+1}-\phi_{k+1}^l-\phi_k^{l+1}+\phi_k^l}{2},\label{Dis-gsg11}
		\end{align}
		which leads to exactly \eqref{Dis-gsG1} by eliminating the right side of the equations.
		Meanwhile, from \eqref{Dis-gsg2}-\eqref{Dis-gsg5}, we have
		\begin{align}
		&J_k^l=\frac{{b^2c^2-1}}{b^2}\sinh^2\frac{x_k^{l+1}-x_k^l-2bc+2\chi_1}{2}+c^{2}\sin^2\frac{u_k^l+u_k^{l+1}}{2}=\frac{1}{b^2},\label{dcons1}\\
		&I_k^l=({c^2-a^2})\sinh^2\frac{x_{k+1}^l-x_k^l-2ac^{-1}+2\chi_2}{2}+c^{2}\sin^2\frac{u_{k+1}^l-u_k^l}{2}=a^2.\label{dcons2}
		\end{align}
		Here $a^2$ and $\frac{1}{b^2}$ are constants, thus equations \eqref{dcons1} and \eqref{dcons2} actually give conserved quantities.
		The proof is complete.
	\end{proof}
	
	\subsection{Fully discretization of the gsG equation with $\nu=1$}
	
	In order to construct the fully discrete analogue of the gsG equation with $\nu=1$, we choose the restriction as
	\begin{align}
	c=\lambda\mathrm{i},\,c_{ij}=\mathrm{i}\sqrt{\frac{1-cp_i}{1+cq_j}}\delta_{ij},\label{fd-res1}
	\end{align}
	for Gram-type solution, or
	\begin{align}
	c=\lambda\mathrm{i},\,d_{i}=\mathrm{i} {\sqrt{\frac{1-cp_i}{1-cq_i}}}
	c_i,
	\label{fd-res2}
	\end{align}
	for Casorati-type, which implies
	\begin{align}
	\tilde{f}_k^l\Bumpeq\bar{g}_k^l,\ \tilde{g}_k^l\Bumpeq\bar{f}_k^l.
	\end{align}
	Next we introduce dependent variable transformations
	\begin{align}
	&u_k^l=\frac{1}{2}(\sigma_k^l+\tilde{\sigma}_k^{l})=\mathrm{i}\ln\frac{\bar{f}_k^l\bar{g}_k^l}{{f}_k^l{g}_k^l},\label{fulldis-1u}\\
	&\varphi_k^l=\frac{1}{2\mathrm{i}}(\sigma_k^l-\tilde{\sigma}_k^{l})=\ln\frac{\bar{g}_k^l{g}_k^l}{{f}_k^l\bar{f}_k^l},\label{fulldis-1phi}
	\end{align}
	and
	\begin{align}
	&\tilde{x}_k^l=2ka\lambda^{-1}-2lb\lambda-\mathrm{i}(\theta_k^l+\tilde{\theta}_k^l)=2ka\lambda^{-1}-2lb\lambda+\mathrm{i}\ln\frac{\bar{f}_k^lg_k^l}{f_k^l\bar{g}_k^l},\label{fulldis-1x}\\
	&\tilde{t}_k^l=2lb\lambda.
	\end{align}
	We can construct the fully discrete analogue of the gsG equation with $\nu =1$ through the following theorem.
	\begin{theorem}\label{fdis-nu1}
		The fully discrete analogue of the gsG equation with $\nu=1$ is of the form
		\begin{align}
		&\frac{1}{b} \sin \frac{u_{k+1}^{l+1}-u_{k+1}^l-u_k^{l+1}+u_k^l}{4}
		=\tilde{\Delta}_k^l\sin \frac{u_{k+1}^{l+1}+u_{k+1}^l+u_k^{l+1}+u_k^l}{4},\label{dis-1gsG1}
		\end{align}
		\begin{align}
		&({1+b^2\lambda^2})\sin \frac{\tilde{x}_{k+1}^{l+1}-\tilde{x}_{k+1}^l+\tilde{x}_k^{l+1}-\tilde{x}_k^l+4b\lambda+4\omega_1}{2}\sin\frac{\tilde{x}_{k+1}^{l+1}-\tilde{x}_{k+1}^l-\tilde{x}_k^{l+1}+\tilde{x}_k^l}{2}\notag\\
		&=b^2\lambda^2\sin \frac{u_{k+1}^{l+1}+u_{k+1}^l+u_k^{l+1}+u_{k}^l}{2}
		\sin \frac{u_{k+1}^{l+1}+u_{k+1}^l-u_k^{l+1}-u_{k}^l}{2},
		\end{align}
		with
		\begin{align}
		&	\tilde{\Delta}_k^l=\frac{\sqrt{a^2+\lambda^2}\sin\frac{\tilde{x}_{k+1}^{l+1}-\tilde{x}_k^{l+1}+\tilde{x}_{k+1}^l-\tilde{x}_k^l-4a\lambda^{-1}+4\omega_2}{4}}{\sqrt{1+b^2\lambda^2}\sin \frac{\tilde{x}_{k+1}^{l+1}-\tilde{x}_{k+1}^l+\tilde{x}_k^{l+1}-\tilde{x}_k^l+4b\lambda+4\omega_1}{4}},\label{dis-1gsG2}\\
		&\sin\omega_1=\frac{1}{\sqrt{b^2\lambda^2+1}},\ \sin\omega_2=\frac{a}{\sqrt{a^2+\lambda^2}}.
		\end{align}
		Here $u_k^l,\ x_k^l$ are defined in \eqref{fulldis-1u} and \eqref{fulldis-1x}. The $\tau$-functions $f_k^l$ and $g_k^l$ are defined in \eqref{dis-tau-def} with restriction \eqref{fd-res1} for Gram-type determinants, or with restriction \eqref{fd-res2} for Casorati-type determinants in Proposition \ref{fd-2dtl-sol}. Moreover, there are two conserved quantities in the full-discrete gsG equation read as
		\begin{align}
		&\tilde{I}_k^l=\left(a\cos\frac{\tilde{x}_{k+1}^l-\tilde{x}_k^l-2a\lambda^{-1}}{2}+\lambda\sin\frac{\tilde{x}_{k+1}^l-\tilde{x}_k^l-2a\lambda^{-1}}{2}\right)^2-\lambda^2\sin^2\frac{u_{k+1}^l-u_k^l}{2}=a^2,\\
		&\tilde{J}_k^l=\left(\frac{1}{\lambda}\cos\frac{\tilde{x}_k^{l+1}-\tilde{x}_k^l+2b\lambda}{2}+\lambda\sin\frac{\tilde{x}_k^{l+1}-\tilde{x}_k^l+2b\lambda}{2}\right)^2-\lambda^2\sin^2\frac{u_k^{l+1}+u_{k}^l}{2}=\frac{1}{b^2}.
		\end{align}
	\end{theorem}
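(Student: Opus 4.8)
\medskip
\noindent\textbf{Proof plan.}
The plan is to reproduce, step by step, the derivation that led to Theorem~\ref{fd-gsg-1}, but with the reduction $c=\lambda\mathrm{i}$, $\tilde f_k^l\Bumpeq\bar g_k^l$, $\tilde g_k^l\Bumpeq\bar f_k^l$ replacing the one used there. The first observation is that the four relations \eqref{dis-gsg2}--\eqref{dis-gsg5} are consequences of the bilinear equations \eqref{dis-2dtl2}--\eqref{dis-2dtl3} (equivalently of \eqref{fulldis5}--\eqref{fulldis8}) and involve only the intermediate quantities $\sigma_k^l,\tilde\sigma_k^l,\theta_k^l,\tilde\theta_k^l$; they hold on the determinant solutions of Proposition~\ref{fd-2dtl-sol} after the $2$-reduction, regardless of which complex-conjugation constraint is later imposed. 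So I would take \eqref{dis-gsg2}--\eqref{dis-gsg5} as the starting point.

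Next I would rewrite \eqref{dis-gsg2}--\eqref{dis-gsg5} in terms of $u_k^l$, $\varphi_k^l$ and $\tilde x_k^l$. From \eqref{fulldis-1u}--\eqref{fulldis-1x} the reduction gives $\sigma_k^l+\tilde\sigma_k^l=2u_k^l$, $\sigma_k^l-\tilde\sigma_k^l=2\mathrm{i}\varphi_k^l$, $\theta_k^l-\tilde\theta_k^l=-\varphi_k^l$ and $\theta_k^l+\tilde\theta_k^l=-\mathrm{i}(2ka\lambda^{-1}-2lb\lambda-\tilde x_k^l)$. Substituting these in, putting $c=\lambda\mathrm{i}$, using $\sin(\mathrm{i}z)=\mathrm{i}\sinh z$, $\sinh(\mathrm{i}z)=\mathrm{i}\sin z$, $\cosh(\mathrm{i}z)=\cos z$, and recasting the constant-coefficient combinations $\cos(\cdot)+b\lambda\sin(\cdot)=\sqrt{1+b^2\lambda^2}\,\sin(\cdot+\omega_1)$ and $a\cos(\cdot)+\lambda\sin(\cdot)=\sqrt{a^2+\lambda^2}\,\sin(\cdot+\omega_2)$, the four equations collapse to the ``one-step'' identities
\begin{align*}
&\frac1b\sinh\frac{\varphi_k^{l+1}-\varphi_k^l}{2}=\lambda\sin\frac{u_k^{l+1}+u_k^l}{2},\qquad
\sqrt{1+b^2\lambda^2}\,\sin\frac{\tilde x_k^{l+1}-\tilde x_k^l+2b\lambda+2\omega_1}{2}=\cosh\frac{\varphi_k^{l+1}-\varphi_k^l}{2},\\
&a\sinh\frac{\varphi_{k+1}^l+\varphi_k^l}{2}=\lambda\sin\frac{u_{k+1}^l-u_k^l}{2},\qquad
\sqrt{a^2+\lambda^2}\,\sin\frac{\tilde x_{k+1}^l-\tilde x_k^l-2a\lambda^{-1}+2\omega_2}{2}=a\cosh\frac{\varphi_{k+1}^l+\varphi_k^l}{2},
\end{align*}
with $\sin\omega_1=1/\sqrt{1+b^2\lambda^2}$ and $\sin\omega_2=a/\sqrt{a^2+\lambda^2}$; these are the $\nu=1$ analogues of \eqref{Dis-gsg2}--\eqref{Dis-gsg5}. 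Squaring the members of each pair and subtracting, using $\cosh^2-\sinh^2=1$, gives at once the two conserved quantities $\tilde I_k^l=a^2$ and $\tilde J_k^l=1/b^2$, after regrouping the $\cos,\sin$ combination into the stated squared form.

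Finally, from these four identities the two evolution equations follow exactly as in the proof of Theorem~\ref{fd-gsg-1}: shifting $k\mapsto k+1$ in the two $u$--$\varphi$ identities and forming sums and differences turns them into products of sines (resp.\ cosines) of half-sums and half-differences of the $\varphi$-values equated to the analogous products of $u$-values; the same operation on the two $\tilde x$--$\varphi$ identities (now carrying the extra phase shifts $\omega_1,\omega_2$) gives products of $\sinh/\cosh$ of the $\tilde x$-combinations equated to products of $\cos/\sin$ of the $\varphi$-combinations; eliminating the $\varphi$-combinations between the ``$u$-block'' and the ``$\tilde x$-block'' produces \eqref{dis-1gsG1} with $\tilde\Delta_k^l$ as in \eqref{dis-1gsG2}, while multiplying the two anti-symmetric relations eliminates $\varphi$ and yields the second bilinear equation.

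The hardest part is the algebra of the middle two steps. One must track every factor of $\mathrm{i}$ generated by $c=\lambda\mathrm{i}$ so that the final expressions are manifestly real, and one has to accommodate the asymmetry between the spatial ($k$) and temporal ($l$) directions --- the $\sinh$-term in \eqref{dis-gsg3} is multiplied by $bc$ while in \eqref{dis-gsg5} it is multiplied by $1$ --- which is why the phase shifts $\omega_1$ and $\omega_2$ enter the two directions in slightly different ways. The decisive check is that $\varphi_k^l$ drops out cleanly in the elimination, so that the final equations close on $u_k^l$ and $\tilde x_k^l$ alone.
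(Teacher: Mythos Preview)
Your proposal is correct and follows essentially the same route as the paper's proof: it, too, starts by translating \eqref{dis-gsg2}--\eqref{dis-gsg5} into the four one-step identities in $u_k^l,\varphi_k^l,\tilde x_k^l$ (the paper's \eqref{Dis-1gsg2}--\eqref{Dis-1gsg5}), then uses shift-and-combine to form product identities, eliminates $\varphi$ between the $u$-block and the $\tilde x$-block to get \eqref{dis-1gsG1}--\eqref{dis-1gsG2}, multiplies the symmetric/antisymmetric pairs to get the second equation, and reads off the conserved quantities from the one-step identities. One small wording fix: for the third one-step identity (at fixed $l$) the correct shift is $l\mapsto l+1$, not $k\mapsto k+1$, so that all four indices $k,k+1,l,l+1$ appear; apart from this the plan reproduces the paper's argument.
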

	\begin{remark}
		Fully discrete analogues of the generalized sG equation with $\nu=1$ can also be transformed into the case with $\nu=-1$ through
		\begin{align}
		\phi_k^l=\mathrm{i}\varphi_{k}^l,\ \tilde{x}_k^l=\mathrm{i}x_k^l,\ \tilde{t}=-\mathrm{i}t,\
		c=\lambda\mathrm{i}.
		\end{align}
	\end{remark}
	The detail proof of Theorem \ref{fdis-nu1} is similar to the case with $\nu=-1$, which is given in Appendix \ref{theo-fdis-nu1}.
	
	\subsection{Reduction to the discrete sG and discrete SP equation}
	
	In this section, we mainly investigate the reduction from the discrete gsG equation to the discrete sG equation \cite{fd-sg1,fd-sg2,fd-sg3} and the discrete SP equation \cite{dis-sp}. The $\tau$-functions, as well as the variable transformations of the discrete sG equation and the discrete SP equation, can be derived from those of the discrete gsG equation. The parameter $c$ (or $\lambda$) also plays an important role in the reduction of the discrete case.
	
	\subsubsection{Reduction to the discrete sG equation}
	{\bf (I) From the discrete gsG equation with $\nu=-1$ to the discrete sG equation.}
	In this part of reduction, we take $c\in\mathbb{R}$ as a small parameter. Then the elements of the $\tau$-function in Theorem \ref{fd-gsg-1} can be written as
	\begin{align}
	m_{ij}^{(n)}(k, l,1)&=\mathrm{i}\delta_{i j}+\frac{1}{p_i+p_j}\left(-\frac{p_i}{p_j}\right)^{n}\left(\frac{1-a p_i}{1+a p_j}\right)^{-k}\left(\frac{1-b p_i^{-1}}{1+b p_j^{-1}}\right)^{-l}\left(\frac{1-c p_i}{1+c p_j}\right)\notag\\
	&=\mathrm{i}\delta_{i j}+\frac{1}{p_i+p_j}\left(-\frac{p_i}{p_j}\right)^{n}\left(\frac{1-a p_i}{1+a p_j}\right)^{-k}\left(\frac{1-b p_i^{-1}}{1+b p_j^{-1}}\right)^{-l}(1-(p_i+p_j)c)+\mathnormal{O} (c^2)\notag\\
	&=m_{ij}^{(n)}(k, l,0)-(p_i+p_j)m_{ij}^{(n)}(k, l,0)c+\mathnormal{O}(c^2).
	\end{align}
	We can rewrite the dependent variable transformations and define new variable as
	\begin{align}
	&u_k^l=\mathrm{i}\ln\frac{\bar{f}_k^l\bar{g}_k^l}{{f}_k^l{g}_k^l}=2\mathrm{i}\ln \frac{\bar{f}_k^l}{{f}_k^l}+\mathnormal{O}(c),\label{fulldis-u-sg}\\
	&\phi_k^l=\mathrm{i}\ln\frac{\bar{f}_k^l{g}_k^l}{{f}_k^l\bar{g}_k^l}=\mathrm{i}\ln \frac{\bar{f}_k^l{f}_k^l}{f_k^l\bar{f}_k^l}+\mathnormal{O}(c)=\mathnormal{O}(c),\label{fulldis-phi-sg}\\
	&\hat{x}_k^l=cx_k^l=2ka+2lbc^2+c\ln\frac{g_k^l\bar{g}_k^l}{f_k^l\bar{f}_k^l}=2ka+\mathnormal{O}(c^2),\label{fulldis-x-sg}\\
	&\hat{t}_k^l=c^{-1}{t_k^l}=2lb.\label{fulldis-t-sg}
	\end{align}
	In the limit of $c\rightarrow0$, \eqref{fulldis-u-sg}-\eqref{fulldis-x-sg} lead to
	\begin{align}
	&u_k^l=2\mathrm{i}\ln \frac{\bar{f}_k^l}{{f}_k^l},\ \phi_k^l=0,\ \hat{x}_k^l=2ka.\label{fulldis-x-sg1}
	\end{align}
	The definition of $u_k^l$ corresponds to the dependent transformation of the discrete sG equation. Substituting \eqref{fulldis-u-sg}-\eqref{fulldis-x-sg} into \eqref{dis-gsG1} and \eqref{dis-gsG2} and taking $c\rightarrow0$, one can obtain
	\begin{align}
	\frac{1}{b}\sin\frac{u_{k+1}^{l+1}-u_{k+1}^l-u_k^{l+1}+u_k^l}{4}=a
	\sin \frac{u_{k+1}^{l+1}+u_{k+1}^l+u_k^{l+1}+u_k^l}{4},\label{dis-sg}
	\end{align}
	which is just the discrete sG equation \cite{fd-sg1,fd-sg2,fd-sg3}.
	\\
	\\
	{\bf (II) {From the discrete gsG equation with $\nu=1$ to the discrete sG equation.}}
	In this part of reduction, we take $\lambda\in\mathbb{R}$ as a small parameter. Here we have
	\begin{align}
	m_{ij}^{(n)}(k, l,1)&=\mathrm{i}\sqrt{\frac{1-\mathrm{i}\lambda p_i}{1+\mathrm{i}\lambda q_j}}\delta_{ij}+\frac{1}{p_i+p_j}\left(-\frac{p_i}{p_j}\right)^{n}\left(\frac{1-a p_i}{1+a p_j}\right)^{-k}\left(\frac{1-b p_i^{-1}}{1+b p_j^{-1}}\right)^{-l}\left(\frac{1-\mathrm{i}\lambda p_i}{1+\mathrm{i}\lambda p_j}\right)\notag\\
	&=\mathrm{i}\delta_{i j}+\frac{1}{p_i+p_j}\left(-\frac{p_i}{p_j}\right)^{n}\left(\frac{1-a p_i}{1+a p_j}\right)^{-k}\left(\frac{1-b p_i^{-1}}{1+b p_j^{-1}}\right)^{-l}+\mathnormal{O} (\lambda)\notag\\
	&=\hat{m}_{ij}^{(n)}(k, l,0)+\mathnormal{O}(\lambda),\\
	m_{ij}^{(n)}(k, l,0)&=\mathrm{i}\sqrt{\frac{1-\mathrm{i}\lambda p_i}{1+\mathrm{i}\lambda q_j}}\delta_{ij}+\frac{1}{p_i+p_j}\left(-\frac{p_i}{p_j}\right)^{n}\left(\frac{1-a p_i}{1+a p_j}\right)^{-k}\left(\frac{1-b p_i^{-1}}{1+b p_j^{-1}}\right)^{-l}\notag\\
	&=\hat{m}_{ij}^{(n)}(k, l,0)+\mathnormal{O}(\lambda).
	\end{align}
	We can rewrite the dependent variable transformations and define new variable as
	\begin{align}
	u_k^l=2\mathrm{i}\ln\frac{\bar{\hat{f}}_k^l}{{\hat{f}}_k^l}+\mathnormal{O}(\lambda),\
	\varphi_k^l=\mathnormal{O}(\lambda),\
	\hat{x}_k^l=\lambda\tilde{x}_k^l=2ka+\mathnormal{O}(\lambda^2),\
	\hat{t}_k^l=\frac{\tilde{t}_k^l}{\lambda}=2lb.\label{fulldis-1t-sg}
	\end{align}
	In the limit of $\lambda\rightarrow0$, eqs. \eqref{fulldis-1t-sg} and \eqref{dis-1gsG1} converge to
	\begin{align}
	u_k^l=2\mathrm{i}\ln\frac{\bar{\hat{f}}_k^l}{{\hat{f}}_k^l},\
	\varphi_k^l=0,\
	&\hat{x}_k^l=2ka,\ \hat{t}_k^l=2lb,\\
	\frac{1}{ab} \sin \frac{u_{k+1}^{l+1}-u_{k+1}^l-u_k^{l+1}+u_k^l}{4}
	&=\sin \frac{u_{k+1}^{l+1}+u_{k+1}^l+u_k^{l+1}+u_k^l}{4},\label{dis-1sG1}
	\end{align}
	respectively. The latter is exactly the fully discrete sG equation.
	
	\subsubsection{Reduction to the discrete SP equation}
	
	{\bf (I) From the discrete gsG equation with $\nu=-1$ to the discrete SP equation with $\sigma=-1$.}
	In this part of reduction, we take $c$ as a big parameter (or $\epsilon=\frac{1}{c}$ small). Let us introduce a new auxiliary parameter $s$, and redefine the matrix elements of the $\tau$ function as
	\begin{align}
	\tilde{m}_{ij}^{(n)}(k,l,m)=\mathrm{i}\delta_{i j}+\frac{1}{p_i+p_j}\left(-\frac{p_i}{p_j}\right)^{n}\left(\frac{1-a p_i}{1+a p_j}\right)^{-k}\left(\frac{1-b p_i^{-1}}{1+b p_j^{-1}}\right)^{-l}\left(\frac{1-c p_i}{1+c p_j}\right)^{m}e^{\left(\frac{1}{2p_i}+\frac{1}{2p_j}\right)s}.\label{new-tau}
	\end{align}
	It is obviously that $	\tau_n(k, l,m)=\left|\tilde{m}_{ij}^{n}(k, l,m)\right|_{1 \leqslant i, j \leqslant N}$ is still the solution of \eqref{dis-2dtl2}-\eqref{dis-2dtl3}.
	If we take $c$ as a big parameter (or $\epsilon=\frac{1}{c}$ small), the elements of the $\tau$ function can be written as
	\begin{align}
	\tilde{m}_{ij}^{(n)}(k, l,1)&=\mathrm{i}\delta_{i j}+\frac{1}{p_i+p_j}\left(-\frac{p_i}{p_j}\right)^{n}\left(\frac{1-a p_i}{1+a p_j}\right)^{-k}\left(\frac{1-b p_i^{-1}}{1+b p_j^{-1}}\right)^{-l}\left(\frac{1-c p_i}{1+c p_j}\right)^{1}e^{\left(\frac{1}{2p_i}+\frac{1}{2p_j}\right)s}\notag\\
	&= \mathrm{i}\delta_{i j}+\frac{1}{p_i+p_j}\left(-\frac{p_i}{p_j}\right)^{n}\left(\frac{1-a p_i}{1+a p_j}\right)^{-k}\left(\frac{1-b p_i^{-1}}{1+b p_j^{-1}}\right)^{-l}e^{\left(\frac{1}{2p_i}+\frac{1}{2p_j}\right)s}\left(-\frac{p_i}{p_j}+\frac{p_i+p_j}{p_j^2}\epsilon\right)+\mathnormal{O} (\epsilon^2)\notag\\
	&=m_{ij}^{(n+1)}(k, l,0)-2\epsilon\frac{\dif}{\dif s}m_{ij}^{(n+1)}(k, l,0)+\mathnormal{O}(\epsilon^2),
	\end{align}
	from which one can obtain
	\begin{align}
	&g_k^l=\tau_0(k,l,1)=\bar{f}_k^l-2\epsilon\frac{\dif}{\dif s}\bar{f}_k^l+\mathnormal{O}(\epsilon^2),\\
	&\bar{g}_k^l=\tau_1(k,l,1)= {f}_k^l-2\epsilon\frac{\dif}{\dif s}{f}_k^l+\mathnormal{O}(\epsilon^2).
	\end{align}
	We rewrite the dependent variable transformations and introduce new variable as
	\begin{align}
	&\epsilon \hat{u}_k^l=u_k^l=\mathrm{i}\ln\frac{\bar{f}_k^l\bar{g}_k^l}{{f}_k^l{g}_k^l}=\mathrm{i}\ln \frac{\bar{f}_k^l(f_k^l-2\epsilon\frac{\dif}{\dif s}{f}_k^l)}{{f}_k^l(\bar{f}_k^l-2\epsilon\frac{\dif}{\dif s}\bar{f}_k^l)}+\mathnormal{O}(\epsilon^2),\label{fulldis-u-sp}\\
	&\phi_k^l=\mathrm{i}\ln\frac{\bar{f}_k^l{g}_k^l}{{f}_k^l\bar{g}_k^l}=\mathrm{i}\ln \frac{\bar{f}_k^l(\bar{f}_k^l-2\epsilon\frac{\dif}{\dif s}\bar{f}_k^l)}{{f}_k^l(f_k^l-2\epsilon\frac{\dif}{\dif s}{f}_k^l)}+\mathnormal{O}(\epsilon^2),\label{fulldis-phi-sp}\\
	&\epsilon \hat{x}_k^l+\frac{2lb}{\epsilon}=x_k^l=2ka\epsilon+\frac{2lb}{\epsilon}+\ln\frac{(\bar{f}_k^l-2\epsilon\frac{\dif}{\dif s}\bar{f}_k^l)(f_k^l-2\epsilon\frac{\dif}{\dif s}{f}_k^l)}{f_k^l\bar{f}_k^l}+\mathnormal{O}(\epsilon^2),\label{fulldis-x-sp}\\
	&\hat{t}_k^l=\epsilon t_k^l=2lb.\label{fulldis-t-sp}
	\end{align}
	In the limit of $\epsilon\rightarrow0$, \eqref{fulldis-u-sp}-\eqref{fulldis-x-sp} lead to
	\begin{align}
	&\hat{u}_k^l=2\mathrm{i}\left(\ln\frac{\bar{f}_k^l}{f_k^l}\right)_s,\label{fulldis-u-sp1}\\
	&\phi_k^l=2\mathrm{i}\ln\frac{\bar{f}_k^l}{f_k^l},\label{fulldis-phi-sp1}\\
	&\hat{x}_k^l=2ka-2(\ln(f_k^l\bar{f}_k^l))_s.\label{fulldis-x-sp1}
	\end{align}
	Substituting \eqref{fulldis-u-sp}-\eqref{fulldis-x-sp} into \eqref{dis-gsG1}-\eqref{dis-gsG2} and taking $\epsilon\rightarrow0$, one can obtain
	\begin{align}
	&({\hat{x}_{k+1}^{l+1}-\hat{x}_{k+1}^l+\hat{x}_k^{l+1}-\hat{x}_k^l}+\frac{4}{b})({\hat{u}_{k+1}^{l+1}-\hat{u}_{k+1}^l-\hat{u}_k^{l+1}+\hat{u}_k^l})\notag\\
	&=(\hat{x}_{k+1}^{l+1}-\hat{x}_k^{l+1}+\hat{x}_{k+1}^l-\hat{x}_k^l)(\hat{u}_{k+1}^{l+1}+\hat{u}_{k+1}^l+\hat{u}_k^{l+1}+\hat{u}_k^l),\label{dis-sp1}
	\end{align}
	\begin{align}
	&({\hat{x}_{k+1}^{l+1}-\hat{x}_{k+1}^l+\hat{x}_k^{l+1}-\hat{x}_k^l}+\frac{4}{b})({\hat{x}_{k+1}^{l+1}-\hat{x}_{k+1}^l-\hat{x}_k^{l+1}+\hat{x}_k^l})\notag\\
	&=-(\hat{u}_{k+1}^{l+1}+\hat{u}_{k+1}^l+\hat{u}_k^{l+1}+\hat{u}_k^l)({\hat{u}_{k+1}^{l+1}+\hat{u}_{k+1}^l-\hat{u}_k^{l+1}-\hat{u}_k^l}),\label{dis-sp2}
	\end{align}
	which lead to the discrete SP equations. In addition, conserved quantities $I_k^l$ and $J_k^l$ are recast into
	\begin{align}
	&J_k^l=\left(\frac{1}{b}+\frac{\hat{x}_k^{l+1}-\hat{x}_k^l}{2}\right)^2+\left(\frac{\hat{u}_k^l+\hat{u}_k^{l+1}}{2}\right)^2,\\
	&I_k^l=\left(\frac{\hat{x}_{k+1}^l-\hat{x}_k^l}{2}\right)^2+\left(\frac{\hat{u}_{k+1}^l-\hat{u}_k^l}{2}\right)^2,
	\end{align}
	which correspond to the conserved quantities of the discrete SP equation derived in \cite{dis-sp}.
	
	{\bf (II) From the discrete gsG equation with $\nu=1$ to the discrete SP equation with $\sigma=1$.}
	Here we give a discrete analog of the SP equation with $\sigma=1$ through the variable transformation and the scaling limit from the discrete gsG equation with $\nu=1$. Similar to the case with $\nu=-1$, we take $\epsilon=\frac{1}{\lambda}$ as a small parameter. Introducing an auxiliary parameter $s$, and redefining the matrix elements of the $\tau$ function as shown in \eqref{new-tau}, we have
	\begin{align}
	&\tilde{m}_{ij}^{(n)}(k, l,0)\propto\hat{m}_{ij}^{(n+1)}(k, l)-\mathrm{i}\epsilon\frac{\dif}{\dif s}\hat{m}_{ij}^{(n+1)}(k, l)+\mathnormal{O}(\epsilon^2),\\
	&\tilde{m}_{ij}^{(n+1)}(k, l,0)\propto\hat{m}_{ij}^{(n)}(k, l)-\mathrm{i}\epsilon\frac{\dif}{\dif s}\hat{m}_{ij}^{(n)}(k, l)+\mathnormal{O}(\epsilon^2),
	\end{align}
	where
	\begin{align}
	\hat{m}_{ij}^{(n)}(k,l)=\delta_{i j}+\frac{1}{p_i+p_j}\left(-\frac{p_i}{p_j}\right)^{n}\left(\frac{1-a p_i}{1+a p_j}\right)^{-k}\left(\frac{1-b p_i^{-1}}{1+b p_j^{-1}}\right)^{-l}e^{\left(\frac{1}{2p_i}+\frac{1}{2p_j}\right)s}.
	\end{align}
	Thus one can obtain
	\begin{align}
	&f_k^l=\hat{f}_k^l-{\mathrm{i}\epsilon}\frac{\dif}{\dif \tau}\hat{f}_k^l+\mathnormal{O}(\epsilon^2),\ \bar{f}_k^l=\hat{f}_k^l+{\mathrm{i}\epsilon}\frac{\dif}{\dif \tau}\hat{f}_k^l+\mathnormal{O}(\epsilon^2),\\ &g_k^l=\hat{g}_k^l+{\mathrm{i}\epsilon}\frac{\dif}{\dif \tau}\hat{g}_k^l+\mathnormal{O}(\epsilon^2),\ \bar{g}_k^l=\hat{g}_k^l-{\mathrm{i}\epsilon}\frac{\partial}{\partial \tau}\hat{g}_k^l+\mathnormal{O}(\epsilon^2),\\
	&\hat{f}_k^l=\left|\hat{m}_{ij}^{(n+1)}(k, l)\right|_{1 \leqslant i, j \leqslant N},\ \hat{g}_k^l=\left|\hat{m}_{ij}^{(n)}(k, l)\right|_{1 \leqslant i, j \leqslant N}.
	\end{align}
	Similar to the continuou, we rewrite the dependent variable transformations and introduce new variables, then under the limit $\epsilon \rightarrow0$, we have
	\begin{align}
	\hat{u}_k^l=2\left(\ln\frac{\hat{g}_k^l}{\hat{f}_k^l}\right)_s,\ \varphi_k^l=2\ln\frac{\hat{g}_k^l}{\hat{f}_k^l},\ \hat{x}_k^l=2ka-2(\ln(\hat{f}_k^l\hat{g}_k^l))_s,\ \hat{t}_k^l=2lb.\label{fulldis-1x-sp1}
	\end{align}
	Moreover, one can propose a fully discrete analogue of the SP equation with $\sigma=1$
	\begin{align}
	&({\hat{x}_{k+1}^{l+1}-\hat{x}_{k+1}^l+\hat{x}_k^{l+1}-\hat{x}_k^l}+\frac{4}{b})({\hat{u}_{k+1}^{l+1}-\hat{u}_{k+1}^l-\hat{u}_k^{l+1}+\hat{u}_k^l})\notag\\
	&=(\hat{x}_{k+1}^{l+1}-\hat{x}_k^{l+1}+\hat{x}_{k+1}^l-\hat{x}_k^l)(\hat{u}_{k+1}^{l+1}+\hat{u}_{k+1}^l+\hat{u}_k^{l+1}+\hat{u}_k^l),\label{dis-1sp1}
	\end{align}
	\begin{align}
	&({\hat{x}_{k+1}^{l+1}-\hat{x}_{k+1}^l+\hat{x}_k^{l+1}-\hat{x}_k^l}+\frac{4}{b})({\hat{x}_{k+1}^{l+1}-\hat{x}_{k+1}^l-\hat{x}_k^{l+1}+\hat{x}_k^l})\notag\\
	&=(\hat{u}_{k+1}^{l+1}+\hat{u}_{k+1}^l+\hat{u}_k^{l+1}+\hat{u}_k^l)({\hat{u}_{k+1}^{l+1}+\hat{u}_{k+1}^l-\hat{u}_k^{l+1}-\hat{u}_k^l}),\label{dis-1sp2}
	\end{align}
	and conserved quantities $\tilde{I}_k^l$ and $\tilde{J}_k^l$ are recast into
	\begin{align}
	&\tilde{J}_k^l=\left(\frac{1}{b}+\frac{\hat{x}_k^{l+1}-\hat{x}_k^l}{2}\right)^2-\left(\frac{\hat{u}_k^l+\hat{u}_k^{l+1}}{2}\right)^2,\\
	&\tilde{I}_k^l=(\frac{\hat{x}_{k+1}^l-\hat{x}_k^l}{2})^2-\left(\frac{\hat{u}_{k+1}^l-\hat{u}_k^l}{2}\right)^2.
	\end{align}
	Here $N$-soliton solutions of the full-discrete SP equation with $\sigma=1$ also exhibits the singular nature.
	
	\section{The semi-discrete gsG equation}\label{sec4}
	\subsection{From the fully discrete gsG equation to the semi-discrete gsG equation}	
	In this section, we demonstrate that the proposed fully discrete gsG equation with $\nu=-1$ converges to the semi-discrete equation we obtained in \cite{gsg-F} in the continuous limit $b\rightarrow0$. Moreover, we give a semi-discrete gsG equation with $\nu=1$ through the same continuous limit from the fully discrete gsG equation with $\nu=1$.
	
	\subsubsection{The semi-discrete gsG equation with $\nu=-1$}\label{semi-1}
	Recall that
	\begin{align}
	\Delta_k^l&=\frac{\sqrt{c^2-a^2}\sinh \frac{x_{k+1}^{l+1}-x_k^{l+1}+x_{k+1}^l-x_k^l-4ac^{-1}+4\chi_2}{4}}{\sqrt{b^2c^2-1}\sinh\frac{x_{k+1}^{l+1}-x_{k+1}^l+x_k^{l+1}-x_k^l-4bc+4\chi_1}{4}}\notag\\
	&=\frac{\sqrt{c^2-a^2}\sinh \frac{x_{k+1}^{l+1}-x_k^{l+1}+x_{k+1}^l-x_k^l-4ac^{-1}+4\chi_2}{4}}{\cosh \frac{x_{k+1}^{l+1}-x_{k+1}^l+x_k^{l+1}-x_k^l-4bc}{4}+bc\sinh\frac{x_{k+1}^{l+1}-x_{k+1}^l+x_k^{l+1}-x_k^l-4bc}{4}}.
	\end{align}
	Obviously, as $b\rightarrow0$, we have
	\begin{align}
	&\cosh \frac{x_{k+1}^{l+1}-x_{k+1}^l+x_k^{l+1}-x_k^l-4bc}{4}+bc\sinh\frac{x_{k+1}^{l+1}-x_{k+1}^l+x_k^{l+1}-x_k^l-4bc}{4}\rightarrow 1,\\
	&\sqrt{c^2-a^2}\sinh \frac{x_{k+1}^{l+1}-x_k^{l+1}+x_{k+1}^l-x_k^l-4ac^{-1}+4\chi_2}{4}\rightarrow \sqrt{c^2-a^2}\sinh \frac{x_{k+1}-x_k-2ac^{-1}+2\chi_2}{2}.
	\end{align}
	Note the relation \eqref{cons2} holds. Thus we have
	\begin{align}
	\Delta_k^l\rightarrow \sqrt{a^2-c^2\sin^2\frac{u_{k+1}-u_k}{2}}=\frac{\Delta_k}{2}.
	\end{align}
	Furthermore, we can easily verify that
	\begin{align}
	\frac{1}{b}\sin\frac{u_{k+1}^{l+1}-u_{k+1}^l-u_k^{l+1}+u_k^l}{4}\rightarrow \frac{\dif}{\dif \tau} \frac{u_{k+1}-u_k}{2},
	\end{align}
	and
	\begin{align}
	\sin \frac{u_{k+1}^{l+1}+u_{k+1}^l+u_k^{l+1}+u_k^l}{4}\rightarrow \sin \frac{u_{k+1}+u_k}{2}.
	\end{align}
	Thus equation \eqref{dis-gsG1} converges to
	\begin{align}
	&	\frac{\dif }{\dif \tau}\left(u_{k+1}-u_k\right)=\Delta_k\sin\frac{u_{k+1}+u_k}{2},\label{sm-gsg1}\\
	&	\Delta_k=\sqrt{4a^2-4c^2\sin^2\frac{u_{k+1}-u_k}{2}}\label{sm-gsg3},
	\end{align}
	which are actually part of the semi-discrete gsG equation with $\nu=-1$ (see eqs. (3.47) and (3.49) in \cite{gsg-F}).
	Here we used $\frac{f^{l+1}-f^l}{2b}\rightarrow \frac{\dif}{\dif\tau}f$ as $b\rightarrow0$. For equation \eqref{Dis-gsG1}, the continuous limit $b\rightarrow0$ leads to
	\begin{align}
	&({b^2c^2-1})\sinh\frac{x_{k+1}^{l+1}-x_{k+1}^l+x_k^{l+1}-x_k^l-4bc+4\chi_1}{2}\notag\\
	=&({b^2c^2+1})\sinh\frac{x_{k+1}^{l+1}-x_{k+1}^l+x_k^{l+1}-x_k^l-4bc}{2}+2bc\cosh \frac{x_{k+1}^{l+1}-x_{k+1}^l+x_k^{l+1}-x_k^l-4bc}{2}\notag\\
	\sim& ({b^2c^2+1})\left[\left(\frac{\dif}{\dif\tau}(x_k+x_{k+1})-2c\right)b+\mathcal{O}(b^2)\right]+2bc+\mathcal{O}(b^3)\notag\\
	\sim& \frac{\dif}{\dif\tau}(x_k+x_{k+1})b+\mathcal{O}(b^3),
	\end{align}
	\begin{align}
	\sinh\frac{x_{k+1}^{l+1}-x_{k+1}^l-x_k^{l+1}+x_k^l}{2}\sim \frac{\dif}{\dif\tau}(x_{k+1}-x_{k})b+\mathcal{O}(b^3).
	\end{align}
	Note that
	\begin{align}
	&\sin \frac{u_{k+1}^{l+1}+u_{k+1}^l+u_k^{l+1}+u_k^l}{2} \sin\frac{u_{k+1}^{l+1}+u_{k+1}^l-u_k^{l+1}-u_k^l}{2}\notag\\
	=&\frac{1}{2}\left(\cos(u_k^{l+1}+u_k^l)-\cos (u_{k+1}^{l+1}+u_{k+1}^l)\right)\notag\\
	=&\cos^2\frac{u_k^{l+1}+u_k^l}{2}- \cos^2\frac{u_{k+1}^{l+1}+u_{k+1}^l}{2}\notag\\
	=&(\cos\frac{u_k^{l+1}+u_k^l}{2}+ \cos\frac{u_{k+1}^{l+1}+u_{k+1}^l}{2})(\cos\frac{u_k^{l+1}+u_k^l}{2}- \cos\frac{u_{k+1}^{l+1}+u_{k+1}^l}{2}),
	\end{align}
	and
	\begin{align}
	&bc\left(\cos\frac{u_k^{l+1}+u_k^l}{2}+ \cos\frac{u_{k+1}^{l+1}+u_{k+1}^l}{2}\right)\notag\\
	=&\sqrt{b^2c^2-1+(b^2c^2-1)\sinh^2\frac{x_k^{l+1}-x_k^l-2bc+2\chi_1}{2}}+\sqrt{b^2c^2-1+(b^2c^2-1)\sinh^2\frac{x_{k+1}^{l+1}-x_{k+1}^l-2bc+2\chi_1}{2}}\notag\\
	\sim&\frac{\dif}{\dif\tau}(x_k+x_{k+1})b+\mathcal{O}(b^3).
	\end{align}
	If we divide both sides of \eqref{Dis-gsG1} by $b^2\frac{\dif}{\dif\tau}(x_k+x_{k+1})$ and take $b\rightarrow0$, we arrive at
	\begin{align}
	\frac{\dif}{\dif\tau}(x_{k+1}-x_{k})=c(\cos u_{k+1}-\cos u_k),\label{sm-gsg2}
	\end{align}
	Eqs. \eqref{sm-gsg1}, \eqref{sm-gsg3} and \eqref{sm-gsg2} are the semi-discrete analog of the gsG equation we proposed in \cite{gsg-F}. It can be easily verified that the $\tau$-function and the variable transformations converge to those in the semi-discrete gsG equation with $\nu=-1$
	\begin{align}
	&	u_k(\tau) =\mathrm{i} \ln \frac{\bar{f}_k \bar{g}_k}{f_k g_k},\ 	\phi_k(\tau)=\mathrm{i}\ln \frac{g_k\bar{f}_k }{f_k \bar{g}_k},\\
	&x_k(\tau)=2kac^{-1}+c\tau+\ln\frac{\bar{g}_kg_k}{f_k\bar{f}_k},\ t=c\tau,\\
	&	f_k=\tau_{00}(k),\,g_k=\tau_{01}(k),
	\end{align}
	with
	\begin{align*}
	\tau_{n,m}(\frac{\tau}{2},k)=\left|m_{ij}^{n,m}(k)\right|_{N\times N}=\left|\mathrm{i}\delta_{ij}+\frac{1}{p_i+p_j}\left(-\frac{p_i}{p_j}\right)^n\left(
	\frac{1-cp_i}{1+cp_j}
	\right)^m\left(
	\frac{1-ap_i}{1+ap_j}
	\right)^{-k}e^{\xi_i+\eta_j} \right|_{N\times N},
	\end{align*}
	\begin{align*}
	\xi_i=\frac{\tau}{2p_i}+\xi_{i 0}, \quad \eta_j= \frac{\tau}{2p_j}+\eta_{j 0},
	\end{align*}
	or
	\begin{align}
	\tau_{nm}(\frac{\tau}{2},k)=\left|\phi_{(n+j-1)}^{(i)}(k, l,m)\right|_{1 \leqslant i, j \leqslant N},
	\end{align}
	with
	\begin{align*}
	\phi_{(n)}^{(i)}(k, l,m)=p_i^n\left(1-a p_i\right)^{-k}\left(1-cp_i\right)^me^{\frac{1}{2p_i}\tau+\xi_{i0}} +\mathrm{i}(-p_i)^n\left(1+a p_i\right)^{-k}\left(1+cp_i\right)^me^{-\frac{1}{2p_i}\tau+\eta_{i0}}.
	\end{align*}

	\subsubsection{The semi-discrete gsG equation with $\nu=1$}
	With the continuous limit $b\rightarrow0$ and the similar procedure in \ref{semi-1}, one can obtain the following theorem.
	\begin{theorem}\label{semi-dis1}
		An integrable semi-discrete analogue of the gsG equation with $\nu=1$ is of the form
		\begin{align}
		&	\frac{\dif }{\dif \tau}\left(u_{k+1}-u_k\right)=\tilde{\Delta}_k\sin\frac{u_{k+1}+u_k}{2},\\
		&\frac{\dif \delta_k}{\dif \tau}=\lambda(\cos u_k-\cos u_{k+1}),
		\end{align}
		where the lattice parameter $\tilde{\Delta}_k$ is a function depending on $(k, \tau)$ defined by
		\begin{align}
		\tilde{\Delta}_k=\sqrt{4a^2+4\lambda^2\sin^2\frac{u_{k+1}-u_k}{2}}.
		\end{align}
		Moreover, the $N$-soliton solution is given by
		\begin{align}
		u_k(\tau)& =\mathrm{i} \ln \frac{\bar{f}_k \bar{g}_k}{f_k g_k}, \\
		x_k(\tau)&=2ka\lambda^{-1}-\lambda\tau+\mathrm{i}\ln\frac{\bar{f}_kg_k}{f_k\bar{g}_k},\ t=\lambda\tau,\\
		\varphi_k(\tau)&=\ln \frac{g_k\bar{g}_k }{f_k \bar{f}_k},
		\end{align}
		where $f_k, g_k, \bar{f}_k$ and $\bar{g}_k$ are $\tau$-functions defined by
		\begin{align}
		f_k=\tau_{00}(k),\,g_k=\tau_{01}(k),
		\end{align}
		either with Gram-type determinant
		\begin{align*}
		\tau_{n,m}(\frac{\tau}{2},k)=\left|m_{ij}^{n,m}(k)\right|_{N\times N}=\left|\mathrm{i}\sqrt{\frac{1-\lambda\mathrm{i}p_i}{1+\lambda\mathrm{i}q_j}}\delta_{ij}+\frac{1}{p_i+p_j}\left(-\frac{p_i}{p_j}\right)^n\left(
		\frac{1-\lambda\mathrm{i}p_i}{1+\lambda\mathrm{i}p_j}
		\right)^m\left(
		\frac{1-ap_i}{1+ap_j}
		\right)^{-k}e^{\xi_i+\eta_j} \right|_{N\times N},
		\end{align*}
		and
		\begin{align*}
		\xi_i=\frac{\tau}{2p_i}+\xi_{i 0}, \quad \eta_j= \frac{\tau}{2p_j}+\eta_{j 0},
		\end{align*}
		or with Casorati-type determinant
		\begin{align}
		\tau_{nm}(\frac{\tau}{2},k)=\left|\phi_{(n+j-1)}^{(i)}(k,m)\right|_{1 \leqslant i, j \leqslant N},
		\end{align}
		\begin{align*}
		\phi_{(n)}^{(i)}(k,m)=p_i^n\left(1-a p_i\right)^{-k}\left(1-cp_i\right)^me^{\frac{1}{2p_i}\tau+\xi_{i0}} +\mathrm{i}\left({\frac{1-cp_i}{1+cp_i}}\right)^{\frac{1}{2}}(-p_i)^n\left(1+a p_i\right)^{-k}\left(1+cp_i\right)^me^{-\frac{1}{2p_i}\tau+\eta_{i0}}.
		\end{align*}
	\end{theorem}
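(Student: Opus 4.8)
The plan is to take the continuous limit $b\to0$ directly in the two identities of Theorem \ref{fdis-nu1} and in its conserved quantities $\tilde I_k^l=a^2$ and $\tilde J_k^l=1/b^2$, exactly parallel to the reduction carried out for $\nu=-1$ in Section \ref{semi-1}. Setting $\tilde t_k^l=2lb\lambda$ and $t=\lambda\tau$ turns the $l$-lattice into a continuous $\tau$-flow with $\frac{h^{l+1}-h^l}{2b}\to\frac{\dif h}{\dif\tau}$ for any lattice quantity $h$; in particular $u_k^{l+1}-u_k^l=2b\frac{\dif u_k}{\dif\tau}+\mathnormal{O}(b^2)$ and likewise for $\tilde x_k^l$. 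The one phase fact needed is that $\sin\omega_1=(1+b^2\lambda^2)^{-1/2}$ forces $\cos\omega_1=b\lambda(1+b^2\lambda^2)^{-1/2}$, so that $4\omega_1=2\pi-4b\lambda+\mathnormal{O}(b^3)$ and the explicit $+4b\lambda$ appearing in the phases of Theorem \ref{fdis-nu1} cancels to leading order; meanwhile $\omega_2$ stays $\mathnormal{O}(1)$ with $\cos\omega_2=\lambda(a^2+\lambda^2)^{-1/2}$, $\sin\omega_2=a(a^2+\lambda^2)^{-1/2}$.

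First I would handle \eqref{dis-1gsG1}. Its left side tends to $\frac12\frac{\dif}{\dif\tau}(u_{k+1}-u_k)$ and $\sin\frac{u_{k+1}^{l+1}+u_{k+1}^l+u_k^{l+1}+u_k^l}{4}\to\sin\frac{u_{k+1}+u_k}{2}$; in $\tilde\Delta_k^l$ from \eqref{dis-1gsG2} the denominator tends to $1$ (its prefactor and its phase tend to $1$ and $\pi/2$) and the numerator tends to $\sqrt{a^2+\lambda^2}\,\sin(\theta+\omega_2)=\lambda\sin\theta+a\cos\theta$, where $\theta=\frac12(x_{k+1}-x_k-2a\lambda^{-1})$. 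Now the conserved quantity $\tilde I_k^l=a^2$, whose limiting form is $(a\cos\theta+\lambda\sin\theta)^2-\lambda^2\sin^2\frac{u_{k+1}-u_k}{2}=a^2$, collapses the numerator to $\sqrt{a^2+\lambda^2\sin^2\frac{u_{k+1}-u_k}{2}}=\frac12\tilde\Delta_k$, and equating the two sides gives $\frac{\dif}{\dif\tau}(u_{k+1}-u_k)=\tilde\Delta_k\sin\frac{u_{k+1}+u_k}{2}$.

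Next I would expand both sides of the second identity of Theorem \ref{fdis-nu1} to order $b^2$. With the cancellation above, $\sin\frac{\tilde x_{k+1}^{l+1}-\tilde x_{k+1}^l+\tilde x_k^{l+1}-\tilde x_k^l+4b\lambda+4\omega_1}{2}=\sin\!\bigl(\pi+b\tfrac{\dif}{\dif\tau}(x_{k+1}+x_k)+\mathnormal{O}(b^2)\bigr)=-b\tfrac{\dif}{\dif\tau}(x_{k+1}+x_k)+\mathnormal{O}(b^2)$ and $\sin\frac{\tilde x_{k+1}^{l+1}-\tilde x_{k+1}^l-\tilde x_k^{l+1}+\tilde x_k^l}{2}=b\tfrac{\dif}{\dif\tau}(x_{k+1}-x_k)+\mathnormal{O}(b^2)$. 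On the right I would use, as in Section \ref{semi-1}, the identity $\sin\frac{u_{k+1}^{l+1}+u_{k+1}^l+u_k^{l+1}+u_k^l}{2}\sin\frac{u_{k+1}^{l+1}+u_{k+1}^l-u_k^{l+1}-u_k^l}{2}=\cos^2\!\frac{u_k^{l+1}+u_k^l}{2}-\cos^2\!\frac{u_{k+1}^{l+1}+u_{k+1}^l}{2}$ together with the fact — coming from the single-site time relations used in the proof of Theorem \ref{fdis-nu1}, equivalently the limiting hodograph relation $\frac{\dif x_k}{\dif\tau}=-\lambda\cos u_k$ — that $b\lambda\cos\frac{u_k^{l+1}+u_k^l}{2}=-b\frac{\dif x_k}{\dif\tau}+\mathnormal{O}(b^2)$. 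Both sides then carry a common factor $b^2\frac{\dif}{\dif\tau}(x_{k+1}+x_k)$ (up to sign); dividing it out yields $\frac{\dif}{\dif\tau}(x_{k+1}-x_k)=\lambda(\cos u_k-\cos u_{k+1})$, i.e.\ the second equation with $\delta_k=x_{k+1}-x_k$.

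Finally I would check that the discrete-to-continuous passage is consistent on the level of solutions: the variable transformations \eqref{fulldis-1u}, \eqref{fulldis-1phi}, \eqref{fulldis-1x} reduce termwise to the stated $u_k(\tau)$, $\varphi_k(\tau)$, $x_k(\tau)$ with $t=\lambda\tau$, and the Gram and Casorati determinants of Proposition \ref{fd-2dtl-sol} reduce, under $b\to0$ with $\tau=2lb$ fixed, to those in the theorem — the discrete $l$-factors exponentiating into the continuous $\tau$-dependence, up to the usual gauge freedom of the $\tau$-function. I expect the main obstacle to be the second equation: one must verify that the $\mathnormal{O}(1)$ parts genuinely cancel (this is where $4\omega_1=2\pi-4b\lambda+\mathnormal{O}(b^3)$ and the conservation law $\tilde J_k^l=1/b^2$ enter), and one must track carefully the signs and branches that appear because the $\nu=1$ identities are trigonometric rather than hyperbolic; the remaining work is a lengthy but routine Taylor expansion mirroring Section \ref{semi-1}.
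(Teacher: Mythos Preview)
Your proposal is correct and follows essentially the same route the paper indicates: take $b\to0$ in the identities of Theorem~\ref{fdis-nu1} and in the conserved quantities $\tilde I_k^l$, $\tilde J_k^l$, mirroring the computation carried out in Section~\ref{semi-1} for $\nu=-1$. In particular your observations that $4\omega_1=2\pi-4b\lambda+\mathnormal{O}(b^3)$ cancels the explicit $+4b\lambda$, that $\tilde I_k^l=a^2$ collapses the numerator of $\tilde\Delta_k^l$ to $\tfrac12\tilde\Delta_k$, and that the $\tau$-functions pass to the stated Gram/Casorati forms under $(1-bp_i^{-1})^{-l}\to e^{\tau/(2p_i)}$, are exactly the ingredients the paper intends. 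One wording caution: when you invoke ``the limiting hodograph relation $\frac{\dif x_k}{\dif\tau}=-\lambda\cos u_k$'' to factor the right-hand side of the second identity, make sure you derive it from $\tilde J_k^l=1/b^2$ (equivalently from the single-site relations \eqref{Dis-1gsg2}--\eqref{Dis-1gsg3}) \emph{before} using it, rather than assuming the conclusion; the paper does this for $\nu=-1$ by rewriting $bc\cos\tfrac{u_k^{l+1}+u_k^l}{2}$ via $J_k^l$ and expanding, and the analogous step here is what makes the argument non-circular.
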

	
	\begin{remark}
		Semi-discrete analogue of the generalized sG equation with $\nu=-1$ can be transformed into the case with $\nu=1$ we proposed here through
		\begin{align}
		\phi_k=\mathrm{i}\varphi_{k},\ \tilde{x}_k=\mathrm{i}x_k,\ \tilde{t}=-\mathrm{i}t,\
		c=\lambda\mathrm{i}.
		\end{align}
	\end{remark}
	According to a proof similar to that in \cite{gsg-F}, we know that the continuous limit of this semi-discrete analogue we proposed in this paper is the gsG equation with $\nu=1$. However, their solutions are pretty dissimilar which we will illustrate in Section \ref{fig}.
	
	\subsection{From the semi-discrete 2DTL equation to the semi-discrete gsG equation with $\nu=1$}
	We proposed two integrable semi-discrete gsG equations with $\nu=-1$ from the semi-discrete 2DTL equation in \cite{gsg-F}. In this part, we show that the semi-discrete gsG equation with $\nu=1$ can also be obtained from the semi-discrete 2DTL equation through some reductions and appropriate definitions of discrete hodograph transformation.
	
	We start with the following
	semi-discrete 2DTL equations
	\begin{align}
	&\left(\frac{1}{a} D_{x_{-1}}-1\right) \tau_{n,m}(k+1) \cdot \tau_{n,m}(k)+\tau_{n+1,m}(k+1) \tau_{n-1,m}(k)=0, \label{2dtl-dis1}\\
	&\left(\frac{1}{c}D_{x_{-1}}-1\right) \tau_{n,m}(k) \cdot \tau_{n,m+1}(k)+\tau_{n+1,m}(k) \tau_{n-1,m+1}(k)=0 .\label{dis-bt}
	\end{align}
	Here $a$ is a spatial discrete step. Eqs. \eqref{2dtl-dis1} can also be viewed as Bäcklund transformations for the 2DTL equations in the sense that if $\tau_{n,m}(k)$ is a solution to the 2DTL equations \eqref{2dtl}, so is $\tau_{n,m}(k+1)$, while Eq. \eqref{dis-bt} is the BT linking the solution $\tau_{n,m}(k)$ of Eq. \eqref{2dtl-dis1} to $\tau_{n,m+1}(k)$.
	\begin{lemma}\label{semi-dis-sol}
		The bilinear equations \eqref{2dtl-dis1}-\eqref{dis-bt} admit the following Casorati determinant solutions
		\begin{align}
		\tau_{n,m}\left(x_{-1}, k\right)=\left|\begin{array}{cccc}
		\phi_{n,m}^{(1)}(k) & \phi_{n+1,m}^{(1)}(k) & \cdots & \phi_{n+N-1,m}^{(1)}(k) \\
		\phi_{n,m}^{(2)}(k) & \phi_{n+1,m}^{(2)}(k) & \cdots & \phi_{n+N-1,m}^{(2)}(k) \\
		\cdots & \cdots & \cdots & \cdots \\
		\phi_{n,m}^{(N)}(k) & \phi_{n+1,m}^{(N)}(k) & \cdots & \phi_{n+N-1,m}^{(N)}(k)
		\end{array}\right|,
		\end{align}
		where
		\begin{align}
		& \phi_{n,m}^{(i)}(k)=c_ip_i^n\left(1-cp_i\right)^m\left(1-a p_i\right)^{-k} e^{\xi_i}+d_iq_i^n\left(1-cq_i\right)^m\left(1-a q_i\right)^{-k} e^{\eta_i},
		\end{align}
		with
		$$
		\xi_i=p_i{ }^{-1} x_{-1}+\xi_{i 0}, \quad \eta_i=q_i{ }^{-1} x_{-1}+\eta_{i 0},
		$$
		and Gram-type determinant solutions
		\begin{align}
		\tau_{n,m}(x_{-1},k)=\left|m_{ij}^{n,m}(k)\right|_{N\times N}=\left|c_{ij}+\frac{1}{p_i+q_j}\left(-\frac{p_i}{q_j}\right)^n\left(
		\frac{1-cp_i}{1+cq_j}
		\right)^m\left(
		\frac{1-ap_i}{1+aq_j}
		\right)^{-k}e^{\xi_i+\eta_j} \right|_{N\times N},
		\end{align}
		with
		\begin{align}
		\xi_i=p_i^{-1} x_{-1}+\xi_{i 0}, \quad \eta_j= q_j^{-1} x_{-1}+\eta_{j 0}.
		\end{align}
		Here $p_i, q_i, \xi_{i 0}$ and $\eta_{i 0}$ are arbitrary parameters which can take either real or complex values.
	\end{lemma}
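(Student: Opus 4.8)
The plan is to establish Lemma \ref{semi-dis-sol} by a direct determinant verification, parallel to the proofs of Lemmas \ref{Casorati}--\ref{Gram} and Proposition \ref{fd-2dtl-sol}; I will also indicate a quicker route by a continuum limit. The first step is to record the linear relations satisfied by the matrix entries. Since in $\phi_{n,m}^{(i)}(k)$ the indices $k$ and $m$ enter only through the geometric factors $(1-ap_i)^{-k},(1-cp_i)^{m}$ (and their $q_i$ counterparts) and $x_{-1}$ enters only through $e^{p_i^{-1}x_{-1}}$, one checks directly that
\begin{align}
&\partial_{x_{-1}}\phi_{n,m}^{(i)}(k)=\phi_{n-1,m}^{(i)}(k),\notag\\
&\phi_{n,m}^{(i)}(k-1)=\phi_{n,m}^{(i)}(k)-a\,\phi_{n+1,m}^{(i)}(k),\notag\\
&\phi_{n,m+1}^{(i)}(k)=\phi_{n,m}^{(i)}(k)-c\,\phi_{n+1,m}^{(i)}(k),\notag
\end{align}
and correspondingly that $\partial_{x_{-1}}m_{ij}^{n,m}(k)$ is a rank-one (bordering) modification of $m_{ij}^{n,m}(k)$, while the shifts $k\mapsto k-1$ and $m\mapsto m+1$ act on the Gram entries by the analogue of the last two relations.

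For the Casorati case I would rewrite every $\tau$-function occurring in \eqref{2dtl-dis1} (resp.\ \eqref{dis-bt}) as a maximal minor of one common $N\times(N+2)$ matrix. The second relation above expresses each column of $\tau_{n,m}(k)$ as $C_r-aC_{r+1}$ in terms of the columns $C_r$ of $\tau_{n,m}(k+1)$, and a telescoping string of column operations then turns $\tau_{n,m}(k)$ into a linear combination of ``gap Casoratians'' of the $C_r$; likewise the first relation realises $\partial_{x_{-1}}\tau$ as a single gap Casoratian. Substituting these expansions into \eqref{2dtl-dis1} collapses it to a three-term Pl\"ucker (Laplace-expansion) identity among the maximal minors of that matrix, which holds identically; the verification of \eqref{dis-bt} is the same computation with $a$ replaced by $c$ and the $k$-shift replaced by the $m$-shift. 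For the Gram case the route is the Jacobi (Desnanot--Jacobi) identity for bordered determinants: the relations above show that $\partial_{x_{-1}}$, a $k$-shift, and an $m$-shift each amount to adjoining or deleting a rank-one row/column of $(m_{ij}^{n,m}(k))$, so the four bilinear products in \eqref{2dtl-dis1}--\eqref{dis-bt} are precisely the cofactor products in the Jacobi identity for a suitably (doubly) bordered determinant, and the identity yields the bilinear relation. Alternatively, one can bypass the direct calculation: \eqref{dis-bt} is exactly the $b\to0$, $bl\to x_{-1}$ continuum limit of \eqref{dis-2dtl2}, and \eqref{2dtl-dis1} is the analogous limit of the member of the discrete-KP hierarchy carrying the $k$-shift together with the $l$-direction that degenerates to $x_{-1}$; under this limit the Casoratian/Grammian structure produced in Proposition \ref{fd-2dtl-sol} is preserved and directly yields the formulas claimed here.

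The step I expect to be the main obstacle is the bookkeeping in the Casorati argument: because the $k$-direction relation expresses level $k$ through level $k+1$ (a backward shift), the telescoping column reduction generates a whole sum of gap Casoratians weighted by powers of $a$, and one must verify with care that, after forming the bilinear combination in \eqref{2dtl-dis1}, everything cancels except the two terms that constitute the Pl\"ucker relation; choosing the bordering in the Gram picture so that Jacobi applies in a single stroke is the same difficulty in a different guise. As a consistency check I would also confirm that, on discarding the new discrete direction, the present formulas reduce to the semi-discrete 2DTL solutions used for the $\nu=-1$ reduction in \cite{gsg-F}.
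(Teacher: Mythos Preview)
Your proposal is sound, and in fact goes beyond what the paper does: the paper states Lemma~\ref{semi-dis-sol} without proof, treating it as a known analogue of Lemmas~\ref{Casorati}--\ref{Gram} and Proposition~\ref{fd-2dtl-sol}. So there is nothing to compare against; your task is simply to check that your argument closes.

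The linear relations you record for the Casorati entries are correct, and the Pl\"ucker/Jacobi route is indeed the standard mechanism here. Your worry about the ``backward shift'' generating a whole sum of gap Casoratians is somewhat overstated: because \eqref{2dtl-dis1} only couples levels $k$ and $k+1$, you can take the columns at level $k+1$ as the reference and use $\phi_{n,m}^{(i)}(k)=\phi_{n,m}^{(i)}(k+1)-a\,\phi_{n+1,m}^{(i)}(k+1)$ to write each column of $\tau_{n,m}(k)$ as a two-term combination at level $k+1$; a single sweep of column operations (adding suitable multiples of later columns to earlier ones) then reduces $\tau_{n,m}(k)$ to a determinant that differs from $\tau_{n,m}(k+1)$ in only one column. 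This, together with the $x_{-1}$-derivative realised as a one-column replacement, is already enough to feed into a three-term Pl\"ucker identity; no proliferation of terms occurs. The same remark applies verbatim to \eqref{dis-bt} with $(a,k)$ replaced by $(c,-m)$.

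Your continuum-limit shortcut is also valid and is arguably the cleanest route given that Proposition~\ref{fd-2dtl-sol} is already established in the paper: \eqref{dis-bt} is exactly the $b\to0$, $bl\to x_{-1}$ limit of \eqref{dis-2dtl2}, as you say. For \eqref{2dtl-dis1} note that the paper's proof of Proposition~\ref{fd-2dtl-sol} only invokes the two dKP members \eqref{dKP1}--\eqref{dKP2}, neither of which contains both the $k$- and $l$-directions; you would need to invoke a third Hirota--Miwa triple (the one built from $(k_1,k_2,k_3)$ in the notation of that proof) before taking the limit. That step is routine but should be stated explicitly if you pursue this route.
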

	Applying reductions similar to continuous case,
	then we could have each of the $\tau$ functions satisfies the following relations
	\begin{align}
	\tau_{n+1,0}\Bumpeq \bar{\tau}_{n,1},\,\tau_{n+1,1}\Bumpeq \bar{\tau}_{n,0}.
	\end{align}
	By putting $\tau = 2x_{-1},\, \tau_{00}(k) = f_k,\ \tau_{01}(k)=g_k$, \eqref{2dtl-dis1}-\eqref{dis-bt} can be converted into
	\begin{align}
	&\left(\frac{2}{a}D_\tau-1\right)f_{k+1}\cdot f_k+\bar{g}_{k+1}\bar{g}_k=0,\label{dis-bl1}\\
	&\left(\frac{2}{a}D_\tau-1\right)\bar{f}_{k+1}\cdot \bar{f}_k+{g}_{k+1}{g}_k=0,\\
	&\left(\frac{2}{a}D_\tau-1\right){g}_{k+1}\cdot{g}_k+\bar{f}_{k+1}\cdot \bar{f}_k=0,\\
	&\left(\frac{2}{a}D_\tau-1\right)\bar{g}_{k+1}\cdot\bar{g}_k+{f}_{k+1}\cdot {f}_k=0,\\
	&(-2\mathrm{i}\lambda^{-1}D_\tau-1)f_k\cdot g_k+\bar{f}_k\bar{g}_k=0,\\
	&(2\mathrm{i}\lambda^{-1}D_\tau-1)\bar{f}_k\cdot \bar{g}_k+f_kg_k=0.\label{dis-bl6}
	\end{align}
	Now we can rewrite the bilinear equations \eqref{dis-bl1}-\eqref{dis-bl6} as
	\begin{align}
	& \frac{2}{a}\left(\ln \frac{f_{k+1}}{f_k}\right)_\tau-1=-\frac{\bar{g}_{k+1} \bar{g}_k}{f_{k+1} f_k}, \\
	& \frac{2}{a}\left(\ln \frac{\bar{f}_{k+1}}{\bar{f}_k}\right)_\tau-1=-\frac{g_{k+1} g_k}{\bar{f}_{k+1} \bar{f}_k}, \\
	& \frac{2}{a}\left(\ln \frac{g_{k+1}}{g_k}\right)_\tau-1=-\frac{\bar{f}_{k+1} \bar{f}_k}{g_{k+1} g_k},\\
	& \frac{2}{a}\left(\ln \frac{\bar{g}_{k+1}}{\bar{g}_k}\right)_\tau-1=-\frac{f_{k+1} f_k}{\bar{g}_{k+1} \bar{g}_k},
	\end{align}
	\begin{align}
	2\mathrm{i}\lambda^{-1}\left(\ln \frac{f_k}{g_k}\right)_\tau+1&=\frac{\bar{f}_k \bar{g}_k}{f_k g_k}, \\
	2\mathrm{i}\lambda^{-1}\left(\ln \frac{\bar{f}_k}{\bar{g}_k}\right)_\tau-1&=-\frac{f_k g_k}{\bar{f}_k \bar{g}_k} .
	\end{align}
	Introducing two intermediate variables
	\begin{align}
	\sigma_k(\tau) & =2 \mathrm{i} \ln \frac{\bar{g}_k(s)}{f_k(s)}, \\
	\sigma_k^{\prime}(\tau) & =2 \mathrm{i} \ln \frac{\bar{f}_k(s)}{g_k(s)},
	\end{align}
	one arrives at a pair of semi-discrete sG equation
	\begin{align}
	& \frac{1}{2 a}\left(\sigma_{k+1}-\sigma_k\right)_\tau=\sin \left(\frac{\sigma_{k+1}+\sigma_k}{2}\right), \\
	& \frac{1}{2 a}\left(\sigma_{k+1}^{\prime}-\sigma_k^{\prime}\right)_\tau=\sin \left(\frac{\sigma_{k+1}^{\prime}+\sigma_k^{\prime}}{2}\right) .
	\end{align}
	Next we introduce dependent variable transformations
	\begin{align}
	u_k(\tau)=\frac{1}{2}\left(\sigma_k+\sigma_k^{\prime}\right)=\mathrm{i} \ln \frac{\bar{f}_k \bar{g}_k}{f_k g_k},\ \varphi_k(\tau)=\frac{1}{2\mathrm{i}}\left(\sigma_k-\sigma_k^{\prime}\right)=\ln \frac{g_k\bar{g}_k }{f_k \bar{f}_k},
	\end{align}
	\begin{prop}
		\begin{align}
		& \frac{\dif \varphi_k}{\dif \tau}=\lambda \sin u_k \\
		& a \sinh \left(\frac{\varphi_{k+1}+\varphi_k}{2}\right)=\lambda\sin \left(\frac{u_{k+1}-u_k}{2}\right)
		\end{align}
	\end{prop}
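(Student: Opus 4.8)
The plan is to obtain both equations of the Proposition as purely algebraic consequences of the bilinear equations \eqref{dis-bl1}--\eqref{dis-bl6} together with the dependent variable transformations for $u_k$ and $\varphi_k$, mirroring the continuous reduction of Section \ref{sec2} and the $\nu=-1$ argument of \cite{gsg-F}. Everything rests on three elementary identities coming directly from the definitions: $u_k=\mathrm{i}\ln\frac{\bar f_k\bar g_k}{f_kg_k}$ gives $\frac{\bar f_k\bar g_k}{f_kg_k}=e^{-\mathrm{i}u_k}$; $\varphi_k=\ln\frac{g_k\bar g_k}{f_k\bar f_k}$ gives $\frac{f_k\bar f_k}{g_k\bar g_k}=e^{-\varphi_k}$; and, writing $\sigma_k=u_k+\mathrm{i}\varphi_k$, $\sigma_k'=u_k-\mathrm{i}\varphi_k$ (the inverse of $u_k=\tfrac12(\sigma_k+\sigma_k')$, $\varphi_k=\tfrac1{2\mathrm{i}}(\sigma_k-\sigma_k')$), the definitions of $\sigma_k,\sigma_k'$ give the factorized ratios $\frac{\bar g_k}{f_k}=e^{-\mathrm{i}u_k/2+\varphi_k/2}$ and $\frac{\bar f_k}{g_k}=e^{-\mathrm{i}u_k/2-\varphi_k/2}$, together with their reciprocals.

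First I would prove $\frac{\dif\varphi_k}{\dif\tau}=\lambda\sin u_k$. Dividing \eqref{dis-bl5} by $f_kg_k$ and \eqref{dis-bl6} by $\bar f_k\bar g_k$ gives the two relations $2\mathrm{i}\lambda^{-1}\left(\ln\frac{f_k}{g_k}\right)_\tau+1=\frac{\bar f_k\bar g_k}{f_kg_k}$ and $2\mathrm{i}\lambda^{-1}\left(\ln\frac{\bar f_k}{\bar g_k}\right)_\tau-1=-\frac{f_kg_k}{\bar f_k\bar g_k}$ already displayed in the text. Adding them, the two constants cancel, the left side collapses to $2\mathrm{i}\lambda^{-1}\left(\ln\frac{f_k\bar f_k}{g_k\bar g_k}\right)_\tau=-2\mathrm{i}\lambda^{-1}\frac{\dif\varphi_k}{\dif\tau}$, and the right side becomes $e^{-\mathrm{i}u_k}-e^{\mathrm{i}u_k}=-2\mathrm{i}\sin u_k$; equating yields the first equation.

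Next I would extract an auxiliary ``$k$-difference'' relation from the four $2$DTL bilinear equations. Using the logarithmic forms of \eqref{dis-bl1}--\eqref{dis-bl4} displayed above and forming the linear combination \eqref{dis-bl1} plus \eqref{dis-bl2} minus \eqref{dis-bl3} minus \eqref{dis-bl4}, the four $-1$'s cancel; the logarithmic derivatives telescope, since $\ln\frac{f_{k+1}}{f_k}+\ln\frac{\bar f_{k+1}}{\bar f_k}-\ln\frac{g_{k+1}}{g_k}-\ln\frac{\bar g_{k+1}}{\bar g_k}=-(\varphi_{k+1}-\varphi_k)$, so the left side equals $-\frac2a(\varphi_{k+1}-\varphi_k)_\tau$; and, setting $\alpha=\frac{u_{k+1}+u_k}2$, $\beta=\frac{\varphi_{k+1}+\varphi_k}2$, the four bilinear ratios on the right are $\frac{\bar g_{k+1}\bar g_k}{f_{k+1}f_k}=e^{-\mathrm{i}\alpha+\beta}$, $\frac{\bar f_{k+1}\bar f_k}{g_{k+1}g_k}=e^{-\mathrm{i}\alpha-\beta}$ and their reciprocals, which combine to $-4\cos\alpha\sinh\beta$. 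Hence $(\varphi_{k+1}-\varphi_k)_\tau=2a\cos\frac{u_{k+1}+u_k}2\sinh\frac{\varphi_{k+1}+\varphi_k}2$.

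Finally, differencing the first equation in $k$ gives $(\varphi_{k+1}-\varphi_k)_\tau=\lambda(\sin u_{k+1}-\sin u_k)=2\lambda\cos\frac{u_{k+1}+u_k}2\sin\frac{u_{k+1}-u_k}2$; comparing with the auxiliary relation and cancelling the common factor $\cos\frac{u_{k+1}+u_k}2$ gives $a\sinh\frac{\varphi_{k+1}+\varphi_k}2=\lambda\sin\frac{u_{k+1}-u_k}2$, the second equation. The cancellation is legitimate because both sides are real-analytic in $\tau$ and $\cos\frac{u_{k+1}+u_k}2$ vanishes only on a discrete set, so the reduced identity extends everywhere by continuity. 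I expect the middle step to be the main obstacle: one must hit exactly the sign pattern $(+,+,-,-)$ in the combination of \eqref{dis-bl1}--\eqref{dis-bl4} and then carefully translate each of the four bilinear ratios into its $e^{\pm\mathrm{i}\alpha\pm\beta}$ form, since a single wrong sign destroys both the telescoping on the left and the $\cos\alpha\sinh\beta$ assembly on the right; once that bookkeeping is done, the remaining algebra is immediate.
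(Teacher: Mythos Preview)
Your proof is correct. The paper itself omits the argument and refers to \cite{gsg-F}, whose method in the $\nu=-1$ case is the natural analogue of what you do here: the first identity comes from the $m$-direction B\"acklund bilinears (your last two equations), and the second is obtained by combining the $\tau$-derivative relation with the $k$-difference identity produced by the four semi-discrete 2DTL bilinears. The only cosmetic difference is that the paper routes the $k$-difference step through the intermediate pair of semi-discrete sine-Gordon equations for $\sigma_k$ and $\sigma_k'$ (displayed just before the Proposition), whereas you work directly with the logarithmic forms in $f_k,\bar f_k,g_k,\bar g_k$; subtracting the two sG equations yields exactly your auxiliary relation $(\varphi_{k+1}-\varphi_k)_\tau=2a\cos\tfrac{u_{k+1}+u_k}{2}\sinh\tfrac{\varphi_{k+1}+\varphi_k}{2}$, so the two routes are the same computation in different notation. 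Your handling of the cancellation of $\cos\tfrac{u_{k+1}+u_k}{2}$ by analyticity is adequate for the determinant $\tau$-functions under consideration.
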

	One can find a similar proof of this proposition in \cite{gsg-F} for details, which we omit here. We define a discrete hodograph transformation
	\begin{align}
	x_k=2ka\lambda^{-1}-\lambda\tau+\mathrm{i}\ln\frac{\bar{f}_kg_k}{f_k\bar{g}_k},
	\end{align}
	then the nonuniform mesh can be derived as
	\begin{align}
	\delta_k=x_{k+1}-x_k=2a\lambda^{-1}+\mathrm{i}\ln\frac{\bar{f}_{k+1}g_{k+1}f_k\bar{g}_k}{f_{k+1}\bar{g}_{k+1}\bar{f}_kg_k}.
	\end{align}
	Taking the derivative with respect to $\tau$ results in
	\begin{align}
	\frac{\dif \delta_k}{\dif \tau}&=\mathrm{i}\left(\ln\frac{\bar{f}_{k+1}}{\bar{g}_{k+1}}+\ln\frac{f_k}{g_k}-\ln\frac{f_{k+1}}{g_{k+1}}-\ln\frac{\bar{f}_k}{\bar{g}_k} \right)_\tau \notag\\
	&=\frac{\lambda}{2}\left(
	\frac{f_k g_k}{\bar{f}_k \bar{g}_k}+\frac{\bar{f}_k \bar{g}_k}{f_k g_k}-\frac{f_{k+1} g_{k+1}}{\bar{f}_{k+1} \bar{g}_{k+1}}-\frac{\bar{f}_{k+1} \bar{g}_{k+1}}{f_{k+1} g_{k+1}}
	\right)\notag\\
	&=\lambda(\cos u_k-\cos u_{k+1}).
	\end{align}
	\begin{align}
	\frac{\dif }{\dif \tau}\left(u_{k+1}-u_k\right)&=\frac{1}{2}\frac{\dif }{\dif \tau}\left(\sigma_{k+1}+\sigma'_{k+1}-\sigma_{k}-\sigma'_{k}\right)\notag\\
	&=2a\sin\frac{u_k+u_{k+1}}{2}\cosh \frac{\varphi_k+\varphi_{k+1}}{2}\notag\\
	&=\sin\frac{u_k+u_{k+1}}{2}\sqrt{4a^2+4\lambda^2\sin^2\frac{u_{k+1}-u_k}{2}}.
	\end{align}
	Therefore, we can propose an integrable
	semi-discrete gsG equation with $\nu=1$ which is just the same as the analog we obtained from the fully discrete gsG equation in Theorem \ref{semi-dis1}.
	
	\subsubsection{Another integrable semi-discrete generalized sine-Gordon equation with $\nu=1$}
	If we define a discrete hodograph transformation
	\begin{align}
	\delta_k=2a\lambda^{-1}\cosh\left(\frac{\varphi_k+\varphi_{k+1}}{2}\right).
	\end{align}
	Then, we have
	\begin{align*}
	\frac{\dif }{\dif \tau}(u_{k+1}-u_k)
	&=\frac{1}{2}\frac{\dif }{\dif \tau}(\sigma_{k+1}+\sigma_{k+1}'-\sigma_k-\sigma_k')\\
	&=a\left[\sin\left(\frac{\sigma_k+\sigma_{k+1}}{2}\right)+\sin\left(\frac{\sigma_k'+\sigma_{k+1}'}{2}\right)\right]\\
	&=2a\sin\left(\frac{u_k+u_{k+1}}{2}\right)\cos \frac{\mathrm{i}(\varphi_k+\varphi_{k+1})}{2}\\
	&=\lambda\delta_k\sin\left(\frac{u_k+u_{k+1}}{2}\right).
	\end{align*}
	On the other hand, taking the derivative with respect to $\delta_k$ results in
	\begin{align*}
	\frac{\dif \delta_k}{\dif \tau}&=a\lambda^{-1}\sinh\left(\frac{\varphi_k+\varphi_{k+1}}{2}\right) (\varphi_k+\varphi_{k+1})_\tau\\
	&=\lambda\sin \left(\frac{u_{k+1}-u_k}{2}\right)(\sin u_k+\sin u_{k+1})\\
	&=\frac{\lambda}{2}\cos\left(\frac{3u_k-u_{k+1}}{2}\right)-\frac{\lambda}{2}\cos\left(\frac{3u_{k+1}-u_k}{2}\right)\\
	&=-\lambda\cos\left(\frac{u_{k+1}-u_k}{2}\right)(\cos u_{k+1}-\cos u_k).
	\end{align*}
	Summarizing the above results, we obtain an alternative integrable semi-discrete analogue of the gsG equation \eqref{gsg1} by the following theorem.
	\begin{theorem}\label{semi-dis2}
		An alternative semi-discrete analogue of the gsG equation \eqref{gsg1} is of the form
		\begin{align}
		&	\frac{\dif }{\dif \tau}(u_{k+1}-u_k)=\lambda\delta_k\sin\left(\frac{u_k+u_{k+1}}{2}\right)\\
		&\frac{\dif \delta_k}{\dif \tau}=\lambda\cos\left(\frac{u_{k+1}-u_k}{2}\right)(\cos u_k-\cos u_{k+1})
		\end{align}
		where the lattice parameter $\delta_k$ is a function depending on $(k, \tau)$ defined by
		\begin{align}
		\delta_k=2a\lambda^{-1}\cosh\left(\frac{\varphi_k+\varphi_{k+1}}{2}\right),
		\end{align}
		and $t=\lambda\tau.$
	\end{theorem}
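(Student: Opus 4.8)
The plan is to obtain both evolution equations of the theorem directly from the reduced bilinear system of this subsection: namely the pair of semi-discrete sine-Gordon equations for $\sigma_k$ and $\sigma_k'$, together with the two identities $\frac{\dif\varphi_k}{\dif\tau}=\lambda\sin u_k$ and $a\sinh(\frac{\varphi_{k+1}+\varphi_k}{2})=\lambda\sin(\frac{u_{k+1}-u_k}{2})$ furnished by the Proposition established above, but now with the hodograph variable fixed as $\delta_k=2a\lambda^{-1}\cosh(\frac{\varphi_k+\varphi_{k+1}}{2})$ instead of the linear-in-$x$ choice used in Theorem~\ref{semi-dis1}. Everything then reduces to two differentiations in $\tau$ and a handful of sum-to-product identities; no new solvability input is needed, since the underlying $\tau$-functions are exactly those of Lemma~\ref{semi-dis-sol} under the same $2$-reduction.

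For the first equation I would write $u_{k+1}-u_k=\frac12(\sigma_{k+1}+\sigma_{k+1}'-\sigma_k-\sigma_k')$, differentiate in $\tau$, and use the two semi-discrete sG equations to replace $(\sigma_{k+1}-\sigma_k)_\tau$ and $(\sigma_{k+1}'-\sigma_k')_\tau$ by $2a\sin(\frac{\sigma_k+\sigma_{k+1}}{2})$ and $2a\sin(\frac{\sigma_k'+\sigma_{k+1}'}{2})$ respectively. Adding the two sines with $\sin X+\sin Y=2\sin\frac{X+Y}{2}\cos\frac{X-Y}{2}$ and inserting $u=\frac12(\sigma+\sigma')$, $\varphi=\frac1{2\mathrm i}(\sigma-\sigma')$ turns the right-hand side into $2a\sin(\frac{u_k+u_{k+1}}{2})\cos(\frac{\mathrm i(\varphi_k+\varphi_{k+1})}{2})=2a\sin(\frac{u_k+u_{k+1}}{2})\cosh(\frac{\varphi_k+\varphi_{k+1}}{2})$, which by the definition of $\delta_k$ is precisely $\lambda\delta_k\sin(\frac{u_k+u_{k+1}}{2})$.

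For the second equation I would differentiate $\delta_k$ itself, getting $\frac{\dif\delta_k}{\dif\tau}=a\lambda^{-1}\sinh(\frac{\varphi_k+\varphi_{k+1}}{2})(\varphi_k+\varphi_{k+1})_\tau$, then substitute $a\sinh(\frac{\varphi_{k+1}+\varphi_k}{2})=\lambda\sin(\frac{u_{k+1}-u_k}{2})$ and $(\varphi_k+\varphi_{k+1})_\tau=\lambda(\sin u_k+\sin u_{k+1})$ to reach $\lambda\sin(\frac{u_{k+1}-u_k}{2})(\sin u_k+\sin u_{k+1})$. A single product-to-sum rewrite --- $\sin A\sin B=\frac12[\cos(A-B)-\cos(A+B)]$ with $A=\frac{u_{k+1}-u_k}{2}$ and $B=u_k,u_{k+1}$, or equivalently $\sin u_k+\sin u_{k+1}=2\sin\frac{u_k+u_{k+1}}{2}\cos\frac{u_{k+1}-u_k}{2}$ combined with $\cos u_{k+1}-\cos u_k=-2\sin\frac{u_{k+1}+u_k}{2}\sin\frac{u_{k+1}-u_k}{2}$ --- collapses the expression to $-\lambda\cos(\frac{u_{k+1}-u_k}{2})(\cos u_{k+1}-\cos u_k)=\lambda\cos(\frac{u_{k+1}-u_k}{2})(\cos u_k-\cos u_{k+1})$, which is the asserted evolution of $\delta_k$. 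The relation $t=\lambda\tau$ is the time component of the hodograph transformation already recorded, so the proof would then be complete.

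The main thing to watch is the sign bookkeeping when translating between $(\sigma,\sigma')$ and $(u,\varphi)$, in particular ensuring that $\cos(\mathrm i x)$ becomes $\cosh x$ with the correct sign in the first step and that the chain of product-to-sum manipulations is assembled in the right order in the second; conceptually neither step presents any obstacle. It is also worth noting, though it follows at once from $\cosh\ge1$, that $\delta_k$ keeps constant sign for real $a,\lambda$, so this hodograph transformation is nondegenerate and genuinely different from the one of Theorem~\ref{semi-dis1}.
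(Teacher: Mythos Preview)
Your proposal is correct and follows essentially the same route as the paper: both equations are obtained by differentiating $u_{k+1}-u_k$ and $\delta_k$ in $\tau$, invoking the pair of semi-discrete sG equations together with the Proposition's identities $\varphi_{k,\tau}=\lambda\sin u_k$ and $a\sinh\tfrac{\varphi_{k+1}+\varphi_k}{2}=\lambda\sin\tfrac{u_{k+1}-u_k}{2}$, and then applying the same sum-to-product and product-to-sum trigonometric rewrites. Your cautionary remarks on sign bookkeeping and the extra observation about nondegeneracy of $\delta_k$ go slightly beyond what the paper records, but the core argument is identical.
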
	
	
	\subsection{Reduction to the semi-discrete sG and semi-discrete SP equation}
	Similar to the continuous case and the discrete case, we demonstrate that the semi-discrete gsG equations reduce to the semi-discrete sG equation \cite{sm-sg} and the semi-discrete SP equation \cite{dis-sp} with the variable transformations and the scaling limit.
	
	\subsubsection{Reduction to the semi-discrete sG equation}
	{\bf (I) From the semi-discrete gsG equation with $\nu=-1$ to the semi-discrete sG equation.}
	Similar to the continuous case and the discrete case, we rewrite the elements of the $\tau$-function as
	\begin{align}
	\phi_n^{(i)}(k,1)&=p_i^n(1-ap_i)^{-k}(1-cp_i)e^{\frac{1}{2p_i}\tau+\xi_{i0}}+\mathrm{i}(-p_i)^n(1+ap_i)^{-k}\left(1+cp_i\right)e^{-\frac{1}{2p_i}\tau+\eta_{i0}}\nonumber\\
	&= \phi_n^{(i)}(k,0)-c\phi_{n+1}^{(i)}(k,0).
	\end{align}
	Thus we have
	\begin{align}
	&u_k=2\mathrm{i}\ln \frac{\bar{f}_k}{f_k}+\mathnormal{O}(c),\label{sm-u-sg}\\
	&\phi_k=\mathnormal{O}(c),\label{sm-phi-sg}\\
	&\hat{x}_k=cx_k=2ka+\mathnormal{O}(c^2),\label{sm-x-sg}\\
	&\hat{t}=c^{-1}t=\tau.\label{sm-t-sg}
	\end{align}
	We can develop the semi-discrete analogue of the gsG equation with $\nu=-1$ proposed in \cite{gsg-F} to
	\begin{align}
	&\frac{\dif }{\dif \hat{t}}(u_{k+1}-u_{k})=\sqrt{4a^2-4c^2\sin^2\left(\frac{u_{k+1}-u_k}{2}\right)}\sin\left(\frac{u_{k+1}+u_k}{2}\right),\label{s-sg1}\\
	&\frac{\dif }{\dif \hat{t}}(\hat{x}_{k+1}-\hat{x}_k)=c^2(\cos u_{k+1}-\cos u_k).\label{s-sg2}
	\end{align}
	With the scaling limit $c\rightarrow 0$, equation \eqref{s-sg1}-\eqref{s-sg2} lead to
	\begin{align}
	&\frac{\dif }{\dif \hat{t}}(u_{k+1}-u_{k})=2a\sin\left(\frac{u_{k+1}+u_k}{2}\right),\label{sm-sg}\\
	&\frac{\dif }{\dif \hat{t}}(\hat{x}_{k+1}-\hat{x}_k)=0.
	\end{align}
	Here equation \eqref{sm-sg} is just the semi-discrete sG equation \cite{sm-sg}. Additionally, the solutions also reduce to those for the semi-discrete sG equation \cite{sm-sg1,sm-sg2,sm-sg3}.
	\\
	\\
	{\bf (II) From the semi-discrete gsG equation with $\nu=1$ to the semi-discrete sG equation.}
	In this case, the relations between $\tau$-functions are the same as in the semi-discrete gsG equation with $\nu=-1$, then we have
	\begin{align}
	u_k=2\mathrm{i}\ln \frac{\bar{\hat{f}}_k}{\hat{f}_k}+\mathnormal{O}(\lambda),\
	\phi_k=\mathnormal{O}(\lambda),\
	\hat{x}_k=\lambda x_k=2ka+\mathnormal{O}(\lambda^2),\
	\hat{t}=\lambda^{-1}t=\tau.
	\end{align}
	One can rewrite the semi-discrete analogue of the gsG equation with $\nu=1$ as
	\begin{align}
	&	\frac{\dif }{\dif \hat{t}}\left(u_{k+1}-u_k\right)=\sqrt{4a^2+4\lambda^2\sin^2\frac{u_{k+1}-u_k}{2}}\sin\frac{u_{k+1}+u_k}{2},\\
	&\frac{\dif }{\dif \hat{t}}\left(\hat{x}_{k+1}-\hat{x}_k\right)=\lambda^2(\cos u_k-\cos u_{k+1}).
	\end{align}
	Taking $\lambda\rightarrow 0$, one can also arrive at the semi-discrete sG equation and its determinant solutions.
	
	\subsubsection{Reduction to the semi-discrete SP equation}	
	{\bf (I) From the gsG equation with $\nu=-1$ to the SP equation with $\sigma=-1$.} We have
	\begin{align}
	\phi_n^{(i)}(k,1)&=p_i^n(1-ap_i)^{-k}(1-cp_i)e^{\frac{1}{2p_i}\tau+\xi_{i0}}+\mathrm{i}(-p_i)^n(1+ap_i)^{-k}\left(1+cp_i\right)e^{-\frac{1}{2p_i}\tau+\eta_{i0}}\nonumber\\
	&\propto \phi_{n+1}^{(i)}(k,0)-2\epsilon \frac{\dif}{\dif \tau}\phi_{n+1}^{(i)}(k,0),
	\end{align}
	and
	\begin{align}
	&g_k\propto\bar{f}_k-2\epsilon \frac{\dif}{\dif \tau}\bar{f}_k+\mathnormal{O}(\epsilon^2),\ \bar{g}_k\propto {f}_k-2\epsilon \frac{\dif}{\dif \tau} {f}_k+\mathnormal{O}(\epsilon^2),\label{s-sp-tau}\\
	&\hat{u}_k=\frac{u_k}{\epsilon}=2\mathrm{i}\left(\ln\frac{\bar{f}_k}{f_k}\right)_\tau+\mathnormal{O}(\epsilon),\ \phi_k=2\mathrm{i}\ln\frac{\bar{f}_k}{f_k}+\mathnormal{O}(\epsilon),\label{s-sp-u}\\
	& \hat{x}_k=2ka-2(\ln(f_k\bar{f}_k))_\tau+\mathnormal{O}(\epsilon),\ \hat{t}=\tau.\label{s-sp-x}
	\end{align}
	The semi-discrete gsG equation with $\nu=-1$ can be recast to
	\begin{align}
	&{\epsilon}\frac{\dif }{\dif \hat{t}}(\hat{u}_{k+1}-\hat{u}_{k})=\sqrt{4a^2-\frac{4}{\epsilon^2}\sin^2\left(\frac{\epsilon(\hat{u}_{k+1}-\hat{u}_k)}{2}\right)}\sin\left(\frac{\epsilon(\hat{u}_{k+1}+\hat{u}_k)}{2}\right),\label{s-sp1}\\
	&\epsilon\frac{\dif }{\dif \hat{t}}(\hat{x}_{k+1}-\hat{x}_k)=\frac{1}{\epsilon}(\cos (\epsilon \hat{u}_{k+1})-\cos(\epsilon \hat{u}_k)).\label{s-sp2}
	\end{align}
	Dividing both sides of \eqref{s-sp1}-\eqref{s-sp2} by $\epsilon$ and taking $\epsilon\rightarrow0$ in \eqref{s-sp-tau}-\eqref{s-sp2}, we arrive at
	\begin{align}
	&\frac{\dif }{\dif \hat{t}}(\hat{u}_{k+1}-\hat{u}_{k})=\sqrt{4a^2-(\hat{u}_{k+1}-\hat{u}_{k})^2}\frac{\hat{u}_{k+1}+\hat{u}_k}{2},\\
	&\frac{\dif }{\dif \hat{t}}(\hat{x}_{k+1}-\hat{x}_k)=\frac{\hat{u}_{k}^2-\hat{u}_{k+1}^2}{2},
	\end{align}
	which is nothing but the semi-discrete SP equation proposed in \cite{dis-sp} by taking $X_k=\hat{x}_k/2$.
	\\
	\\
	{\bf (II) From the gsG equation with $\nu=1$ to the SP equation with $\sigma=1$.} With the similar analysis in the cotinuous case and the discrete case, we can construct the semi-discrete SP equation with $\sigma=1$. By defining
	\begin{align}
	\hat{\phi}_n^{(i)}(k)=p_i^{n}(1-ap_i)^{-k} e^{\frac{1}{2p_i}\tau+\xi_{i0}}+(-p_i)^{n}(1+ap_i)^{-k}e^{-\frac{1}{2p_i}\tau+\eta_{i0}},
	\end{align}
	and
	\begin{align}
	\hat{f}_k=\left|\begin{array}{cccc}
	\hat{\phi}_{1}^{(1)}(k) & \hat{\phi}_{2}^{(1)}(k) & \cdots & \hat{\phi}_{N}^{(1)}(k) \\
	\hat{\phi}_{1}^{(2)}(k) & \hat{\phi}_{2}^{(2)}(k) & \cdots & \hat{\phi}_{N}^{(2)}(k) \\
	\cdots & \cdots & \cdots & \cdots \\
	\hat{\phi}_{1}^{(N)}(k) & \hat{\phi}_{2}^{(N)}(k) & \cdots & \hat{\phi}_{n+N}^{(N)}(k)
	\end{array}\right|,\ \hat{g}_k=\left|\begin{array}{cccc}
	\hat{\phi}_0^{(1)}(k) & \hat{\phi}_{1}^{(1)}(k) & \cdots & \hat{\phi}_{N-1}^{(1)}(k) \\
	\hat{\phi}_0^{(2)}(k) & \hat{\phi}_{1}^{(2)}(k) & \cdots & \hat{\phi}_{N-1}^{(2)}(k) \\
	\cdots & \cdots & \cdots & \cdots \\
	\hat{\phi}_0^{(N)}(k) & \hat{\phi}_{1}^{(N)}(k) & \cdots & \hat{\phi}_{N-1}^{(N)}(k)
	\end{array}\right|,
	\end{align}
	we know the expansion of the $\tau$-functions for the semi-discrete gsG equation with $\nu=1$ are
	\begin{align}
	&\phi_n^{(i)}(k,0)\propto \hat{\phi}_{n+1}^{(i)}(k)-\mathrm{i}\epsilon\frac{\dif}{\dif\tau}\hat{\phi}_{n+1}^{(i)}(k)+\mathnormal{O}(\epsilon^2),\\
	&\phi_n^{(i)}(k,1)\propto \hat{\phi}_{n}^{(i)}(k)-\mathrm{i}\epsilon\frac{\dif}{\dif\tau}\hat{\phi}_{n}^{(i)}(k)+\mathnormal{O}(\epsilon^2),\\
	&f_k=\hat{f}_k-{\mathrm{i}\epsilon}\frac{\dif}{\dif \tau}\hat{f}_k+\mathnormal{O}(\epsilon^2),\ \bar{f}_k=\hat{f}_k+{\mathrm{i}\epsilon}\frac{\dif}{\dif \tau}\hat{f}_k+\mathnormal{O}(\epsilon^2),\\ &g_k=\hat{g}_k+{\mathrm{i}\epsilon}\frac{\dif}{\dif \tau}\hat{g}_k+\mathnormal{O}(\epsilon^2),\ \bar{g}_k=\hat{g}_k-{\mathrm{i}\epsilon}\frac{\partial}{\partial \tau}\hat{g}_k+\mathnormal{O}(\epsilon^2).
	\end{align}
	The semi-discrete dependent transformation and the corresponding variables can be defined similar to the continuous case as
	\begin{align}
	\hat{u}_k=2\left(\ln\frac{\hat{g}_k}{\hat{f}_k}\right)_\tau,\
	\varphi_k=2\ln\frac{\hat{g}_k}{\hat{f}_k},\ \hat{x}_k=2ka-2(\ln\hat{f}_k\hat{g}_k)_\tau,\ \hat{t}=\tau,
	\end{align}
	under the scaling limit $\epsilon\rightarrow0$. In addition, the semi-discrete SP equation with $\sigma=1$ can be derived from the semi-discrete gsG equation with $\nu=1$
	\begin{align}
	&\frac{\dif }{\dif \hat{t}}(\hat{u}_{k+1}-\hat{u}_{k})=\sqrt{4a^2+(\hat{u}_{k+1}-\hat{u}_{k})^2}\frac{\hat{u}_{k+1}+\hat{u}_k}{2},\\
	&\frac{\dif }{\dif \hat{t}}(\hat{x}_{k+1}-\hat{x}_k)=\frac{\hat{u}_{k+1}^2-\hat{u}_{k}^2}{2}.
	\end{align}
	
	Similarly to the continuous equation \eqref{sp1}, solutions of the semi-discrete SP equation with $\sigma=1$ develops singularities. Here we take the one-soliton solution as an example. The $\tau$-functions for the one-soliton solutions are given by
	\begin{align}
	&\hat{f}_k=p_1(1-ap_1)^{-k} e^{\frac{1}{2p_1}\tau+\xi_{10}}-p_1(1+ap_1)^{-k}e^{-\frac{1}{2p_1}\tau+\eta_{10}}\propto 1-\left(\frac{1-ap_1}{1+ap_1}\right)^{-k}e^{p_1^{-1}\tau+\xi_{10}-\eta_{10}},\\
	&\hat{g}_k=(1-ap_1)^{-k} e^{\frac{1}{2p_1}\tau+\xi_{10}}+(1+ap_1)^{-k}e^{-\frac{1}{2p_1}\tau+\eta_{10}}\propto 1+\left(\frac{1-ap_1}{1+ap_1}\right)^{-k}e^{p_1^{-1}\tau+\xi_{10}-\eta_{10}}.
	\end{align}
	One can express the one-soliton solution as
	\begin{align}
	&\hat{u}_k=-\frac{2}{p_1}\frac{1}{\sinh \left(p_1^{-1}\tau+\xi_{10}-\eta_{10}+k\ln\frac{1+ap_1}{1-ap_1}\right)},\\
	&\hat{x}_k=2ka-\frac{2}{p_1}\frac{1}{\tanh \left(p_1^{-1}\tau+\xi_{10}-\eta_{10}+k\ln\frac{1+ap_1}{1-ap_1}\right)}-\frac{2}{p_1}.
	\end{align}
	When $|\hat{x}_k|$ tends to infinity, $\hat{u}_k$ diverges. The more detailed description of $N$-soliton solutions will be reported elsewhere.
	\section{One- and two-soliton solutions}\label{fig}
	\subsection{One-soliton solution}
	The $\tau$-functions for the one-soliton solutions of the gsG equation \eqref{gsg1} are given by
	\begin{align}
	&f=e^{\xi_1}+\mathrm{i}\sqrt{\frac{1-\mathrm{i}p_1}{1+\mathrm{i}p_1}}e^{\eta_1}\propto 1+\mathrm{i}\sqrt{\frac{1-\mathrm{i}p}{1+\mathrm{i}p}}e^{-\zeta},\\
	&g=\left(1-\mathrm{i}p_1\right) e^{\xi_{1}}+\mathrm{i}\sqrt{\frac{1-\mathrm{i}p_1}{1+\mathrm{i}p_1}}\left(1+\mathrm{i}p_1\right)e^{\eta_{1}}\propto 1+\mathrm{i}\sqrt{\frac{1+\mathrm{i}p}{1-\mathrm{i}p}}e^{-\zeta},
	\end{align}
	with
	\begin{align}
	\zeta=py+\frac{\tau}{p}+\zeta_0.
	\end{align}
	Here we set $p=p_1$ and $\lambda=1$ for simplicity. Thus, we are able to obtain the parametric form of the one-soliton solution
	\begin{align}
	&u=\mathrm{i}\ln\frac{\bar{f}\bar{g}}{fg}=-2\arctan (\sqrt{1+p^2}\sinh\zeta)+\pi,\\
	&X=x+c_1t+2\arctan p =\frac{\zeta}{p}+2\arctan(p\tanh\zeta),
	\end{align}
	with $c_1=1+\frac{1}{p^2}$ stands for the velocity of the soliton. By setting $p\rightarrow-p$, one can see that the one-soliton solution presented above is equivalent to the solution in \cite{gsg1}. This relates to the fact that if $u$ solves equation \eqref{gsg1}, then so do the functions $\pm u+2\pi n$ ($n$: integer). A detailed analysis of the solution can be seen in \cite{gsg1}, which we omit here.
	
	For the semi-discrete gsG equation with $\nu=1$, the $\tau$-functions are
	\begin{align}
	f_k\propto 1+\mathrm{i}\sqrt{\frac{1-\mathrm{i}p}{1+\mathrm{i}p}}\left(\frac{1-ap}{1+ap}\right)^{k}e^{-\theta},\quad g_k\propto 1+\mathrm{i}\sqrt{\frac{1+\mathrm{i}p}{1-\mathrm{i}p}}\left(\frac{1-ap}{1+ap}\right)^{k}e^{-\theta},
	\end{align}
	with $\theta=p^{-1}\tau+\theta_0$. Then the one-soliton solution can be expressed as
	\begin{align}
	&u_k=-2\arctan(\sqrt{1+p^2}\sinh\zeta(k))+\pi,\\
	&X_k=x_k+c_1t+2\arctan p=2ka+\frac{1}{p^2}t+2\arctan(p\tanh \zeta(k)),
	\end{align}
	where $\zeta(k)=k\ln\frac{1+ap}{1-ap}+\theta$. By taking $a=0.5$, Figure \ref{1-soliton-3d} displays kink and anti-kink solutions for the semi-discrete gsG equation and Figure \ref{1-soliton-fig} shows the solutions under different $p$ values. Figure \ref{sd-1-soliton-fig} compares the semi-discrete gsG equation's kink and anti-kink solutions to those for the gsG equation. Similar to the continuous case, if $p<0(>0)$, the solution $u_k$ represents an kink(anti-kink) solution. The value of $|p|$ has a positive correlation with the amplitude of $v_k\equiv\frac{u_{k+1}-u_k}{\delta_k}$.
	\begin{figure}[H]
		\centering
		\subfigure[]
		{
			\label{kink-3d}
			\includegraphics[width=2.2in]{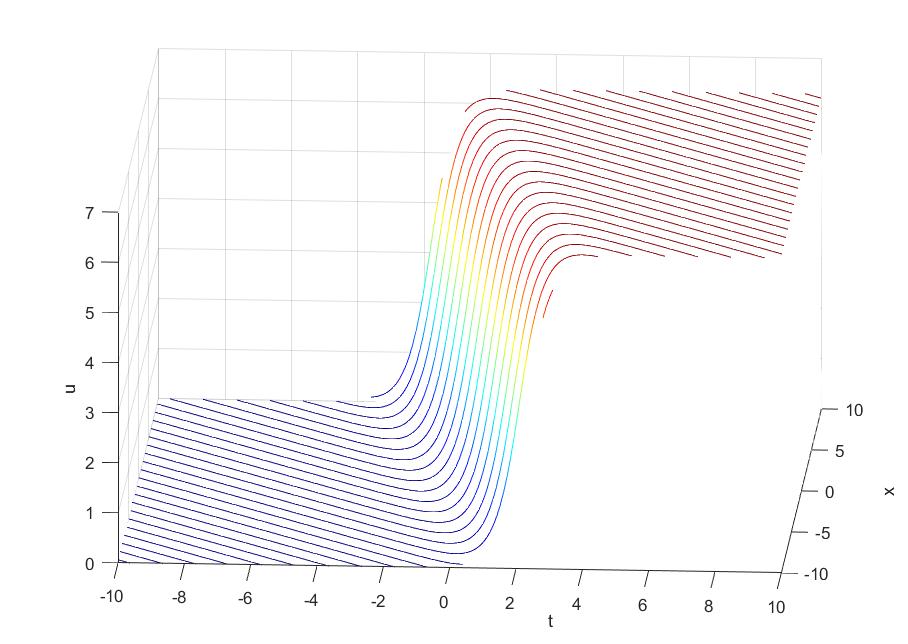}
		}
		\hspace*{3em}
		\subfigure[]
		{\label{antikink-3d}
			\includegraphics[width=2.2in]{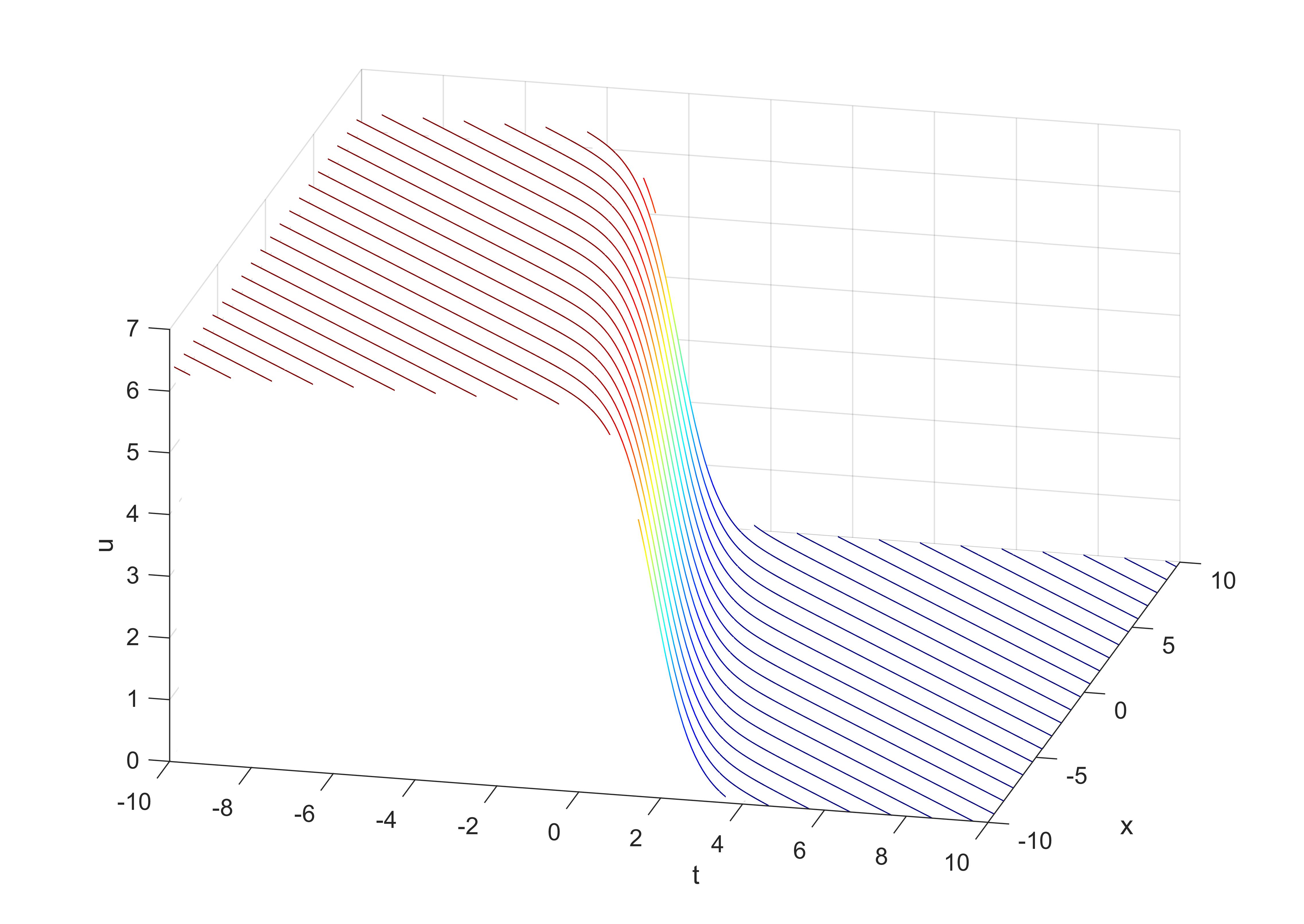}
		}
		\caption{Kink and anti-kink solution $u_k$ for semi-discrete gsG equation. (a)$p=-0.5$, (b)$p=0.5$.}\label{1-soliton-3d}
	\end{figure}
	\begin{figure}[H]
		\centering
		\subfigure[]
		{
			\label{kink-1}
			\includegraphics[width=2.2in]{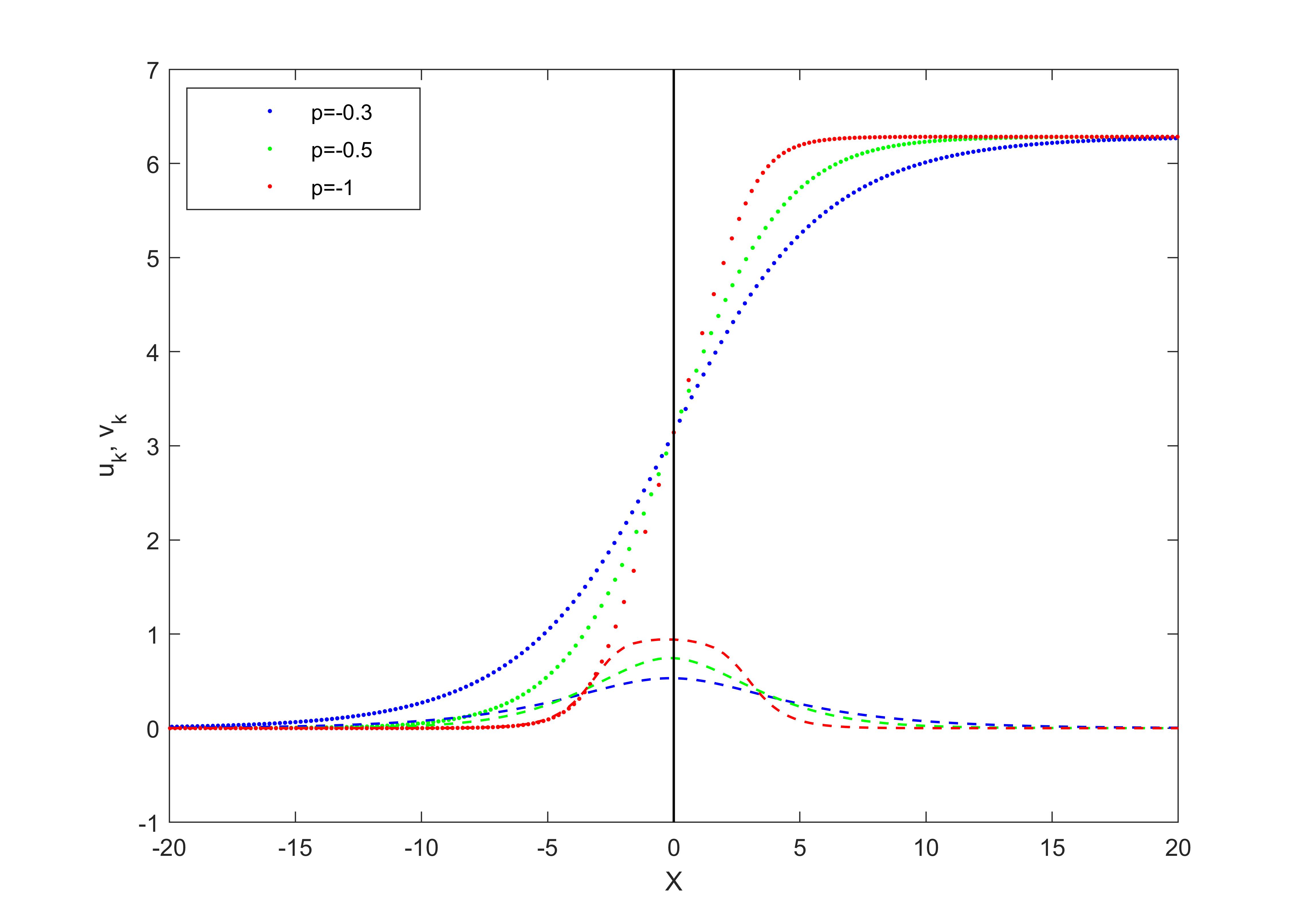}
		}
		\hspace*{3em}
		\subfigure[]
		{\label{antikink-2}
			\includegraphics[width=2.2in]{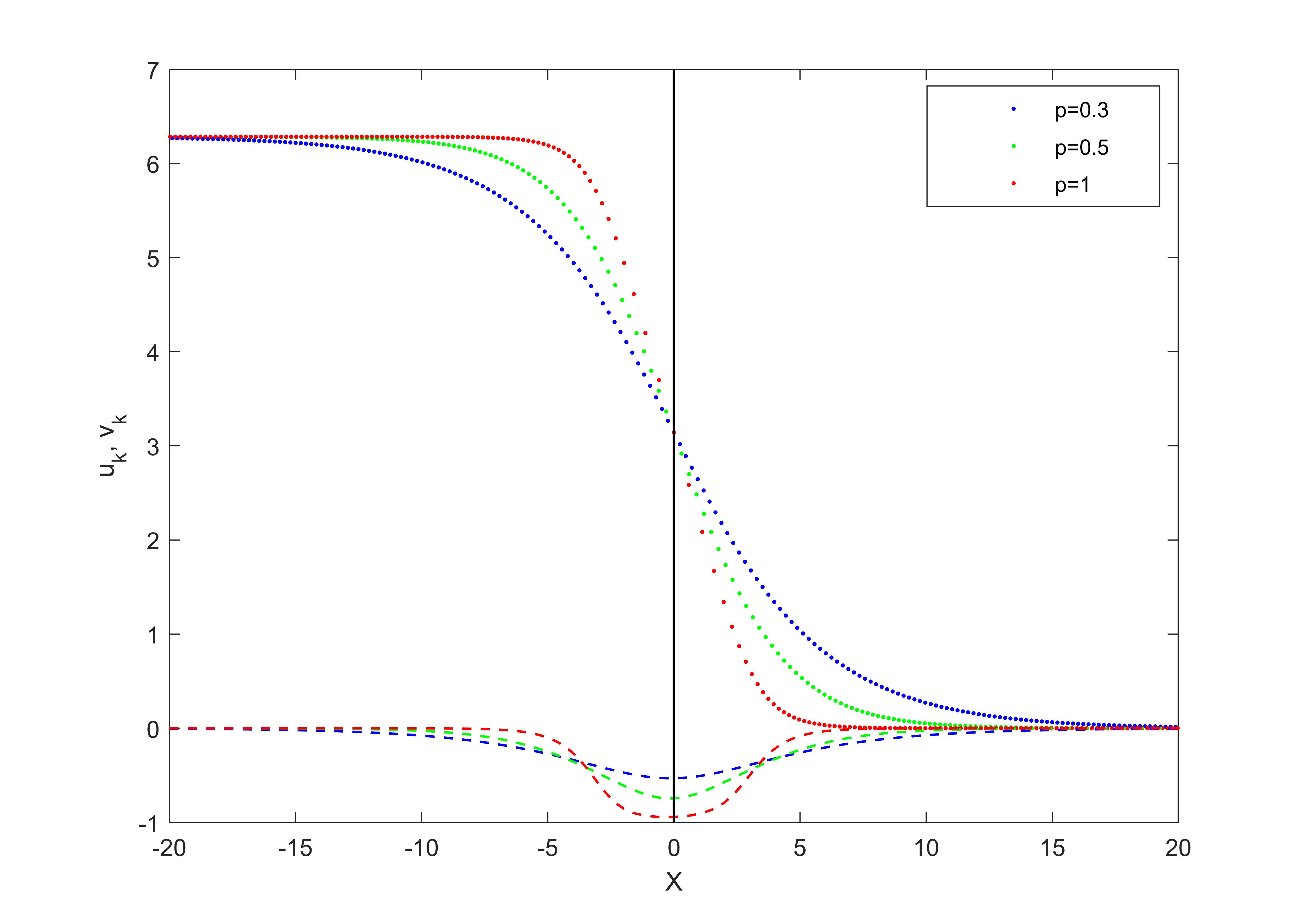}
		}
		\caption{Kink and anti-kink solution $u_k$ for semi-discrete gsG equation and corresponding $v_k\equiv\frac{u_{k+1}-u_k}{\delta_k}$ with different $p$ at $t=0$; dot: $u_k$, dashed line: $v_k$.}\label{1-soliton-fig}
	\end{figure}
	\begin{figure}[H]
		\centering
		\subfigure[]
		{
			\label{sd-kink-1}
			\includegraphics[width=2.2in]{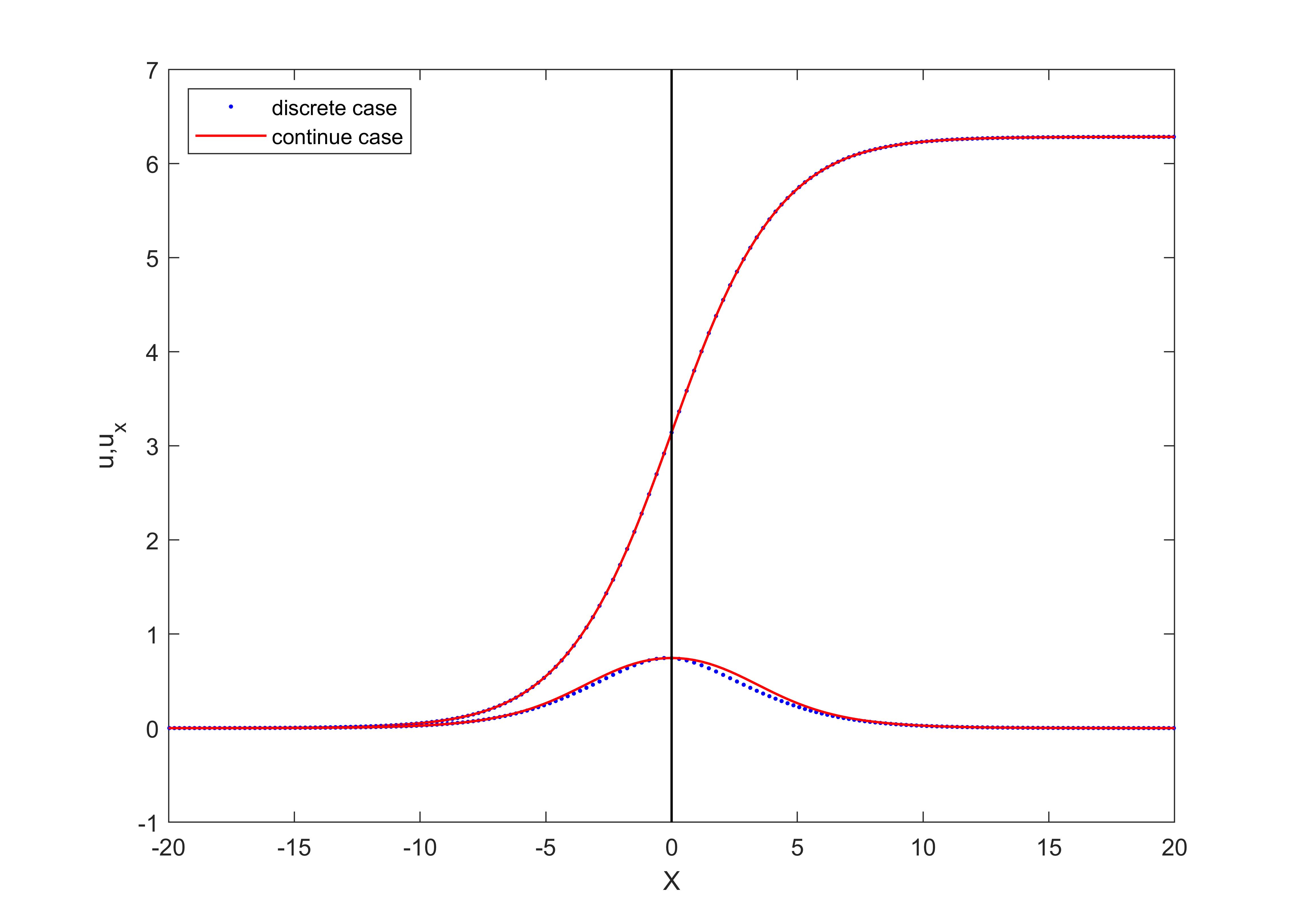}
		}
		\hspace*{3em}
		\subfigure[]
		{\label{sd-antikink-2}
			\includegraphics[width=2.2in]{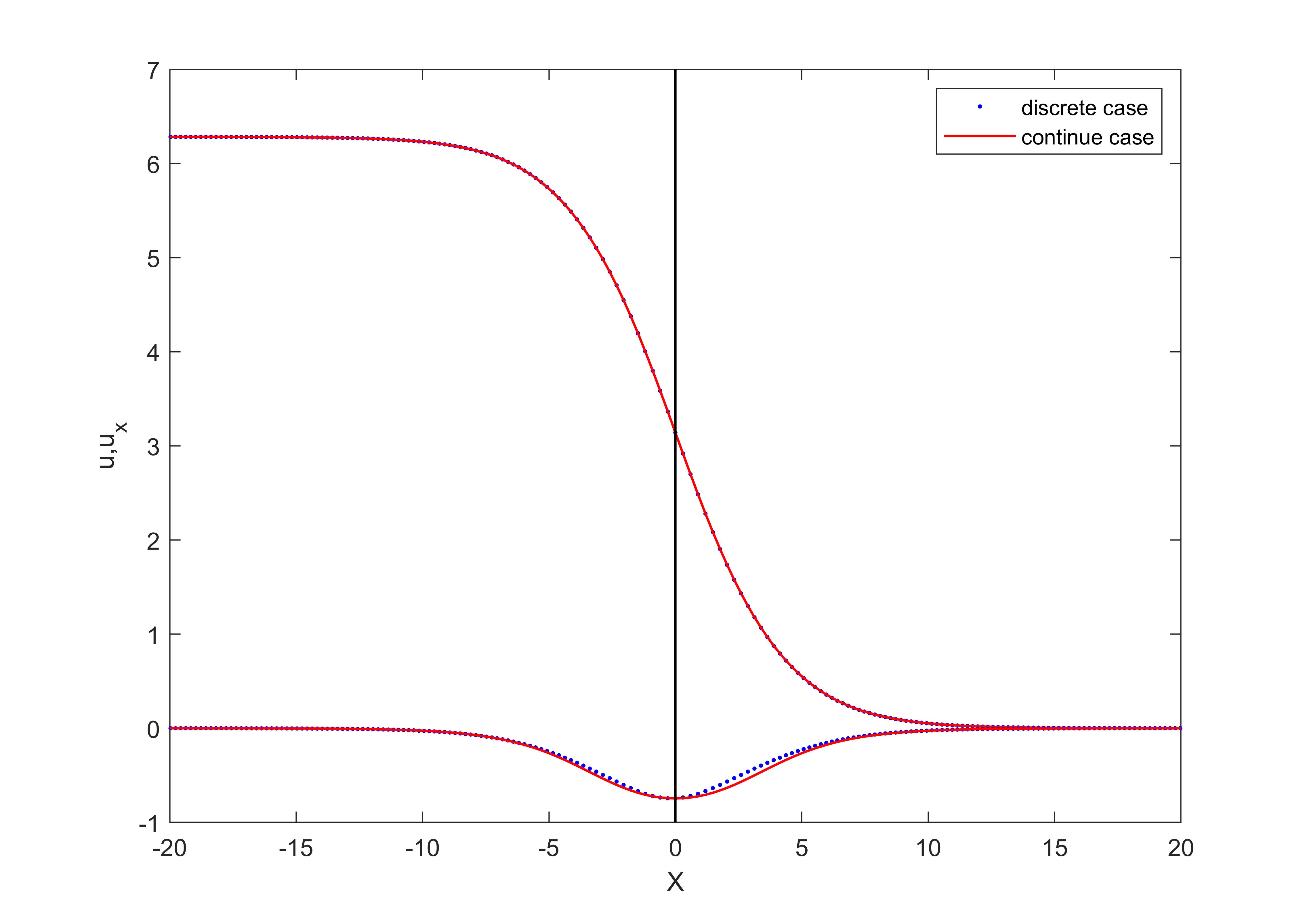}
		}
		\caption{Comparison between the kink and anti-kink solution of the gsG equation(solid line) and the semi-discrete gsG equation(dot) at $t=0$. (a)$p=-0.5$, (b)$p=0.5$.}\label{sd-1-soliton-fig}
	\end{figure}
	
	For the fully discrete gsG equation with $\nu=1$, the $\tau$-functions are
	\begin{align}
	f_k^l\propto 1+\mathrm{i}\sqrt{\frac{1-\mathrm{i}p}{1+\mathrm{i}p}}\left(\frac{1-a p}{1+a p}\right)^{k}\left(\frac{1-b p^{-1}}{1+b p^{-1}}\right)^{l},\ g_k^l\propto 1+\mathrm{i}\sqrt{\frac{1+\mathrm{i}p}{1-\mathrm{i}p}}\left(\frac{1-a p}{1+a p}\right)^{k}\left(\frac{1-b p^{-1}}{1+b p^{-1}}\right)^{l}.
	\end{align}
	Then the one-soliton solution can be expressed as
	\begin{align}
	&u_k^l=-2\arctan(\sqrt{1+p^2}\sinh\zeta(k,l))+\pi,\\
	&x_k=2ka-2lb+2\arctan(p\tanh \zeta(k,l))-2\arctan p,
	\end{align}
	where $\zeta(k,l)=k\ln\frac{1+ap}{1-ap}+l\ln\frac{1+bp^{-1}}{1-bp^{-1}}$. For $a=0.5$ and $b=0.1$, Figure \ref{fd-1-soliton-fig} shows kink and anti-kink solutions for the fully discrete gsG equation with $\nu=1$.
	\begin{figure}[H]
		\centering
		\subfigure[]
		{
			\label{fd-kink-1}
			\includegraphics[width=2.2in]{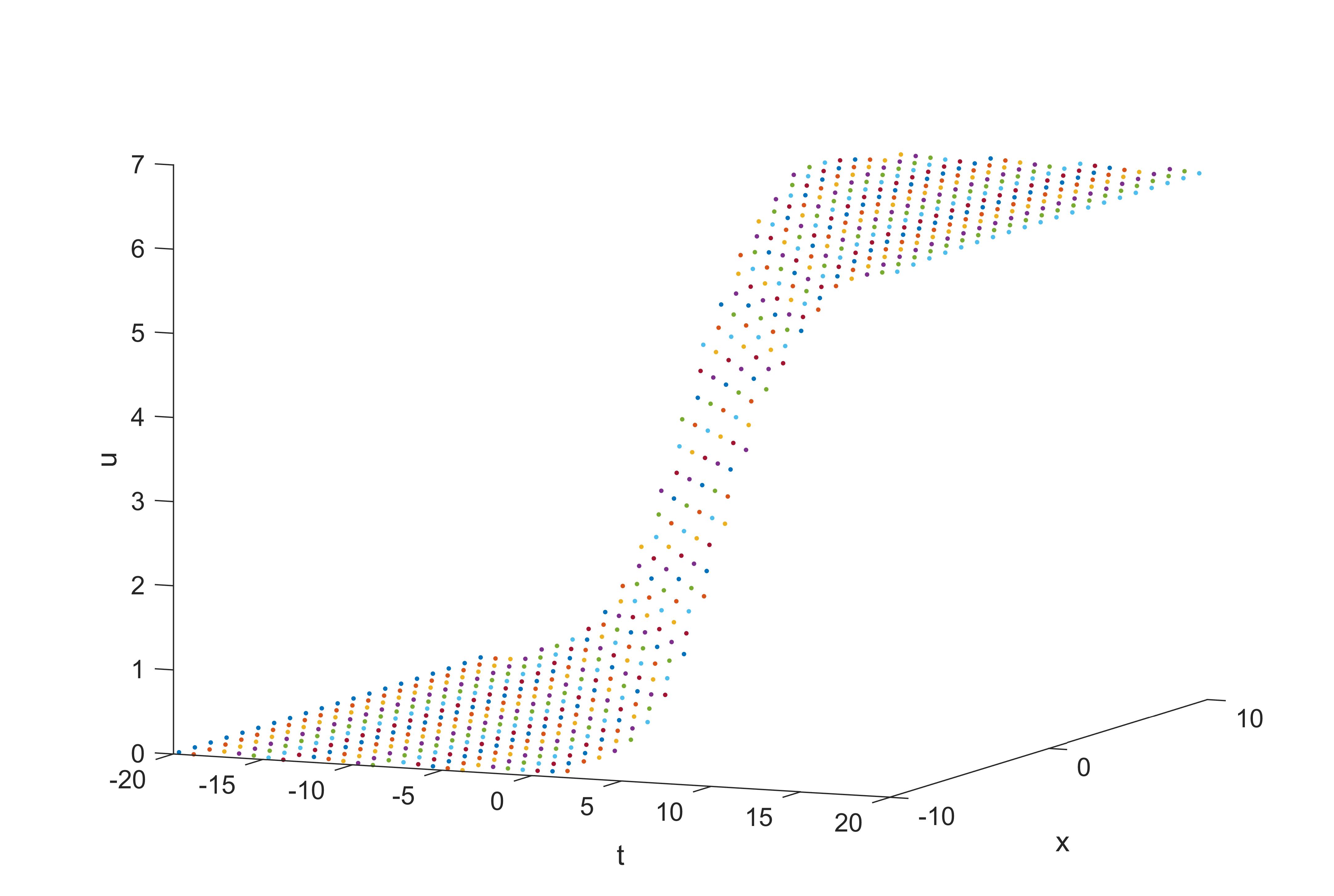}
		}
		\hspace*{3em}
		\subfigure[]
		{\label{fd-antikink-2}
			\includegraphics[width=2.2in]{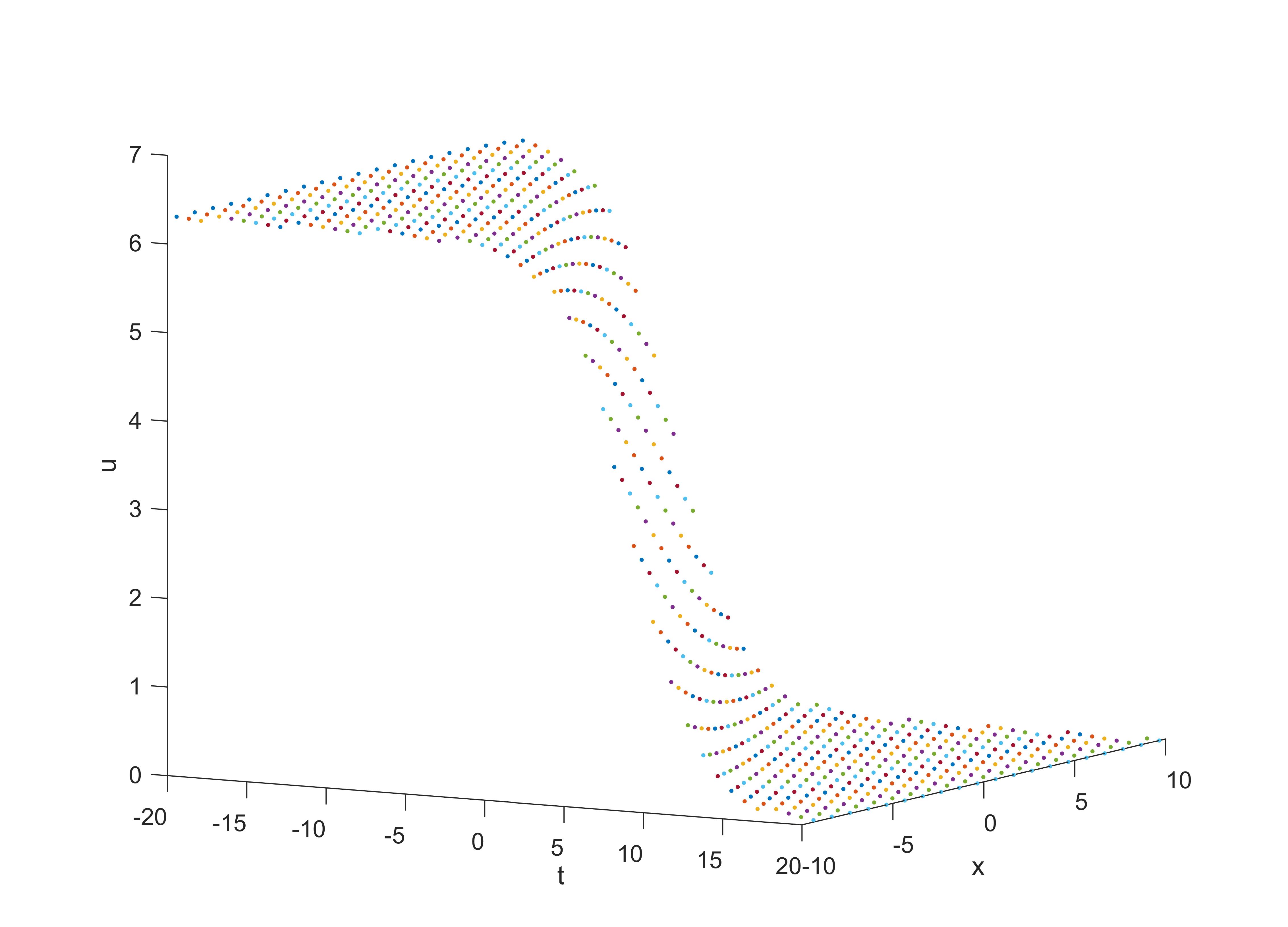}
		}
		\caption{Kink and anti-kink solution $u_k^l$ for fully discrete gsG equation with $\nu=1$. (a)$p=-0.5$, (b)$p=0.5$.}\label{fd-1-soliton-fig}
	\end{figure}
	For $\nu=-1$, one-soliton solutions of the fully discrete gsG equation may be multi-valued. The $\tau$-functions can be expressed as
	\begin{align}
	f_k^l\propto 1+\mathrm{i}\left(\frac{1-a p}{1+a p}\right)^{k}\left(\frac{1-b p^{-1}}{1+b p^{-1}}\right)^{l},\ g_k^l\propto 1+\mathrm{i}\frac{1+p}{1-p}\left(\frac{1-a p}{1+a p}\right)^{k}\left(\frac{1-b p^{-1}}{1+b p^{-1}}\right)^{l}.
	\end{align}
	The one-soliton solution can therefore be written as
	\begin{align}
	&u_k^l=-2\arctan (\sinh\zeta(k,l)-p\cosh\zeta(k,l))+\pi,\\
	&x_k^l=2ka+2lb+\ln \left(\frac{1+p^2}{\left(1-p\right)^2}-\frac{2 p}{\left(1-p\right)^2} \tanh \zeta(k,l)\right),
	\end{align}
	where $\zeta(k,l)=k\ln\frac{1+ap}{1-ap}+l\ln\frac{1+bp^{-1}}{1-bp^{-1}}$. Figure \ref{fd-1-soliton-m1} shows one-soliton solutions to the fully discrete gsG equation with $\nu=-1$ for $a=0.1$ and $b=0.1$. Figure \ref{fd-rkink-m1} illustrates that when $p = -0.4$, $u$ is a single-valued kink solution because $x_{k+1}^l-x_k^l>0$. Figure \ref{fd-irrkink-m1} presents irregular kink solution from \cite{gsg1}, which exhibits a three-valued characteristic at $p= -0.9$, whereas for $p = -1.3$, $u$ becomes loop soliton(see Figure \ref{fd-loop-m1}).
	
	\begin{figure}[H]
		\centering
		\subfigure[regular kink]
		{
			\label{fd-rkink-m1}
			\includegraphics[width=1.5in]{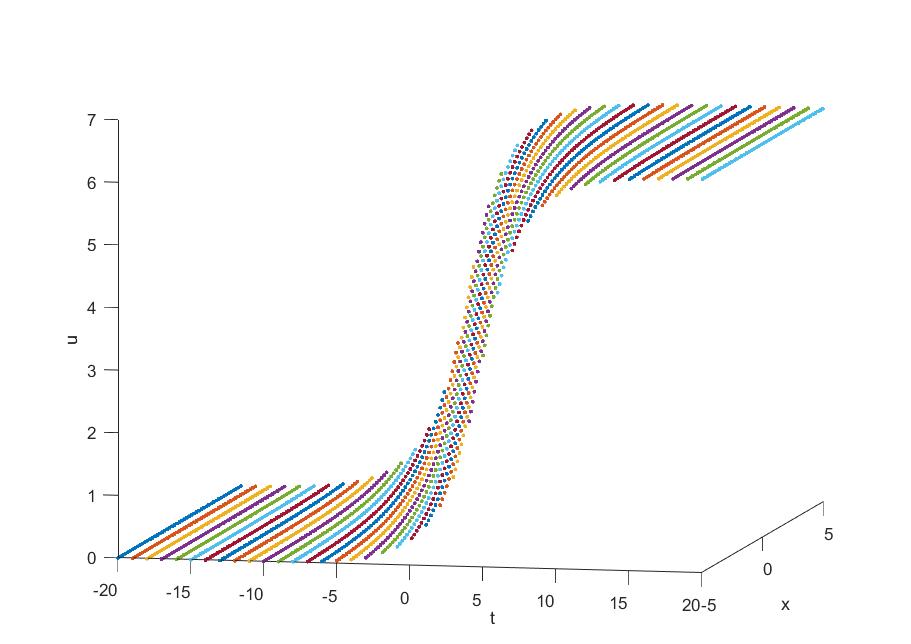}
		}
		\hspace*{3em}
		\subfigure[irregular kink]
		{\label{fd-irrkink-m1}
			\includegraphics[width=1.5in]{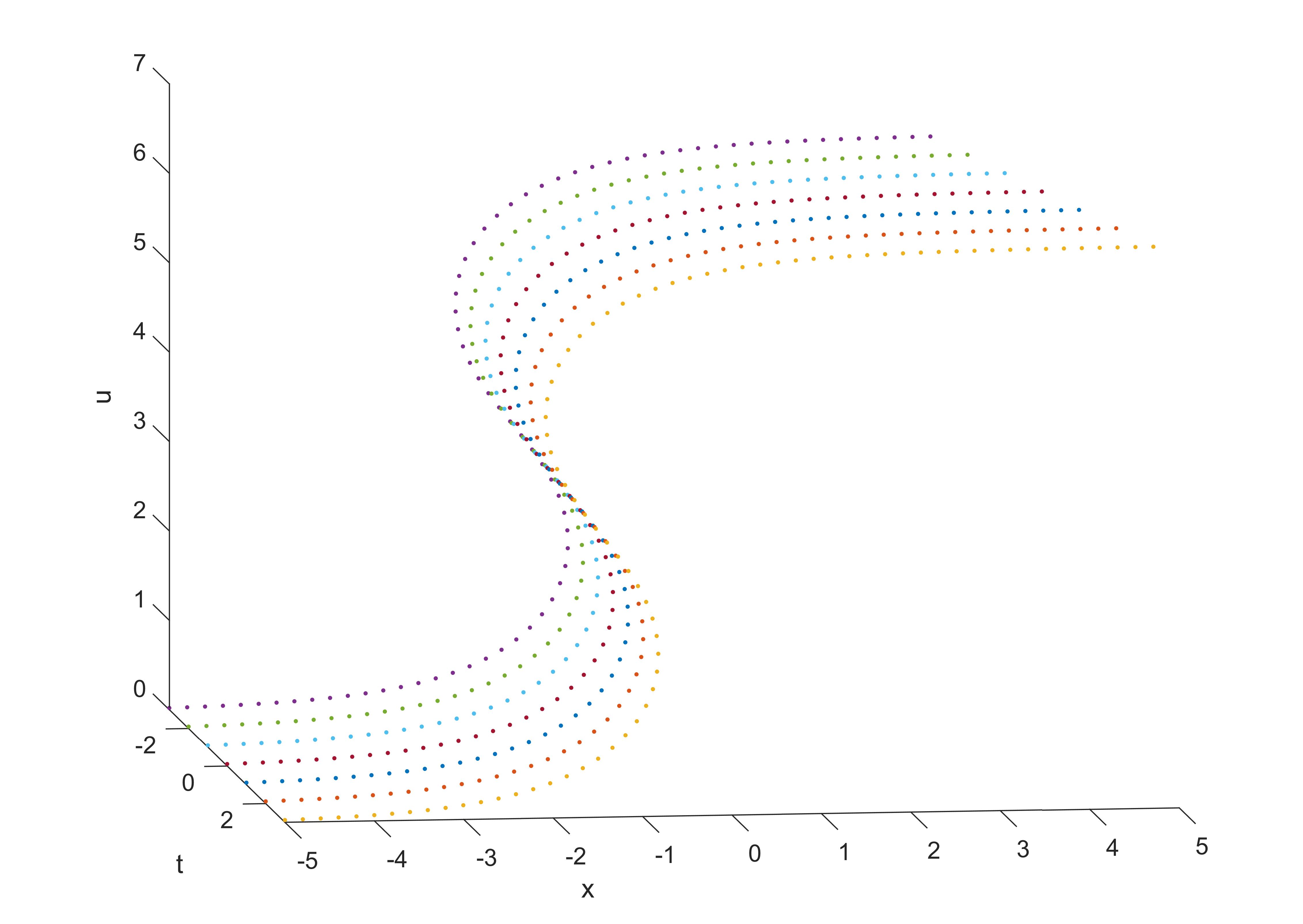}
		}
		\hspace*{3em}
		\subfigure[loop soliton]
		{\label{fd-loop-m1}
			\includegraphics[width=1.5in]{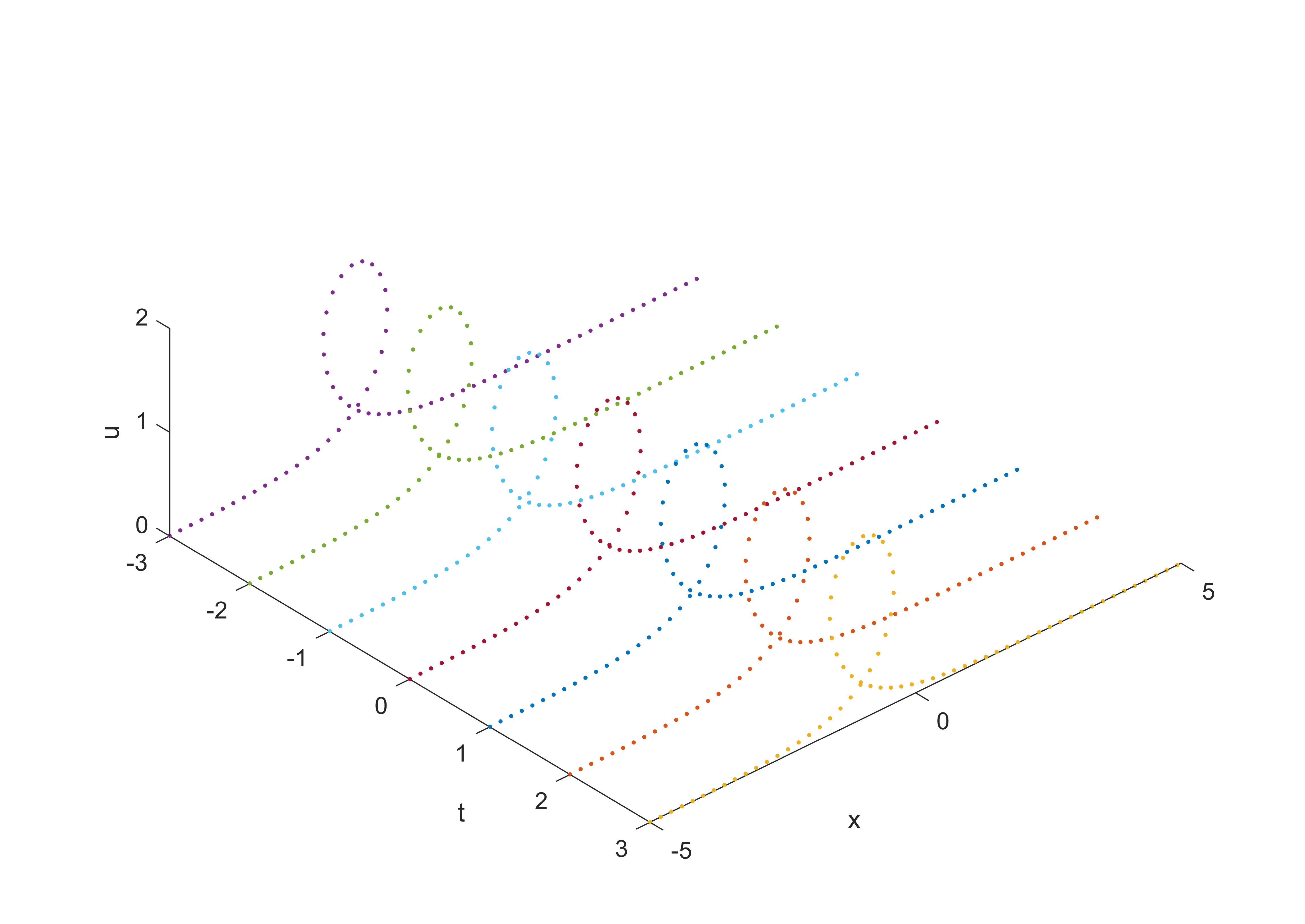}
		}
		\caption{Regular kink, irregular kink and loop soliton solution $u_k^l$ for fully discrete gsG equation with $\nu=-1$. (a)$p=-0.4$, (b)$p=-0.9$, (c)$p=-1.3$.}\label{fd-1-soliton-m1}
	\end{figure}
	
	\subsection{Two-soliton solutions}
	The tau-functions for the two-soliton solutions of the continuous and discrete gsG equations are given by
	
	{\bf(I) the gsG equation with $\nu=1$}
	\begin{align}
	&f\propto 1+\mathrm{i} \frac{p_1+p_2}{p_2-p_1}\sqrt{\frac{1-\mathrm{i}p_1}{1+\mathrm{i}p_1}}e^{-\zeta_1}-\mathrm{i} \frac{p_1+p_2}{p_2-p_1}\sqrt{\frac{1-\mathrm{i}p_2}{1+\mathrm{i}p_2}}e^{-\zeta_2}+\sqrt{\frac{1-\mathrm{i}p_1}{1+\mathrm{i}p_1}}\sqrt{\frac{1-\mathrm{i}p_2}{1+\mathrm{i}p_2}}e^{-\zeta_1-\zeta_2},\\
	&g\propto 1+\mathrm{i} \frac{p_1+p_2}{p_2-p_1}\sqrt{\frac{1+\mathrm{i}p_1}{1-\mathrm{i}p_1}}e^{-\zeta_1}-\mathrm{i} \frac{p_1+p_2}{p_2-p_1}\sqrt{\frac{1+\mathrm{i}p_2}{1-\mathrm{i}p_2}}e^{-\zeta_2}+\sqrt{\frac{1+\mathrm{i}p_1}{1-\mathrm{i}p_1}}\sqrt{\frac{1+\mathrm{i}p_2}{1-\mathrm{i}p_2}}e^{-\zeta_1-\zeta_2},
	\end{align}
	with
	\begin{align}
	\zeta_i=p_iy+\frac{\tau}{p_i}+\zeta_{i0},\ i=1,2,
	\end{align}
	
	{\bf(II) the semi-discrete gsG equation with $\nu=1$}
	\begin{align}
	&f_k\propto 1+\mathrm{i} \frac{p_1+p_2}{p_2-p_1}\sqrt{\frac{1-\mathrm{i}p_1}{1+\mathrm{i}p_1}}z_1^{k}e^{-\theta_1}-\mathrm{i} \frac{p_1+p_2}{p_2-p_1}\sqrt{\frac{1-\mathrm{i}p_2}{1+\mathrm{i}p_2}}z_2^{k}e^{-\theta_2}+\sqrt{\frac{1-\mathrm{i}p_1}{1+\mathrm{i}p_1}}\sqrt{\frac{1-\mathrm{i}p_2}{1+\mathrm{i}p_2}}(z_1z_2)^{k}e^{-\theta_1-\theta_2},\\
	&g_k\propto 1+\mathrm{i} \frac{p_1+p_2}{p_2-p_1}\sqrt{\frac{1+\mathrm{i}p_1}{1-\mathrm{i}p_1}}z_1^{k}e^{-\theta_1}-\mathrm{i} \frac{p_1+p_2}{p_2-p_1}\sqrt{\frac{1+\mathrm{i}p_2}{1-\mathrm{i}p_2}}z_2^{k}e^{-\theta_2}+\sqrt{\frac{1+\mathrm{i}p_1}{1-\mathrm{i}p_1}}\sqrt{\frac{1+\mathrm{i}p_2}{1-\mathrm{i}p_2}}(z_1z_2)^{k}e^{-\theta_1-\zeta_2},
	\end{align}
	with
	\begin{align}
	z_i=\frac{1-ap_i}{1+ap_i},\
	\theta_i=\frac{\tau}{p_i}+\theta_{i0},\ i=1,2,
	\end{align}
	
	{\bf(III) the fully discrete gsG equation with $\nu=1$}
	\begin{align}
	&f_k^l\propto 1+\mathrm{i} \frac{p_1+p_2}{p_2-p_1}\sqrt{\frac{1-\mathrm{i}p_1}{1+\mathrm{i}p_1}}z_1^{k}w_1^l-\mathrm{i} \frac{p_1+p_2}{p_2-p_1}\sqrt{\frac{1-\mathrm{i}p_2}{1+\mathrm{i}p_2}}z_2^{k}w_2^l+\sqrt{\frac{1-\mathrm{i}p_1}{1+\mathrm{i}p_1}}\sqrt{\frac{1-\mathrm{i}p_2}{1+\mathrm{i}p_2}}(z_1z_2)^{k}(w_1w_2)^l,\\
	&g_k^l\propto 1+\mathrm{i} \frac{p_1+p_2}{p_2-p_1}\sqrt{\frac{1+\mathrm{i}p_1}{1-\mathrm{i}p_1}}z_1^{k}w_1^l-\mathrm{i} \frac{p_1+p_2}{p_2-p_1}\sqrt{\frac{1+\mathrm{i}p_2}{1-\mathrm{i}p_2}}z_2^{k}w_2^l+\sqrt{\frac{1+\mathrm{i}p_1}{1-\mathrm{i}p_1}}\sqrt{\frac{1+\mathrm{i}p_2}{1-\mathrm{i}p_2}}(z_1z_2)^{k}(w_1w_2)^l,
	\end{align}
	with
	\begin{align}
	z_i=\frac{1-ap_i}{1+ap_i},\
	w_i=\frac{1-bp_i^{-1}}{1+bp_i^{-1}},\ i=1,2,
	\end{align}
	
	{\bf(4) the fully discrete gsG equation with $\nu=-1$}
	\begin{align}
	&f_k^l\propto 1+\mathrm{i} \frac{p_1+p_2}{p_2-p_1}z_1^{k}w_1^l-\mathrm{i} \frac{p_1+p_2}{p_2-p_1}z_2^{k}w_2^l+(z_1z_2)^{k}(w_1w_2)^l,\\
	&g_k^l\propto 1+\mathrm{i} \frac{p_1+p_2}{p_2-p_1}{\frac{1+p_1}{1-p_1}}z_1^{k}w_1^l-\mathrm{i} \frac{p_1+p_2}{p_2-p_1}{\frac{1+p_2}{1-p_2}}z_2^{k}w_2^l+{\frac{1+p_1}{1-p_1}}{\frac{1+p_2}{1-p_2}}(z_1z_2)^{k}(w_1w_2)^l,
	\end{align}
	with
	\begin{align}
	z_i=\frac{1-ap_i}{1+ap_i},\
	w_i=\frac{1-bp_i^{-1}}{1+bp_i^{-1}},\ i=1,2.
	\end{align}
	
	In \cite{gsg-1,gsg1}, collisions between several types of one-soliton solutions for the continuous gsG equation with $\nu = \pm1$ are shown. Collisions of solutions are similar in the discrete case, thus we omit here. And we demonstrate a different type of solution known as breathers. As pointed out in \cite{gsg-1,gsg1} and Theorem \ref{det-sol}, if we set $p_1=\bar{p}_2$, one can obtain breather solutions. Figure \ref{sd-bre} displays the comparison between the breather solutions of the gsG and the semi-discrete gsG equation with $\nu=1$ for $a=0.1,\ p_1=-0.3+0.5\mathrm{i},\ p_2=-0.3-0.5\mathrm{i}$, in which one can find that  the breather solution of the semi-discrete gsG equation agrees with that of the gsG equation very well. And Figure \ref{fd-bre-fig} shows such kind of breather solutions also appear in fully discrete gsG equations with $\nu=\pm1$.
	
	\begin{figure}[H]
		\centering
		\subfigure[$t=0$]
		{
			\label{sd-bre-1}
			\includegraphics[width=1.5in]{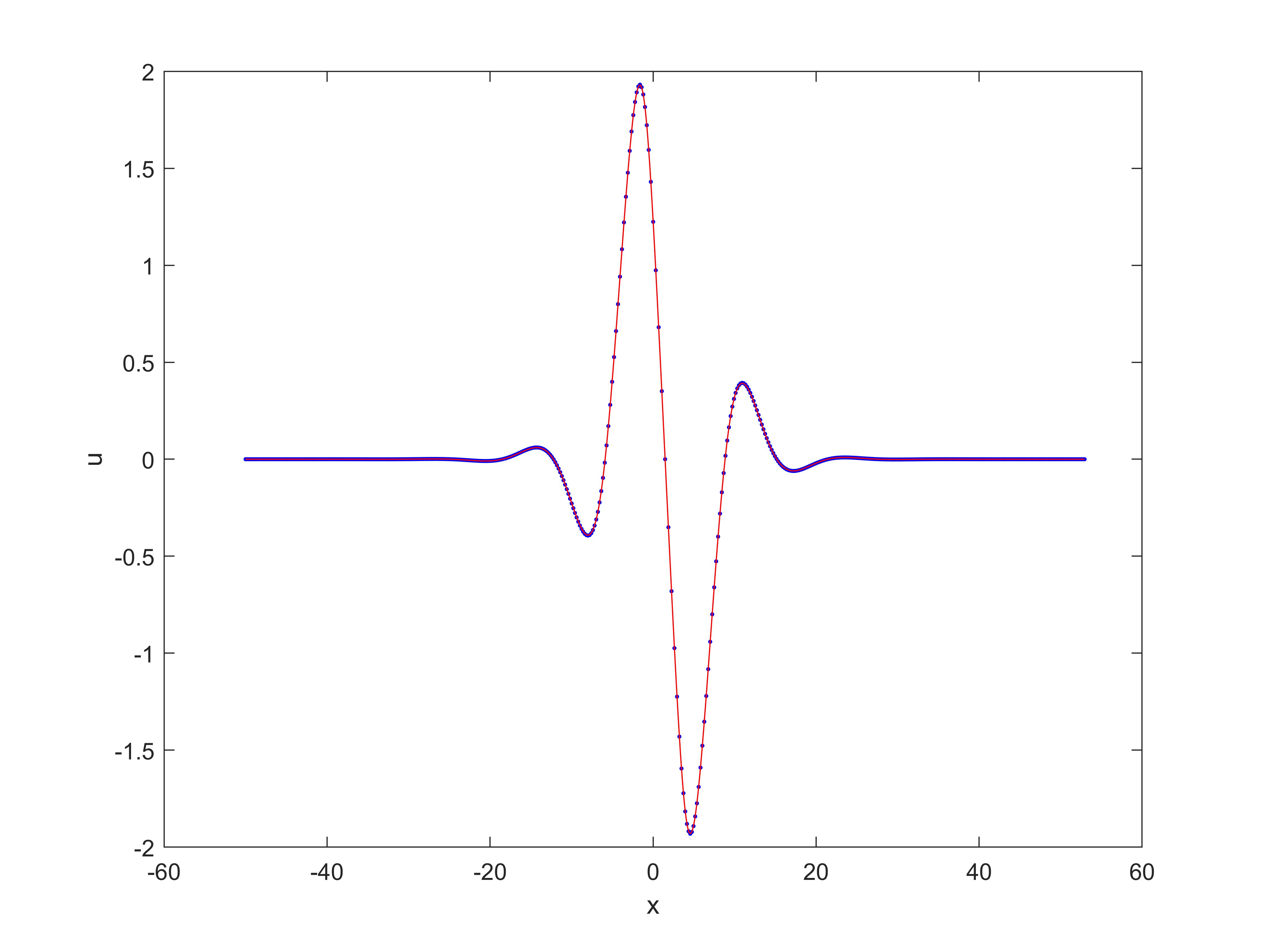}
		}
		\hspace*{3em}
		\subfigure[$t=5$]
		{\label{sd-bre-2}
			\includegraphics[width=1.5in]{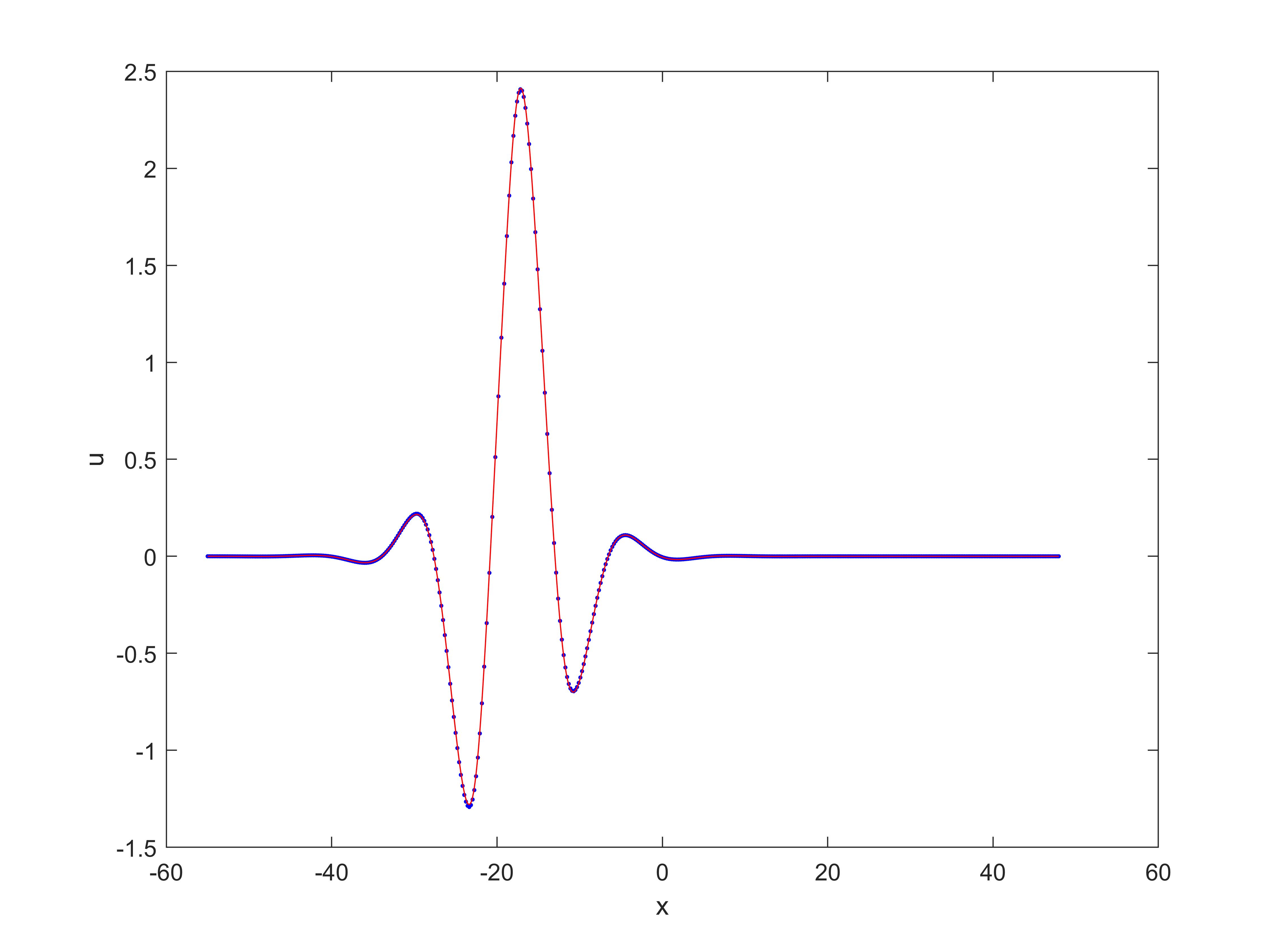}
		}
		\hspace*{3em}
		\subfigure[$t=10$]
		{\label{sd-bre-3}
			\includegraphics[width=1.5in]{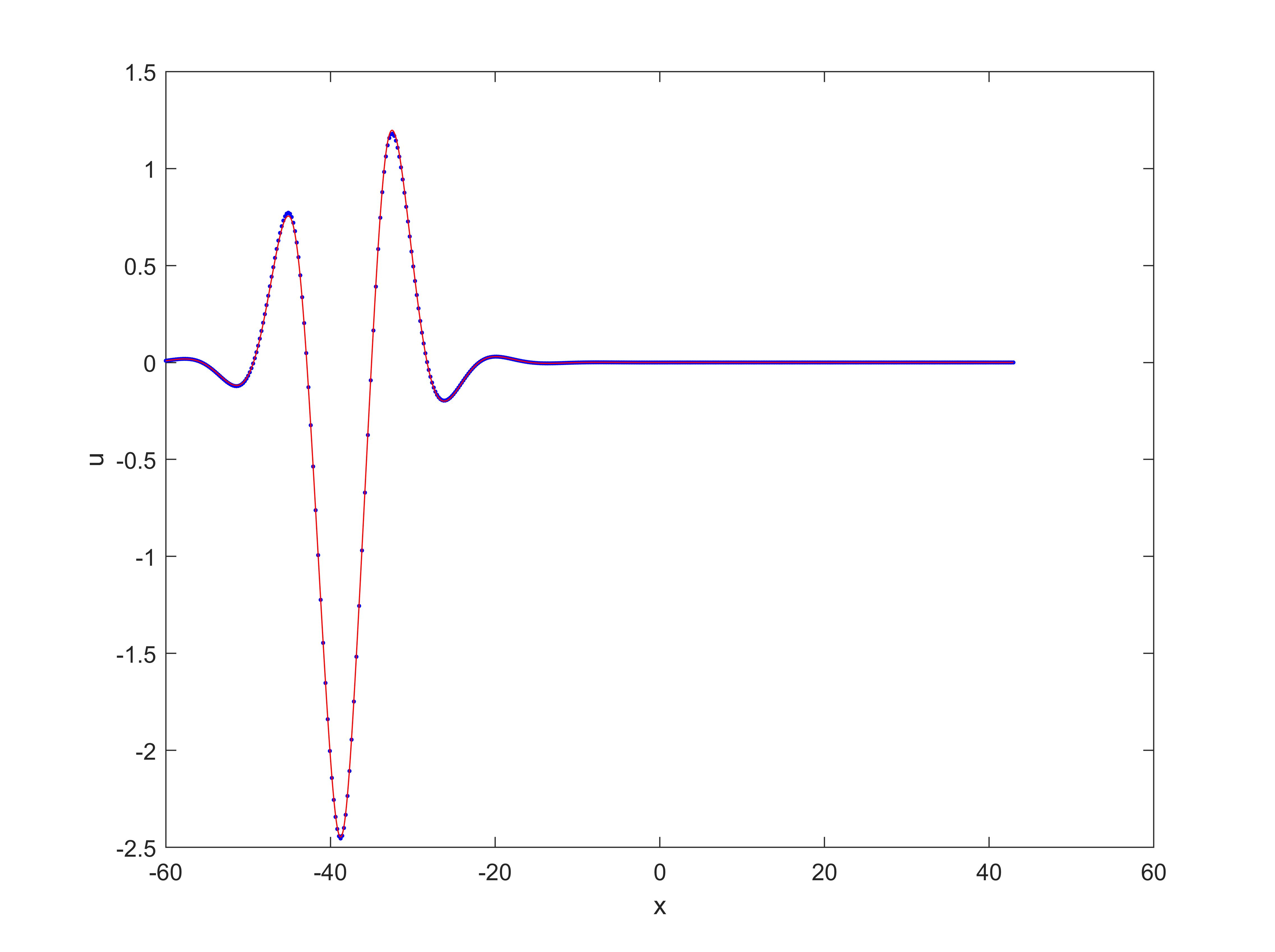}
		}
		\caption{Comparison between the breather solutions of the gsG(solid line) and the semi-discrete gsG equation(dot) with $\nu=1$ for $a=0.1,\ p_1=-0.3+0.5\mathrm{i},\ p_2=-0.3-0.5\mathrm{i}$.}\label{sd-bre}
	\end{figure}
	
	\begin{figure}[H]
		\centering
		\subfigure[]
		{
			\label{fd-bre-1}
			\includegraphics[width=2.2in]{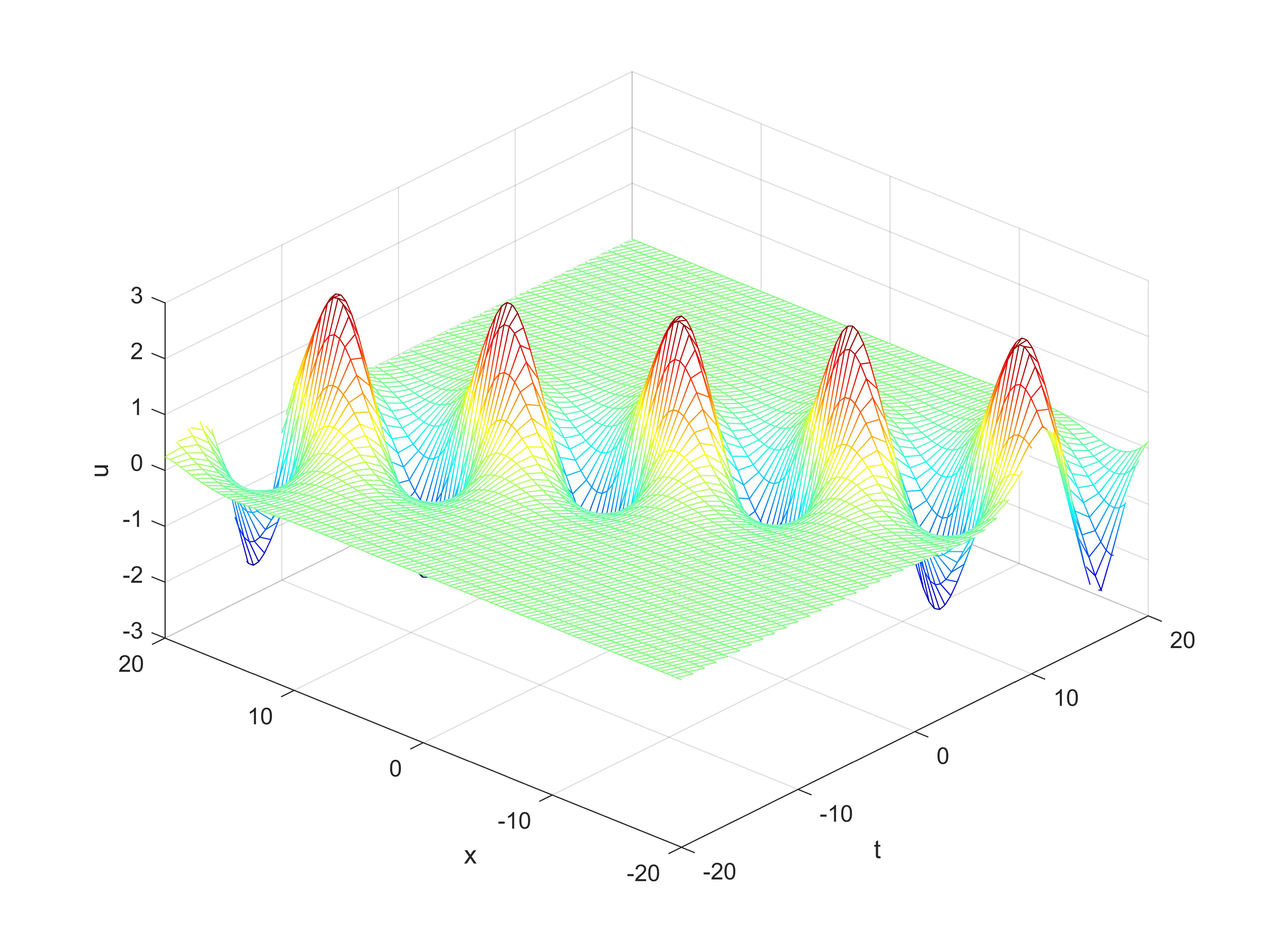}
		}
		\hspace*{3em}
		\subfigure[]
		{\label{fd-bre-2}
			\includegraphics[width=2.2in]{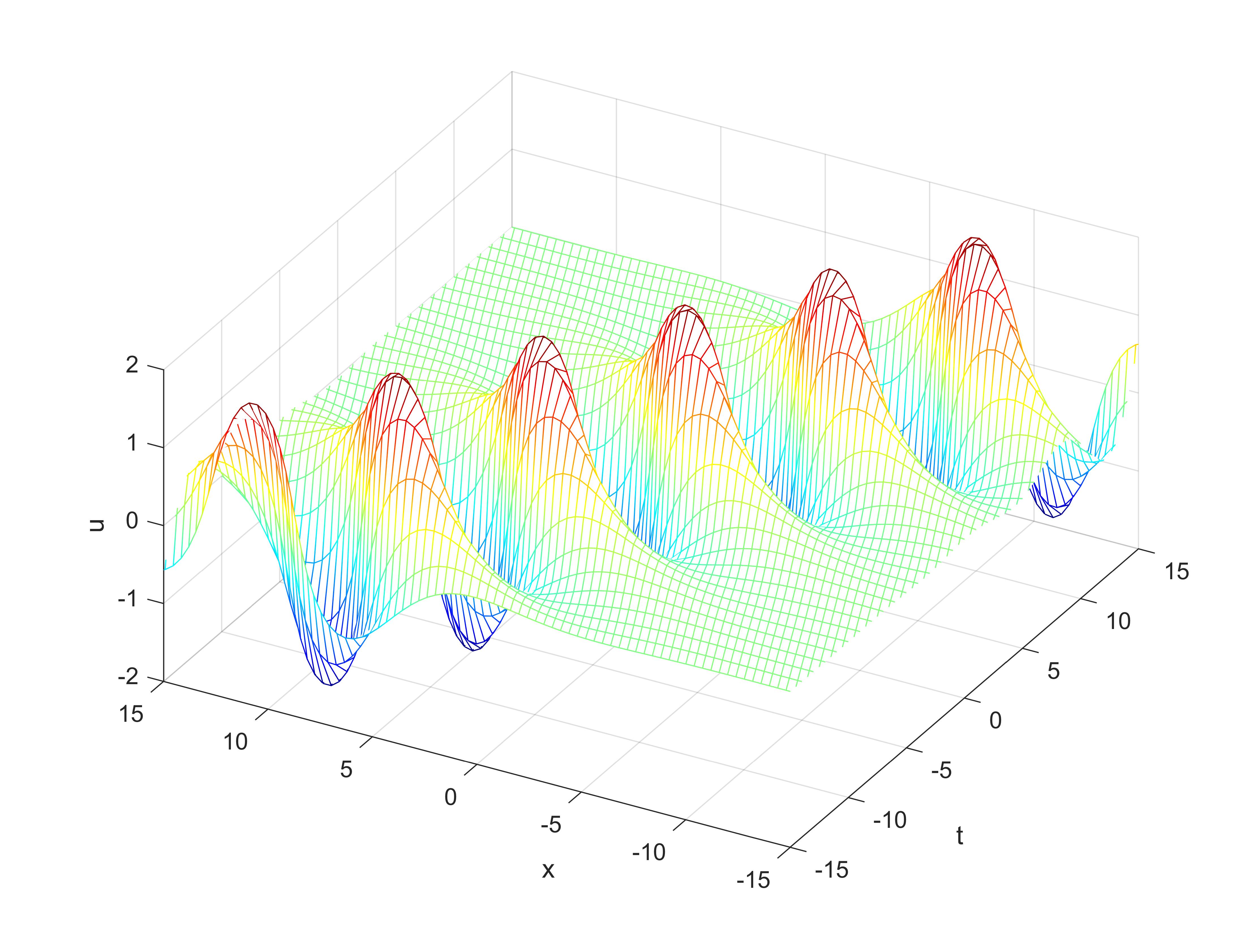}
		}
		\caption{Breather solutions $u_k^l$ for fully discrete gsG equation with $a=b=0.2,\ p_1=-0.3+0.5\mathrm{i},\ p_2=-0.3-0.5\mathrm{i}$. (a)$\nu=1$, (b)$\nu=-1$.}\label{fd-bre-fig}
	\end{figure}
	
	\section{Conclusion}\label{sec6}
	In this paper, we have successfully proposed integrable semi-discrete and full-discrete analogues of a generalized sine-Gordon equation. Determinant formulations for the $N$-soliton solutions, encompassing multi-kink solitons and multi-breather solutions, have been derived for both the continuous and discrete versions of the gsG equations. We have also investigated reductions from the gsG equation to the sG equation and the SP equation, both in continuous and discrete cases. Notably, we have demonstrated the essential role of the B{\"a}cklund transformation of bilinear equations and its parameters in the construction and reduction processes.
However, certain aspects still remain unknown. Firstly, it is crucial to determine the Lax pairs associated with the semi-discrete and full-discrete gsG equations presented in this study. Given the close relation between the bilinear forms of these equations and those of the 2DTL equation, derived from the discrete KP equation, it is natural to explore their connections within the context of Lax pairs. Identification of Lax pairs would enable further investigations into these discrete gsG equations. Recent attention has been focused on multi-component integrable systems, such as the integrable vector sine-Gordon equation \cite{vectorsg,vectorsg1,vectorsg2,vectorsg3} and the multi-component short pulse equation \cite{multisp,Feng1}. Given that the gsG equation lies between the SP equation and the sG equation, it is reasonable to propose multi-component gsG equations by establishing connections with the SP equation and the sG equation. These intriguing questions will be addressed in future studies.

	\section*{Acknowledgement}
	G. Yu is supported by National Natural Science Foundation of
	China (Grant no. 12175155), Shanghai Frontier Research Institute for Modern Analysis and the Fundamental Research Funds for the Central Universities. B.F. Feng's work is supported by the U.S.
	Department of Defense (DoD), Air Force for Scientific Research
	(AFOSR) under grant No. W911NF2010276.
	
	\appendix
	\section{Proof of Theorem \ref{fdis-nu1}}\label{theo-fdis-nu1}
	\begin{proof}
		We can rewrite \eqref{dis-gsg2} and \eqref{dis-gsg3}-\eqref{dis-gsg5} as
		\begin{align}
		\frac{1}{b}\sinh\frac{\varphi_k^{l+1}-\varphi_{k}^l}{2}&=\lambda\sin\frac{u_k^{l+1}+u_{k}^l}{2},\label{Dis-1gsg2}\\
		\frac{1}{b }\cosh\frac{\varphi_k^{l+1}-\varphi_k^l}{2}&=\frac{\sqrt{b^2\lambda^2+1}}{b}\sin\frac{\tilde{x}_k^{l+1}-\tilde{x}_k^l+2b\lambda+2\omega_1}{2},\ \sin\omega_1=\frac{1}{\sqrt{b^2\lambda^2+1}},\label{Dis-1gsg3}\\
		a\sinh\frac{\varphi_k^l+\varphi_{k+1}^l}{2}&=d\sin\frac{u_{k+1}^l-u_k^l}{2},\label{Dis-1gsg4}\\
		a\cosh\frac{\varphi_k^l+\varphi_{k+1}^l}{2}&=\sqrt{a^2+\lambda^2}\sin\frac{\tilde{x}_{k+1}^l-\tilde{x}_k^l-2a\lambda^{-1}+2\omega_2}{2},\ \sin\omega_2=\frac{a}{\sqrt{a^2+\lambda^2}}.\label{Dis-1gsg5}
		\end{align}
		By making a shift of $k\rightarrow k+1$ in \eqref{Dis-1gsg2}, then adding and subtracting it and \eqref{Dis-1gsg2}, one obtains
		\begin{align}
		&\sinh \frac{\varphi_{k+1}^{l+1}-\varphi_{k+1}^l+\varphi_k^{l+1}-\varphi_{k}^l}{4}\cosh \frac{\varphi_{k+1}^{l+1}-\varphi_{k+1}^l-\varphi_k^{l+1}+\varphi_{k}^l}{4}\notag\\
		&=bd\sin \frac{u_{k+1}^{l+1}+u_{k+1}^l+u_k^{l+1}+u_{k}^l}{4}
		\cos \frac{u_{k+1}^{l+1}+u_{k+1}^l-u_k^{l+1}-u_{k}^l}{4},\label{dis-1gsg9}
		\end{align}
		\begin{align}
		&\cosh \frac{\varphi_{k+1}^{l+1}-\varphi_{k+1}^l+\varphi_k^{l+1}-\varphi_{k}^l}{4}\sinh \frac{\varphi_{k+1}^{l+1}-\varphi_{k+1}^l-\varphi_k^{l+1}+\varphi_{k}^l}{4}\notag\\
		&=bd\cos \frac{u_{k+1}^{l+1}+u_{k+1}^l+u_k^{l+1}+u_{k}^l}{4}
		\sin \frac{u_{k+1}^{l+1}+u_{k+1}^l-u_k^{l+1}-u_{k}^l}{4}.\label{Dis-1gsg9}
		\end{align}
		Similarly, from equations \eqref{Dis-1gsg3}-\eqref{Dis-1gsg5}, we arrive at
		\begin{align}
		&\sqrt{1+b^2\lambda^2}\sin \frac{\tilde{x}_{k+1}^{l+1}-\tilde{x}_{k+1}^l+\tilde{x}_k^{l+1}-\tilde{x}_k^l+4b\lambda+4\omega_1}{4}\cos\frac{\tilde{x}_{k+1}^{l+1}-\tilde{x}_{k+1}^l-\tilde{x}_k^{l+1}+\tilde{x}_k^l}{4}\notag\\
		&=\cosh\frac{\varphi_{k+1}^{l+1}-\varphi_{k+1}^l+\varphi_k^{l+1}-\varphi_k^l}{4}\cosh\frac{\varphi_{k+1}^{l+1}-\varphi_{k+1}^l-\varphi_k^{l+1}+\varphi_k^l}{4},\label{dis-1gsg6}
		\end{align}
		\begin{align}
		&\sqrt{1+b^2\lambda^2}\cos \frac{\tilde{x}_{k+1}^{l+1}-\tilde{x}_{k+1}^l+\tilde{x}_k^{l+1}-\tilde{x}_k^l+4b\lambda+4\omega_1}{4}\sin\frac{\tilde{x}_{k+1}^{l+1}-\tilde{x}_{k+1}^l-\tilde{x}_k^{l+1}+\tilde{x}_k^l}{4}\notag\\
		&=\sinh\frac{\varphi_{k+1}^{l+1}-\varphi_{k+1}^l+\varphi_k^{l+1}-\varphi_k^l}{4}\sinh\frac{\varphi_{k+1}^{l+1}-\varphi_{k+1}^l-\varphi_k^{l+1}+\varphi_k^l}{4},\label{Dis-1gsg6}
		\end{align}
		\begin{align}
		&a\sinh \frac{\varphi_{k+1}^{l+1}+\varphi_{k+1}^l+\varphi_k^{l+1}+\varphi_k^l}{4}\cosh \frac{\varphi_{k+1}^{l+1}-\varphi_{k+1}^l+\varphi_k^{l+1}-\varphi_k^l}{4}\notag\\
		&=d\sin \frac{u_{k+1}^{l+1}+u_{k+1}^l-u_k^{l+1}-u_{k}^l}{4}\cos \frac{u_{k+1}^{l+1}-u_{k+1}^l-u_k^{l+1}+u_{k}^l}{4}, \label{dis-1gsg10}
		\end{align}
		\begin{align}
		&a\cosh \frac{\varphi_{k+1}^{l+1}+\varphi_{k+1}^l+\varphi_k^{l+1}+\varphi_k^l}{4}\sinh \frac{\varphi_{k+1}^{l+1}-\varphi_{k+1}^l+\varphi_k^{l+1}-\varphi_k^l}{4}\notag\\
		&=d\cos \frac{u_{k+1}^{l+1}+u_{k+1}^l-u_k^{l+1}-u_{k}^l}{4}\sin \frac{u_{k+1}^{l+1}-u_{k+1}^l-u_k^{l+1}+u_{k}^l}{4}, \label{Dis-1gsg10}
		\end{align}
		and
		\begin{align}
		&\sqrt{a^2+\lambda^2}\sin\frac{\tilde{x}_{k+1}^{l+1}-\tilde{x}_k^{l+1}+\tilde{x}_{k+1}^l-\tilde{x}_k^l-4a\lambda^{-1}+4\omega_2}{4}\cos\frac{\tilde{x}_{k+1}^{l+1}-\tilde{x}_{k+1}^l-\tilde{x}_k^{l+1}+\tilde{x}_k^l}{4}\notag\\
		&=a\cosh\frac{\varphi_{k+1}^{l+1}+\varphi_{k+1}^l+\varphi_k^{l+1}+\varphi_k^l}{4}\cosh\frac{\varphi_{k+1}^{l+1}-\varphi_{k+1}^l+\varphi_k^{l+1}-\varphi_k^l}{4},\label{dis-1gsg7}
		\end{align}
		\begin{align}
		&\sqrt{a^2+\lambda^2}\cos\frac{\tilde{x}_{k+1}^{l+1}-\tilde{x}_k^{l+1}+\tilde{x}_{k+1}^l-\tilde{x}_k^l-4a\lambda^{-1}+4\omega_2}{4}\sin\frac{\tilde{x}_{k+1}^{l+1}-\tilde{x}_{k+1}^l-\tilde{x}_k^{l+1}+\tilde{x}_k^l}{4}\notag\\
		&=a\sinh\frac{\varphi_{k+1}^{l+1}+\varphi_{k+1}^l+\varphi_k^{l+1}+\varphi_k^l}{4}\sinh\frac{\varphi_{k+1}^{l+1}-\varphi_{k+1}^l+\varphi_k^{l+1}-\varphi_k^l}{4},\label{Dis-1gsg7}
		\end{align}
		respectively. Note that equation \eqref{dis-1gsg9} and \eqref{Dis-1gsg10} give
		\begin{align}
		&\frac{1}{ab} \sin \frac{u_{k+1}^{l+1}-u_{k+1}^l-u_k^{l+1}+u_k^l}{4}\cosh\frac{\varphi_{k+1}^{l+1}-\varphi_{k+1}^l-\varphi_k^{l+1}+\varphi_k^l}{4}\notag \\
		&=\sin \frac{u_{k+1}^{l+1}+u_{k+1}^l+u_k^{l+1}+u_k^l}{4}\cosh\frac{\varphi_{k+1}^{l+1}+\varphi_{k+1}^l+\varphi_k^{l+1}+\varphi_k^l}{4} .\label{dis-1gsg1}
		\end{align}
		Equations \eqref{dis-1gsg6} and \eqref{dis-1gsg7} lead to
		\begin{align}
		&a\sqrt{1+b^2\lambda^2}\sin \frac{\tilde{x}_{k+1}^{l+1}-\tilde{x}_{k+1}^l+\tilde{x}_k^{l+1}-\tilde{x}_k^l+4b\lambda+4\omega_1}{4}\cosh\frac{\varphi_{k+1}^{l+1}+\varphi_{k+1}^l+\varphi_k^{l+1}+\varphi_k^l}{4}\notag\\
		&=\sqrt{a^2+\lambda^2}\sin\frac{\tilde{x}_{k+1}^{l+1}-\tilde{x}_k^{l+1}+\tilde{x}_{k+1}^l-\tilde{x}_k^l-4a\lambda^{-1}+4\omega_2}{4}\cosh\frac{\varphi_{k+1}^{l+1}-\varphi_{k+1}^l-\varphi_k^{l+1}+\varphi_k^l}{4},\label{dis-1gsg8}
		\end{align}
		Substituting \eqref{dis-1gsg8} into \eqref{dis-1gsg1}, we have
		\begin{align}
		\frac{1}{b} \sin \frac{u_{k+1}^{l+1}-u_{k+1}^l-u_k^{l+1}+u_k^l}{4}
		=\tilde{\Delta}_k^l\sin \frac{u_{k+1}^{l+1}+u_{k+1}^l+u_k^{l+1}+u_k^l}{4},\label{fdis-1gsG1}
		\end{align}
		with
		\begin{align}
		\tilde{\Delta}_k^l=\frac{\sqrt{a^2+\lambda^2}\sin\frac{\tilde{x}_{k+1}^{l+1}-\tilde{x}_k^{l+1}+\tilde{x}_{k+1}^l-\tilde{x}_k^l-4a\lambda^{-1}+4\omega_2}{4}}{\sqrt{1+b^2\lambda^2}\sin \frac{\tilde{x}_{k+1}^{l+1}-\tilde{x}_{k+1}^l+\tilde{x}_k^{l+1}-\tilde{x}_k^l+4b\lambda+4\omega_1}{4}}.\label{fdis-1gsG2}
		\end{align}
		Subsequently,  by multiplying equation \eqref{dis-1gsg9} and \eqref{Dis-1gsg9}, \eqref{dis-1gsg6} and \eqref{Dis-1gsg6}, we obtain
		\begin{align}
		&b^2\lambda^2\sin \frac{u_{k+1}^{l+1}+u_{k+1}^l+u_k^{l+1}+u_{k}^l}{2}
		\sin \frac{u_{k+1}^{l+1}+u_{k+1}^l-u_k^{l+1}-u_{k}^l}{2}\notag\\
		&=\sinh \frac{\varphi_{k+1}^{l+1}-\varphi_{k+1}^l+\varphi_k^{l+1}-\varphi_{k}^l}{2}\sinh \frac{\varphi_{k+1}^{l+1}-\varphi_{k+1}^l-\varphi_k^{l+1}+\varphi_{k}^l}{2},
		\end{align}
		\begin{align}
		&({1+b^2\lambda^2})\sin \frac{\tilde{x}_{k+1}^{l+1}-\tilde{x}_{k+1}^l+\tilde{x}_k^{l+1}-\tilde{x}_k^l+4b\lambda+4\omega_1}{2}\sin\frac{\tilde{x}_{k+1}^{l+1}-\tilde{x}_{k+1}^l-\tilde{x}_k^{l+1}+\tilde{x}_k^l}{2}\notag\\
		&=\sinh\frac{\varphi_{k+1}^{l+1}-\varphi_{k+1}^l+\varphi_k^{l+1}-\varphi_k^l}{2}\sinh\frac{\varphi_{k+1}^{l+1}-\varphi_{k+1}^l-\varphi_k^{l+1}+\varphi_k^l}{2},
		\end{align}
		which can be recast to
		\begin{align}
		&({1+b^2\lambda^2})\sin \frac{\tilde{x}_{k+1}^{l+1}-\tilde{x}_{k+1}^l+\tilde{x}_k^{l+1}-\tilde{x}_k^l+4b\lambda+4\omega_1}{2}\sin\frac{\tilde{x}_{k+1}^{l+1}-\tilde{x}_{k+1}^l-\tilde{x}_k^{l+1}+\tilde{x}_k^l}{2}\notag\\
		&=b^2\lambda^2\sin \frac{u_{k+1}^{l+1}+u_{k+1}^l+u_k^{l+1}+u_{k}^l}{2}
		\sin \frac{u_{k+1}^{l+1}+u_{k+1}^l-u_k^{l+1}-u_{k}^l}{2}.
		\end{align}
		Then we obtain the  fully discrete gsG equation with $\nu=1$. In addition, from \eqref{Dis-1gsg2}-\eqref{Dis-1gsg5}, we know
		\begin{align}
		&\tilde{J}_k^l=\left(\frac{1}{b}\cos\frac{\tilde{x}_k^{l+1}-\tilde{x}_k^l+2b\lambda}{2}+d\sin\frac{\tilde{x}_k^{l+1}-\tilde{x}_k^l+2b\lambda}{2}\right)^2-\lambda^2\sin^2\frac{u_k^{l+1}+u_{k}^l}{2}=\frac{1}{b^2},\label{dis-1gsG3}\\
		&\tilde{I}_k^l=\left(a\cos\frac{\tilde{x}_{k+1}^l-\tilde{x}_k^l-2a\lambda^{-1}}{2}+d\sin\frac{\tilde{x}_{k+1}^l-\tilde{x}_k^l-2a\lambda^{-1}}{2}\right)^2-\lambda^2\sin^2\frac{u_{k+1}^l-u_k^l}{2}={a^2}.\label{dis-1gsG4}
		\end{align}
		$\tilde{I}_k^l$ and $\tilde{J}_k^l$ are conserved quantities because $a^2$ and $\frac{1}{b^2}$ are constants.
	\end{proof}

\end{document}